\newtheorem{Ass}{Assumption}
\newtheorem{lemma}{Lemma}
\newtheorem{theorem}{Theorem}
\begin{document}

\begin{frontmatter}
\title{Wilks' theorem for semiparametric regressions  with weakly dependent data}
\runtitle{Wilks' theorem for semiparametric regressions}

\begin{aug}
\author[A]{\fnms{Marie} \snm{Du Roy de Chaumaray}\ead[label=e1,mark]{marie.du-roy-de-chaumaray@ensai.fr}},
\author[A]{\fnms{Matthieu} \snm{Marbac}\ead[label=e2,mark]{matthieu.marbac-lourdelle@ensai.fr}}
\and
\author[A]{\fnms{Valentin} \snm{Patilea}\ead[label=e3,mark]{valentin.patilea@ensai.fr}}
\address[A]{Univ. Rennes, Ensai, CNRS, CREST - UMR 9194, F-35000 Rennes, France, \printead{e1,e2,e3}}
\end{aug}

\begin{abstract}
\\ The empirical likelihood inference is extended to a class of semiparametric   models for stationary, weakly dependent series. A partially linear single-index regression is used for the conditional mean of the series given its past, and the present and past values of a vector of covariates. A parametric model for the conditional variance of the series is added to capture further nonlinear effects. 
We propose 
suitable moment equations which characterize the mean and variance model. We derive an empirical log-likelihood ratio which includes nonparametric estimators of several functions, and we show that this ratio behaves asymptotically as if the functions were given. \\
\end{abstract}

\begin{keyword}[class=MSC2020]
\kwd[Primary ]{62M10}
\kwd[; secondary ]{62F03, 62G20}
\end{keyword}

\begin{keyword}
\kwd{Kernel smoothing}
\kwd{$\alpha-$mixing}
\kwd{Nonlinear time series}
\kwd{Nuisance function}
 \kwd{Parametric inference} 
 \kwd{Pivotal statistic}
\end{keyword}

\end{frontmatter}
\section{Introduction}

We aim modeling and doing inference for one-dimensional  time series $(Y_i)$ given a vector-valued time series $(V_i)$ and the past values of $Y_i$ and $V_i$,  $i\in\mathbb{Z}$.  For this purpose we propose flexible semiparametric models for conditional mean and  conditional variance of $Y_i$. Formally, let $(Z_i)$ be a strictly stationary and strongly mixing  sequence of random vectors with $Z_i=(V_i^\top,\varepsilon_i)^\top\in\mathbb{R}^{d_X+d_W}\times\mathbb{R}$ where $V_i=(X_i^\top,W_i^\top)^\top\in\mathbb{R}^{d_X}\times\mathbb{R}^{d_W}$. Let $(\mathcal F_i)$ 
 be its natural filtration. For any positive integer $r$, we denote the $r$ lagged values of $Z_i$ by $Z_{i}^{\{r\}}=(V_{i-1}^\top, Y_{i-1}, \ldots, V_{i-r}^\top, Y_{i-r})^\top$.

 Let us consider the  semiparametric model defined  by
\begin{equation}\label{eq:modelbase}
Y_i = \mu(V_i;\gamma,m) + \varepsilon_i \;\textrm{ with }\; \mu(V_i;\gamma,m)= l(X_i; \gamma_1) + m(W_i^\top \gamma_2),
\end{equation}
where
\begin{equation}\label{eq:modelbase_b}
\mathbb E[\varepsilon_i\mid V_i,\mathcal F_{i-1}]=0,\quad 
\end{equation}
and
\begin{equation}\label{eq:modelbase_c}
  \mathbb E[\varepsilon_i^2\mid V_i,\mathcal F_{i-1}]=\sigma^2(V_i,Z_{i}^{\{r\}};\beta),
\end{equation}
$\gamma=(\gamma_1^\top,\gamma_2^\top)^\top\!$,   $\theta=(\gamma^\top,\beta^\top)^\top$ and $m(\cdot)$ is an infinite dimensional parameter. Thus $\theta$  gathers the finite dimensional parameters, and our interest will focus on this vector,   while $m(\cdot)$ is considered as a nuisance parameter.  The value of $r$,  as well as the real-valued functions $l(\cdot)$ and $\sigma^2(\cdot)$, are given. Moreover, the functions we consider for $\sigma^2(\cdot)$ do not require to know the infinite dimensional parameter $m(\cdot)$. Let $\theta_0$ and $m_0(\cdot)$ denote the true values of the finite and infinite-dimensional parameters of the model, respectively. The vector $V_i$ may include common random variables and/or lagged values of $Y_i$, as well as exogenous covariates. We call a model defined by 
\eqref{eq:modelbase}-\eqref{eq:modelbase_c} a CHPLSIM which stands for \emph{Conditional Heteroscedastic Partially Linear Single-Index Model}. The methodology we will propose in the sequel allows us to replace  \eqref{eq:modelbase_c} by a higher order moment equation, or to add higher order moments to  \eqref{eq:modelbase_c}. For the sake of simplicity we keep \eqref{eq:modelbase_c} and we will only mention such possible extensions in the conclusion section.  

CHPLSIM is related to the model proposed by \citet{Lian_Carroll_2015} in the case of independent observations following the same distribution.
Our model covers a wide class of models for weakly dependent and independent data. First, with $l(X_i; \gamma_1) = X_i^\top \gamma_1$,  CHPLSIM includes  the partially linear single-index model (PLSIM) \citep{Carroll97} in which the errors $\varepsilon_i$ are independent and identically distributed (i.i.d.) variables and $V_i$ are independent covariates. Such semiparametric models were originally used to overcome the curse of dimensionality inherent to nonparametric regression on $W_i$ by making use of a single-index $W_i^\top \gamma_2$. The PLSIM includes the partially linear models with a single variable in the nonparametric part. Our non-i.i.d. framework allows for heteroscedasticity in the errors of PLSIM, with the conditional variance of the errors possibly depending of both the covariates and the lagged errors values.  For instance, it allows  martingale difference errors, as considered by  \citet{chen2008empirical} and \citet{fan2010empirical}. 
  \citet{Xia99} considered a model defined by \eqref{eq:modelbase} for strongly mixing stationary time series, with identity function $l(\cdot)$, $X_i=W_i$ and $W_i$ admitting a density. Their study focuses on the estimation of the parameters in the conditional mean function using kernel smoothing, without investigating the conditional variance, as allows condition \eqref{eq:modelbase_c}. In the same type of model, using local linear smoothing, \citet{XIA20061162} allowed for $X_i$ not necessarily equal to $W_i$ and, at the price of a trimming, relaxed the condition of a density for $W_i$ to a density for the index $W_i^\top\gamma_2$. More recently, using orthogonal series expansions, \citet{Dong2016} extended the model defined by \eqref{eq:modelbase} to the case where $X_i=W_i$  is a multi-dimensional integrated process.

Model \eqref{eq:modelbase}-\eqref{eq:modelbase_b} is also related to and extends a large class of location-scale type models called conditionnal heteroscedastic autoregressive nonlinear (CHARN) models  \citep{hardle1998nonparametric, kanai2010estimating}.   CHARN models include many well-known models 
widely used with application areas  as different as foreign exchange rates \citep{Bossaerts96} or brain and muscular wave analysis \citep{Kato06}. For general nonlinear autoregressive processes, we refer to the book of \citet{Tong90} for the basic definitions as well as numerous applications on real data sets. More generally, nonparametric techniques for nonlinear AR processes can be found in the review of \citet{Hardle_NLAR}.  CHPLSIM allows for a semiparametric specification of the conditional mean and for exogenous covariates. 

We are  interested in inference on the finite dimensional parameter $\theta$  constituted of finite-dimensional parameters from both the conditional mean and the conditional variance functions. When the interest focuses  on the parameters of the conditional mean,  it suffices to consider equations \eqref{eq:modelbase}-\eqref{eq:modelbase_b} with a fully nonparametric conditional variance $\sigma^2(\cdot)$. However, in the time series context, modeling the variance can be important, for instance for forecasting purposes. For our inference purpose, we propose a semiparametric empirical likelihood approach with infinite-dimensional nuisance parameters. Empirical likelihood (EL), introduced by \citet{owen1988empirical,owen2001empirical}, is a general inference approach for models specified by moment conditions. 
Under the assumption of independence between observations, empirical likelihood has been used for inference on finite dimensional parameters  into regression models and unconditional moment equations. See \citet{qin1994Annals}; see also the review of \citet{chen2009}.

Under  i.i.d. data assumption, \citet{wang1999empirical,wang2003empirical} and \citet{lu2009empirical} study the conditions implying that the  empirical likelihood log-ratio (ELR)  still converges to a chi-squared distribution for the partially linear model. Due to the curse of the dimensionality, the performances of the nonparametric estimators decrease dramatically with the number of variables.  \citet{xue2006empirical} and \citet{zhu2006empirical}  show that, if the density of the index is bounded away from zero, the ELR converges to a chi-squared distribution and thus permits parameter testing, for single-index model and PLSIM respectively (see also \citet{ZHU2010850}). 

The aim of this paper is to propose a novel general semiparametric regression framework for EL inference which allows for dependent data. 
Some related  cases have been considered in the literature. For instance,  the ELR with longitudinal data has been considered by \citet{xue2007empirical}, for the partially linear model,  and by \citet{li2010empirical}, for PLSIM. In their framework, the convergence of the ELR is guaranteed by the independence between individuals for which a finite bounded number of repeated observations are available. Empirical likelihood has also been used for specific models in times series (see the review of \citet{nordman2014review}\color{black} ; see also \citet{CHANG2015283}\color{black}). Most of the methods developed in this context are based on a blockwise version of empirical likelihood, first introduced by \citet{Kitamura97}. 
A large amount of generalizations have been proposed in the literature depending on the type of dependency. 
We refer to \citet{nordman2014review} for an overview of those techniques of blocking. However, in such an approach, one has to tune additional 
parameters such as the number, the length or the overlapping of the blocks, which might be a complex task.

Our contribution is the extension of the EL inference approach to the case of CHPLSIM defined by \eqref{eq:modelbase}-\eqref{eq:modelbase_c}, for weakly dependent data. This extension is realized without imposing the density of the index bounded away from zero, as it is usually assumed in the literature in the case of i.i.d. data. See, for instance,  \citet{zhu2006empirical}, \citet{ZHU2010850} and  \citet{Lian_Carroll_2015}. Such a very convenient, though quite stringent, condition implies a bounded support for the index, a restriction which makes practically no sense in a general time series framework. 
To obtain our results, a preliminary crucial step before using EL consists in building a fixed number of suitable  unconditional moment equations equivalent to  conditional moment equations defining the regression model.  \color{black}  By the definition of these unconditional moment equations, our approach will not require a blocking data technique. \color{black} 
Then, we follow the lines of \citet{qin1994Annals}, with the difference of the presence of infinite-dimensional nuisance parameters. We show that the nonparametric estimation of the nuisance parameters does not affect the asymptotics and the ELR still converges to a chi-squared distribution.   The negligibility of the nonparametric estimation effect is obtained under mild conditions on the smoothing parameter. 
\citet{CHANG2015283} studied  the EL inference for unconditional moment equations under strongly mixing conditions, with the number of moment equations allowed to increase with the sample size. Since conditional moment equations models could be approximated by models defined by a large number of unconditional moment equations, in principle, \citet{CHANG2015283} could also consider semiparametric models. However, the practical effectiveness of their approach  remains an uninvestigated issue.

In Section~\ref{sec:CME} we consider the profiling approach for  the nuisance parameter $m(\cdot)$ and the identification issue for the finite-dimensional parameters. Next, we  establish the equivalence between our model equations and suitable unconditional moment estimating equations for a martingale difference sequence in Section~\ref{sec:UCME}. The number of unconditional equations is given by the dimension of the vector of identifiable parameters in the (CH)PLSIM. 
Section~\ref{sec:EL} presents the  ELR  and the Wilks' Theorem in our context.  
Section~\ref{sec:num} illustrates the   methodology  by numerical experiments and an application using daily pollution data inspired by the study of \citet{Lian_Carroll_2015}. 
 Section~\ref{sec:concl} contains some additional discussion. The proofs and mathematical details are presented in Appendix. Some technical details, additional simulation
results and  real data analysis results are collected in online Supplementary Material. 


\section{Conditional moment equations} \label{sec:CME}
\subsection{The model}
Let 
$$g_\mu(Z_i;\gamma,m)=Y_i- \mu(V_i;\gamma,m)  ,$$
with $\mu(\cdot)$ defined in \eqref{eq:modelbase}.
The partially linear single index model (PLSIM) is defined by conditional moment equation
\begin{equation}\label{model:easy}
\mathbb E [g_\mu(Z_i;\gamma,m)\mid V_i, \mathcal F_{i-1}]=0\Longleftrightarrow
\gamma=\gamma_0  \textrm{ and } m = m_0.
\end{equation}
In such case, the conditional variance of the residuals has to be finite but does not necessarily have a parametric form. 

The conditionally heteroscedastic  partially linear single index model (CHPLSIM) is defined by two conditional moment equations. For this case, we assume that the second-order conditional moment of the residuals has a semiparametric form. More precisely, the model is defined by the following conditional moment equations
\begin{equation}\label{model:full}
\left\{\begin{array}{rl}
\mathbb E [g_\mu(Z_i;\gamma,m)\mid V_i, \mathcal F_{i-1}]=0 \\
\mathbb E [g_\sigma(Z_i,Z_{i}^{\{r\}};\theta,m)\mid V_i, \mathcal F_{i-1}]=0
\end{array}\right.
\Longleftrightarrow
\theta=\theta_0 \textrm{ and } m = m_0,
\end{equation}
where 
\begin{equation}\label{g_sig}
g_\sigma(Z_i,Z_{i}^{\{r\}};\theta,m) = g_\mu^2(Z_i;\gamma,m) - \sigma^2(V_i,Z_{i}^{\{r\}};\beta),
\end{equation}
with $\sigma^2 (\cdot)$ defined in \eqref{eq:modelbase_c}.

\subsection{Profiling nuisance parameter} \label{sec:profiling}

The model defined by \eqref{eq:modelbase}-\eqref{eq:modelbase_b} requires a methodology for estimating $\theta$ and $m$, with $m$ being in a function space. A common approach, that avoids a simultaneous search involving an infinite-dimensional parameter, is the profiling \citep{severini1992,liang2010}, which defines
\begin{equation*}
m_\gamma (t) = \mathbb{E}[Y_i-l(X_i;  \gamma_1) \mid W_i^\top \gamma_2=t],\qquad t\in\mathbb{R}. 
\end{equation*}
As usually with such approach, in the following it will be assumed that  
\begin{equation}\label{simeq1}
m_{\gamma_{0}}(W_i^\top \gamma_{0,2})=m_0(W_i^\top \gamma_{0,2}).
\end{equation}    
Hence, one expects that, for each $x,w$, the value $\gamma_0$ realizes  the minimum of 
$$\gamma\mapsto \mathbb{E}[\{ Y_i- l(x_i; \gamma_1)  - m_\gamma(w_i^\top \gamma_{2}) \}^2\mid X_i=x_i,W_i=w_i,\mathcal{F}_{i-1}].$$ 
However, even if $m_\gamma(\cdot)$ is well defined for any $\gamma=(\gamma_1^\top,\gamma_2^\top)^\top\in\Gamma\subset \mathbb R^{d_1}\times \mathbb R^{d_W}$, in general the value $\gamma_0$ could not be the unique parameter value with this minimum property. More precisely, in general the true value of the vector $\gamma_2$ is not identifiable and only its direction could be consistently estimated. The standard remedies to this identifiability issue are detailed in the following.

\subsection{Identifiability of the finite-dimensional  parameters} \label{sec:ident}

Concerning the identification of $\gamma _1 \in  \mathbb R^{d_1}$, a minimal requirement is that as soon as $l(X_i;\gamma_1) = l(X_i;\gamma_1^\prime) $ a.s., then necessarily $\gamma_1=\gamma_1^\prime$. For instance, when $l(X_i;\gamma_1) = X_i^\top \gamma_1$, and thus $d_1=d_X$, then necessarily  $\mathbb{E}(X_iX_i^\top)$  invertible. The nonparametric part $m_\gamma(\cdot)$  induces some more constraints. It could absorb any intercept in the model equation. Thus, in particular, when  $l(X_i;\gamma_1) = X_i^\top \gamma_1$, the vectors  $X_i$ and $W_i$ should not contain constant components. 

There are two common approaches to restrict $\gamma_{2}$ for identification purposes: either fix one component equal to 1 \citep{Ma2013}, or set the norm of $\gamma_2$ equal to 1 and the sign of one of its components  \citep{zhu2006empirical}. Without loss of generality, we choose the first component of $\gamma_2$ to impose the constraints of value or sign. When the value of the first component is fixed,  the parameter $\gamma_2$ could be redefined as $\gamma_2=(1,\widetilde \gamma_2^\top)^\top$ where $\widetilde \gamma_2 \in \mathbb{R}^{d_W -1}$. The Jacobian matrix of this reparametrization of $\gamma_2$ is the $d_W\times (d_W -1)$ matrix 
\begin{equation}\label{Jacobi1}
\mathbf{J}_2 (\gamma_2) = \frac{\partial  \gamma_2}{\partial \widetilde \gamma_2} =  \begin{pmatrix}
    \mathbf{0}_{1\times (d_W-1)} \\
    \mathbf{I}_{d_W-1}
  \end{pmatrix},
\end{equation}
where here $\mathbf{0}_{1\times (d_W-1)}$ denotes the null  $1\times (d_W-1)-$matrix, while  $\mathbf{I}_{d_W-1}$ is the $(d_W-1)\times (d_W-1)$ identity matrix. With the second identification approach mentioned above,  the reparametrization is 
$$
\gamma_2 = \left( \sqrt{1 - \| \widetilde \gamma_2\|^2}, \; \widetilde \gamma_2^{\;\top} \right)^\top,
$$ 
where now 
$\widetilde \gamma_2\subset \{ z \in \mathbb{R}^{d_W -1}: \|z\|\leq 1\}.$ The Jacobian matrix of this reparametrization using the normalization of $\gamma_2$ is the $d_W\times (d_W -1)$ matrix 
\begin{equation}\label{Jacobi2}
\mathbf{J}_2 (\gamma_2) = \frac{\partial  \gamma_2}{\partial \widetilde \gamma_2} =  \begin{pmatrix}
   -  \{1 - \| \widetilde \gamma_2\|^2\}^{-1/2} \; \widetilde \gamma_2^{\;\top}  \\
    \mathbf{I}_{d_W-1}
  \end{pmatrix}.
\end{equation}
Hereafter, when we refer to the true value of the finite-dimensional parameter, we implicitly assume that one of these two approaches for identifying $\gamma_2$ was chosen. 

\section{Unconditional moment estimating equations} \label{sec:UCME}
This section presents unconditional moment equations which permit parameter inference by using empirical likelihood. \color{black}  The way these equations are constructed will have two important consequences: blocking data is unnecessary and the nonparametric estimation of the infinite-dimensional parameter does not break the chi-squared limit of the ELR statistics. \color{black} For ease of explanation, we start by introducing an unconditional moment equation which is equivalent to the conditional moment equation of the PLSIM defined in \eqref{model:easy}. Then, we introduce an unconditional moment equation which is equivalent to the conditional moment equation of the CHPLSIM defined in \eqref{model:full}.

\subsection{Partially linear single-index model} \label{sec:PLSIMmoments}
For the PLSIM, it is quite standard \citep{zhu2006empirical} to consider the following unconditional \color{black}  moment \color{black} equation 
\begin{equation} \label{eq:equivalentmu}
\mathbb{E}[g_\mu(Z_i;\gamma,m_\gamma) \widetilde\nabla_\gamma g_\mu(Z_i;\gamma,m_\gamma)]=0,  
\end{equation}
where  $\gamma=(\gamma_1^\top,\gamma_2^\top)^\top \in  \mathbb R^{d_\gamma}$, $d_\gamma = d_1+d_W$, and 
$$
\widetilde\nabla_\gamma g_\mu(Z_i;\gamma,m_\gamma) = 
\mathbf{J}(\gamma) \nabla_\gamma g_\mu(Z_i;\gamma,m_\gamma) \in \mathbb R^{d_\gamma-1} ,
$$
with  $\mathbf{J}(\gamma) $ the $(d_\gamma-1)\times d_\gamma$ Jacobian matrix of the reparametrization chosen to guarantee the identification of the finite-dimensional  parameter and $\nabla_\gamma$ (resp. $\nabla_{\gamma_1}$) the column matrix-valued  operator of the first order partial derivatives with respect to the components of $\gamma \in \mathbb R^{d_\gamma}$ (resp. $\gamma_1\in  \mathbb R^{d_1}$). 
In our context, 
\begin{multline*}
\nabla_\gamma g_\mu(Z_i;\gamma, m_\gamma) = - \begin{bmatrix}
\nabla_{\gamma_1} l(X_i;\gamma_1)- \mathbb E [\nabla_{\gamma_1} l(X_i;\gamma_1) \mid W_i^\top \gamma_2]\\
m'(W_i^\top\gamma_{2}) \left(W_i - \mathbb E [W_i\mid W_i^\top \gamma_2]\right)\\
\end{bmatrix} \\ \textrm{and} \quad
\mathbf{J}(\gamma) 
=  \begin{pmatrix}
   \mathbf{I}_{d_1} & \mathbf{0}_{d_1\times (d_W-1) }\\
   \mathbf{0}_{ d_W\times d_1} & \mathbf{J}_2 (\gamma_{2})\\
      \end{pmatrix},
\end{multline*}
with $m^\prime(\cdot)$ the derivative of $m(\cdot)$ and $\mathbf{J}_2(\gamma_{2})$ the Jacobian matrix of the parametrization of $\gamma_2$, that is either the matrix defined in  \eqref{Jacobi1} or the one defined in  \eqref{Jacobi2}.

The following lemma proposes new unconditional moment equation  by introducing a positive weight function $\omega(V_i)$ in \eqref{eq:equivalentmu}. 
Showing  the equivalence between the conditional moment equation \eqref{model:easy} and our new unconditional moment equation, we deduce that the latter equation could be used for EL inference. 

\begin{lemma} \label{lem:equivalence}
Let  $\omega(\cdot)$ be a positive function of $V_i=(X_i^\top,W_i^\top)^\top$ and $H_\mu(\gamma)$ be the Hessian matrix of the map $\gamma\mapsto \mathbb E[\mathbb E^2[g_\mu(Z_i;\gamma,m_\gamma)\mid V_i,\mathcal F_{i-1}]$ $\omega(V_i)]$. 	Assume that conditions \eqref{model:easy} and \eqref{simeq1} hold true and  $H_\mu(\gamma)$ is definite positive. Then 
\begin{equation}
  \mathbb{E}[g_\mu(Z_i;\gamma,m_\gamma) \widetilde\nabla_\gamma g_\mu(Z_i;\gamma,m_\gamma)\omega(V_i)]=0 
   \quad \Leftrightarrow \quad  \gamma = \gamma_0. 
\end{equation}
\end{lemma}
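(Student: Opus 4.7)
The ``$\Leftarrow$'' direction is immediate. At $\gamma=\gamma_0$, condition \eqref{simeq1} yields $m_{\gamma_0}=m_0$, so $g_\mu(Z_i;\gamma_0,m_{\gamma_0})=\varepsilon_i$, which has zero conditional mean given $(V_i,\mathcal F_{i-1})$ by \eqref{model:easy}. Inspecting the explicit formula recalled in the excerpt, the factor $\widetilde\nabla_\gamma g_\mu(Z_i;\gamma_0,m_{\gamma_0})\omega(V_i)$ is a measurable function of $V_i$ alone. The tower property then gives the claimed zero unconditional expectation.

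For the ``$\Rightarrow$'' direction, I would re-interpret the unconditional moment equation as the first-order condition for the criterion
$$Q(\gamma)=\mathbb{E}\bigl[\mathbb E^2[g_\mu(Z_i;\gamma,m_\gamma)\mid V_i,\mathcal F_{i-1}]\,\omega(V_i)\bigr].$$
Since $\omega>0$, one has $Q(\gamma)\ge 0$, with $Q(\gamma)=0$ iff the inner conditional expectation is a.s.\ zero. Applied to the pair $(\gamma,m_\gamma)$, \eqref{model:easy} then forces $\gamma=\gamma_0$ and $m_\gamma=m_0$, which is consistent at $\gamma=\gamma_0$ thanks to \eqref{simeq1}. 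Hence $Q(\gamma_0)=0$ and $\gamma_0$ is a global minimizer.

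The key computational step is to identify the gradient of $Q$. Writing $h(V_i,\mathcal F_{i-1};\gamma)=\mathbb E[Y_i\mid V_i,\mathcal F_{i-1}]-l(X_i;\gamma_1)-m_\gamma(W_i^\top\gamma_2)$, one observes that only the last two terms depend on $\gamma$, so that $\nabla_\gamma h$ coincides with $\nabla_\gamma g_\mu(Z_i;\gamma,m_\gamma)$ in its profiled form given in the excerpt, and in particular is $V_i$-measurable. Differentiating $Q(\gamma)=\mathbb E[h^2\omega(V_i)]$ under the expectation sign and then applying the tower property (using that $\nabla_\gamma g_\mu$ and $\omega(V_i)$ are $V_i$-measurable) yields
$$\nabla_\gamma Q(\gamma)=2\,\mathbb{E}\bigl[g_\mu(Z_i;\gamma,m_\gamma)\,\nabla_\gamma g_\mu(Z_i;\gamma,m_\gamma)\,\omega(V_i)\bigr].$$
Premultiplying by the identifiability Jacobian $\mathbf J(\gamma)$ gives the analogous identity for $\widetilde\nabla_\gamma Q$, showing that the unconditional moment equation in the lemma is exactly $\widetilde\nabla_\gamma Q(\gamma)=0$.

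It then remains to conclude that the only solution of this first-order condition is $\gamma_0$. Since $\gamma_0$ is a global minimum of $Q$, it is a critical point; and the hypothesis that $H_\mu(\gamma)$ is positive definite (understood in the reparametrized coordinates in which $\gamma$ is identifiable) ensures strict convexity of the reparametrized criterion, so that $\gamma_0$ is the \emph{unique} critical point. The hard part of the argument is really the gradient computation: one must make sure the implicit dependence of $m_\gamma$ on $\gamma$ is accounted for correctly, and verify that this is the cancellation responsible for the conditional-expectation subtractions appearing in $\nabla_\gamma g_\mu(Z_i;\gamma,m_\gamma)$. Once this identity is in hand, the lemma reduces to the standard ``global minimum plus a positive definite Hessian equals unique critical point'' argument.
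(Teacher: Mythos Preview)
Your proposal is correct and follows essentially the same approach as the paper's own proof: both identify the unconditional moment equation as the first-order condition for the criterion $Q(\gamma)=\mathbb{E}[\mathbb E^2[g_\mu(Z_i;\gamma,m_\gamma)\mid V_i,\mathcal F_{i-1}]\,\omega(V_i)]$, show that $\gamma_0$ is a global minimizer (via \eqref{model:easy} and \eqref{simeq1}), compute the gradient using the $V_i$-measurability of $\nabla_\gamma g_\mu(Z_i;\gamma,m_\gamma)$ together with the tower property, premultiply by $\mathbf J(\gamma)$, and then invoke the positive definite Hessian to conclude that $\gamma_0$ is the unique critical point. The only cosmetic difference is that you phrase the last step as ``strict convexity'' whereas the paper simply states that the Hessian assumption ensures $\gamma_0$ is the only critical point.
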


\quad

For the PLSIM, we consider $\omega(V_i)= \eta^4_{\gamma,f}(W_i^\top\gamma_2)$ where  $\eta_{\gamma,f} (W_i^\top\gamma_2)$ is the density of the index $W_i^\top\gamma_2$, which is assumed to exist. This choice of the weights $\omega(V_i)$ allows to cancel all the 
terms $\eta_{\gamma,f} (W_i^\top\gamma_2)$ appearing in the denominators, and thus to keep them away from zero. Thus, for the control of the small values in the denominators,  it is no longer needed to assume that the density of the index is bounded away from zero.  This assumption, often imposed in the semiparametric literature, is quite unrealistic for bounded vectors $W_i$ and could not even hold when the  $W_i$'s are unbounded. Imposing bounded $W_i$ in a time series framework where $W_i$ could include lagged values of  $Y_i$ would be too restrictive.

Thus, we consider that the parameters are defined by the unconditional moment equations
\begin{equation}
\mathbb{E}[ \Psi(Z_i;\gamma,\eta_{\gamma})]=0 , \label{eq:uncondmomemtfacile}
\end{equation}
where $\Psi(Z_i;\gamma,\eta_{\gamma})=g_\mu(Z_i;\gamma,m_{\gamma}) \tilde\nabla_\gamma g_\mu(Z_i;\gamma,m_{\gamma})\eta^4_{\gamma,f}(W_i^\top\gamma_2)\in\mathbb R^{d_\gamma - 1}$. 
Thus, we have
\begin{multline}\label{def_Phi1}
\Psi(Z_i;\gamma,\eta_{\gamma})=\left(\{Y_i -  l(X_i;\gamma_1)\} \eta_{\gamma,f}(W_i^\top\gamma_2)  - \eta_{\gamma,m}(W_i^\top \gamma_2)\right)\\ \times  \mathbf{J}(\gamma) \begin{bmatrix}
\eta_{\gamma,f}^2(W_i^\top\gamma_2)  \left(\nabla_{\gamma_1} l(X_i;\gamma_1) \eta_{\gamma,f}(W_i^\top\gamma_2)  - \eta_{\gamma,X}(W_i^\top \gamma_2) \right) \\
\eta_{\gamma,m'}(W_i^\top\gamma_2)\left(W_i\eta_{\gamma,f}(W_i^\top\gamma_2)  - \eta_{\gamma,W}(W_i^\top\gamma_2)  \right)
\end{bmatrix},
\end{multline}
where the vector
$\eta_{\gamma}=(\eta_{\gamma,m},\eta_{\gamma,m'},\eta_{\gamma,X},\eta_{\gamma,W},\eta_{\gamma,f})^\top$ groups all the non-parametric elements and, using the stationarity of the process, is given for any $ t\in\mathbb R$ by 
\begin{align*}
\eta_{\gamma,m}(t)&=m_{\gamma}(t)\eta_{\gamma,f}(t)=\mathbb{E}[Y_i-l(X_i;\gamma_1) \mid W_i^\top\gamma_2=t]\eta_{\gamma,f}(t),\\ \eta_{\gamma,m'}(t) &=  \eta_{\gamma,f}^2(t)  \frac{\partial}{\partial t}  m_{\gamma}(t) = \eta_{\gamma,f}^2(t) \frac{\partial}{\partial t} \mathbb{E}[Y_i- l(X_i;\gamma_1) \mid W_i^\top\gamma_2=t], \\
\eta_{\gamma,X}(t)&=\mathbb E[\nabla_{\gamma_1} l(X_i;\gamma_1) \mid W_i^\top \gamma_2 = t]\eta_{\gamma,f}(t),\quad\\ \quad\eta_{\gamma,W}(t)&=\mathbb E[W_i \mid W_i^\top \gamma_2 = t]\eta_{\gamma,f}(t).
\end{align*} 

\subsection{Conditionally heteroscedastic partially linear single-index model}

For the  CHPLSIM we have to construct an unconditional moment equation to take into account the conditional variance condition in \eqref{eq:modelbase_c}. In this case, the finite-dimensional parameters are  $\theta=(\gamma^\top,\beta^\top)^\top \in \mathbb R^{d_\theta}$ with $d_\theta = d_\gamma+d_\beta$. Given the definition \eqref{g_sig}, we have 
$$
 \nabla_\beta g_\sigma(Z_i,Z_{i}^{\{r\}};\theta,m) = - \nabla_\beta \sigma^2(V_i,Z_{i}^{\{r\}};\beta) \in \mathbb R^{d_\beta}.
$$
The following lemma provides the unconditional moment equations for EL inference in CHPLSIM. The proof is similar to the proof of Lemma~\ref{lem:equivalence} and is thus omitted.

\begin{lemma} \label{lem:equivalence2}
Let  $\omega_1(\cdot)$ and $\omega_2(\cdot)$ be positive functions of $V_i$.
\color{black}  Let $H_\mu(\gamma)$ and $H_\sigma(\beta)$ be the Hessian matrices of the maps $\gamma\mapsto \mathbb E[\mathbb E^2[g_\mu(Z_i;\gamma,m_\gamma)\mid V_i,\mathcal F_{i-1}]\omega_1(V_i)]$ and   $\beta\mapsto \mathbb E[\mathbb E^2[g_\sigma(Z_i,Z_{i}^{\{r\}},\theta,m)\mid V_i,\mathcal F_{i-1}]\omega_2(V_i)].$ \color{black} 
Assume that conditions \eqref{model:full} and \eqref{simeq1} hold true and  $H_\mu(\gamma)$  and $H_\sigma(\beta)$ are definite positive. 
Then
\begin{align*}
\left\{ \begin{array}{rl}
\mathbb{E}[g_\mu(Z_i;\gamma,m_\gamma) \widetilde\nabla_\gamma g_\mu(Z_i;\gamma,m_\gamma)\omega_1(V_i)]=0\\
\mathbb{E}[g_\sigma(Z_i,Z_{i}^{\{r\}};\theta,m_\gamma) \nabla_\beta \sigma^2(V_i,Z_{i}^{\{r\}};\beta)\omega_2(V_i)]=0
\end{array}
\right.
\quad \Leftrightarrow \quad \theta = \theta_0.
\end{align*}
\end{lemma}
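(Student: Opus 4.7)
The plan is to reduce everything to (a) the already-established Lemma \ref{lem:equivalence} and (b) a parallel argument for $\beta$ at $\gamma=\gamma_0$. First, observe that the first moment equation in the system is exactly the one treated by Lemma \ref{lem:equivalence} with $\omega_1$ in place of $\omega$; since $H_\mu(\gamma)$ is assumed positive definite, that lemma applies verbatim and gives $\gamma=\gamma_0$, and hence by \eqref{simeq1} also $m_\gamma=m_{\gamma_0}=m_0$. This decoupling works because neither $g_\mu$ nor $\omega_1$ depends on $\beta$, so the $\beta$-component of the system plays no role in pinning down $\gamma$.

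For the forward direction, I would plug in $\theta=\theta_0$ and $m=m_0$ and use iterated expectations. Under \eqref{model:full}, both $\mathbb E[g_\mu\mid V_i,\mathcal F_{i-1}]$ and $\mathbb E[g_\sigma\mid V_i,\mathcal F_{i-1}]$ vanish at $\theta_0$; the factors $\widetilde\nabla_\gamma g_\mu(Z_i;\gamma_0,m_0)\,\omega_1(V_i)$ and $\nabla_\beta\sigma^2(V_i,Z_i^{\{r\}};\beta_0)\,\omega_2(V_i)$ are $\sigma(V_i,\mathcal F_{i-1})$-measurable (since $Z_i^{\{r\}}\in\mathcal F_{i-1}$, and the first factor involves only $V_i$ and conditional-on-index quantities), so both unconditional moments are zero.

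For the reverse direction, having fixed $\gamma=\gamma_0$ and $m=m_0$ via the first equation, the second equation reduces to
\[
\mathbb E\!\left[\bigl(\varepsilon_i^2-\sigma^2(V_i,Z_i^{\{r\}};\beta)\bigr)\,\nabla_\beta\sigma^2(V_i,Z_i^{\{r\}};\beta)\,\omega_2(V_i)\right]=0.
\]
Taking the inner conditional expectation given $(V_i,\mathcal F_{i-1})$ and invoking \eqref{eq:modelbase_c} to replace $\mathbb E[\varepsilon_i^2\mid V_i,\mathcal F_{i-1}]$ by $\sigma^2(V_i,Z_i^{\{r\}};\beta_0)$, this rewrites as
\[
\mathbb E\!\left[\bigl\{\sigma^2(\cdot;\beta_0)-\sigma^2(\cdot;\beta)\bigr\}\,\nabla_\beta\sigma^2(\cdot;\beta)\,\omega_2(V_i)\right]=-\tfrac12\,\nabla_\beta R(\beta),
\]
where $R(\beta)=\mathbb E\bigl[\mathbb E^2[g_\sigma\mid V_i,\mathcal F_{i-1}]\,\omega_2(V_i)\bigr]$ is nonnegative, vanishes at $\beta_0$, and, by assumption, has positive definite Hessian $H_\sigma(\beta)$. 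Strict convexity of $R$ then forces $\beta=\beta_0$, completing the equivalence.

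The main obstacle, and the place where the proof is not purely mechanical, is the identification step for $\beta$: one must recognise the second unconditional moment equation as (a multiple of) the gradient of a quadratic profiled criterion once $\gamma$ has been pinned down at $\gamma_0$, and then convert the resulting first-order condition into a unique root via positive definiteness of $H_\sigma$. All other steps are essentially measurability bookkeeping and iterated-expectation manipulations already performed for $g_\mu$ in Lemma \ref{lem:equivalence}, which is why the authors feel entitled to omit the formal write-up.
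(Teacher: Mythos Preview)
Your proposal is correct and follows essentially the same approach the paper intends: it mirrors the proof of Lemma~\ref{lem:equivalence} for the $\gamma$-equation and then repeats the same ``criterion has gradient equal to the moment, positive definite Hessian gives a unique critical point'' argument for $\beta$ once $\gamma$ is fixed at $\gamma_0$. This is exactly why the authors write that the proof is similar to that of Lemma~\ref{lem:equivalence} and omit it.
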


To cancel all the denominators induced by the non-parametric estimator, we take $\omega_1(V_i )= \eta^4_{\gamma,f}(W_i^\top\gamma_2)$ and $\omega_2(V_i )= \eta^2_{\gamma,f}(W_i^\top\gamma_2)$. 
Thus, we consider that the parameters are defined by the unconditional moment equations
\begin{equation}
\mathbb{E}[ \Psi(Z_i,Z_{i}^{\{r\}};\theta,\eta_{\gamma})]=0 , \label{eq:uncondmomemt}
\end{equation}
where  $\eta_{\gamma}$ is defined as in section \ref{sec:UCME} and $\Psi(Z_i,Z_{i}^{\{r\}};\theta,\eta_{\gamma})\in\mathbb R^{d_\theta-1 }$ with
\begin{equation}\label{def_Phi2}
\Psi(Z_i,Z_{i}^{\{r\}};\theta,\eta_{\gamma})= \begin{pmatrix}
g_\mu(Z_i;\gamma,m_{\gamma}) \widetilde\nabla_\gamma g_\mu(Z_i;\gamma,m_{\gamma})\eta_{\gamma,f}^4(W_i^\top\gamma_2) \\
g_\sigma(Z_i,Z_{i}^{\{r\}};\theta,m_\gamma ) \nabla_\beta \sigma^2(V_i,Z_{i}^{\{r\}};\beta)\eta_{\gamma,f}^2(W_i^\top\gamma_2)
\end{pmatrix}.
\end{equation}

\section{Parameter inference with weakly dependent data}\label{sec:EL}

\subsection{General framework of empirical likelihood} \label{sec:GF}
In the sequel, for EL inference in the CHPLSIM we use condition \eqref{eq:uncondmomemt}, while for EL inference in the PLSIM we use condition \eqref{eq:uncondmomemtfacile}. With a slight abuse of notation, in the sequel we use the notation $\Psi(Z_i,Z_{i}^{\{r\}};\theta,\eta_{\gamma})$, with some given integer $r\geq 0$, for both  PLSIM and CHPLSIM conditions. By definition, the case $r=0$ corresponds to the case where $\Psi(Z_i,Z_{i}^{\{r\}};\theta,\eta_{\gamma})$ does not depend on the lagged values of $Z_i$. This is the case for PLSIM, but this situation could also occur in CHPLSIM.

\color{black}  By construction, we have the following important property in the context of dependent data. 

\begin{lemma} \label{lem:mart_diff}
The estimating function $\Psi(\cdot,\cdot;\cdot,\cdot)$ satisfies the following property~:
\begin{equation}\label{eq:martdiff}
\forall i\neq j \quad \mathbb E \left[ \Psi(Z_i,Z_{i}^{\{r\}};\theta_0,\eta_0)\Psi(Z_j,Z_{j}^{\{r\}};\theta_0,\eta_0)^\top\right ] = 0.
\end{equation}
\end{lemma}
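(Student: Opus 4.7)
The plan is to exhibit $\Psi(Z_i,Z_i^{\{r\}};\theta_0,\eta_0)$ as a martingale difference with respect to the enlarged filtration $\mathcal G_i = \sigma(V_i,\mathcal F_{i-1})$, so that the orthogonality for $i\neq j$ falls out of the tower property.

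First, I would record a measurability step. Reading off the formulas \eqref{def_Phi1} and \eqref{def_Phi2}, everything that multiplies $g_\mu(Z_i;\gamma_0,m_0)$ or $g_\sigma(Z_i,Z_i^{\{r\}};\theta_0,m_0)$ is a deterministic function of $V_i$ alone (for the $\mu$-block: the gradient $\widetilde\nabla_\gamma g_\mu$ evaluated at the true parameters depends only on $V_i$, and each $\eta_{0,\cdot}$ is a function of $W_i^\top\gamma_{0,2}$) or of $(V_i,Z_i^{\{r\}})$ (for the $\sigma$-block: $\nabla_\beta\sigma^2(V_i,Z_i^{\{r\}};\beta_0)$ together with the weight $\eta_{0,f}^2$). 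Since $Z_i^{\{r\}}$ is $\mathcal F_{i-1}$-measurable, all these factors are $\mathcal G_i$-measurable.

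Second, using \eqref{eq:modelbase_b}–\eqref{eq:modelbase_c} and the identification condition \eqref{simeq1} (which gives $m_{\gamma_0}=m_0$), I would conclude
\[
\mathbb E[g_\mu(Z_i;\gamma_0,m_0)\mid\mathcal G_i]=0,\qquad \mathbb E[g_\sigma(Z_i,Z_i^{\{r\}};\theta_0,m_0)\mid\mathcal G_i]=0,
\]
so pulling out the $\mathcal G_i$-measurable factors identified in the first step yields $\mathbb E[\Psi(Z_i,Z_i^{\{r\}};\theta_0,\eta_0)\mid\mathcal G_i]=0$.

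Third, for $i\neq j$, without loss of generality take $i<j$. Then $\Psi(Z_i,Z_i^{\{r\}};\theta_0,\eta_0)$ is $\mathcal F_i$-measurable, hence $\mathcal G_j$-measurable, and by the tower property
\[
\mathbb E\bigl[\Psi(Z_i,Z_i^{\{r\}};\theta_0,\eta_0)\Psi(Z_j,Z_j^{\{r\}};\theta_0,\eta_0)^\top\bigr]
= \mathbb E\bigl[\Psi(Z_i,Z_i^{\{r\}};\theta_0,\eta_0)\,\mathbb E[\Psi(Z_j,Z_j^{\{r\}};\theta_0,\eta_0)\mid\mathcal G_j]^\top\bigr]=0.
\]
The case $i>j$ follows by transposition.

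The only delicate point, and the one I would take care to spell out, is the measurability bookkeeping in the first step: one must notice that at the true parameter values the $\widetilde\nabla_\gamma g_\mu$ term drops its dependence on $\varepsilon_i$ (the $Y_i$ cancels against $m_0(W_i^\top\gamma_{0,2})$ in the centering), so that it is genuinely $V_i$-measurable and can be taken out of the inner conditional expectation. Once this is clear, the rest is just the tower property applied to the conditional moment equations defining the CHPLSIM.
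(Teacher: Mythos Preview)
Your proof is correct and follows exactly the approach indicated in the paper, which states only that the result ``is a direct consequence of the fact that $\mathbb E[\Psi(Z_i,Z_i^{\{r\}};\theta_0,\eta_0)\mid V_i,\mathcal F_{i-1}]=0$.'' You have spelled out this direct consequence carefully via the tower property, which is precisely what is needed.

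One small comment on your final paragraph: the gradient $\widetilde\nabla_\gamma g_\mu(Z_i;\gamma,m_\gamma)$, as given by the paper's explicit formula, depends only on $V_i$ at \emph{every} parameter value $\gamma$, not merely at $\gamma_0$; there is no $Y_i$ or $\varepsilon_i$ appearing in that expression to cancel. (The function $m_\gamma$ and its derivative $m_\gamma'$, while defined via a conditional expectation involving $Y_i$, are deterministic functions of the index once evaluated.) The paper itself notes this in the proof of Lemma~\ref{lem:equivalence}: ``By construction, $\nabla_\gamma g_\mu(Z_i;\gamma,m_\gamma)$ only depends on $V_i$.'' So the measurability step is even simpler than you suggest, and your proof stands without that caveat.
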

This result is a direct consequence of the fact that $ E \left[ \Psi(Z_i,Z_{i}^{\{r\}};\theta_0,\eta_0)\mid V_i, \mathcal{F}_{i-1}\right ] = 0$. As pointed out by a reviewer, 
this property  indicates that, using our estimating function, one can consistently estimate the so-called long-run covariance matrix of the vector-valued sequence $\Psi(Z_1,Z_{1}^{\{r\}};\theta_0,\eta_0),\ldots,\Psi(Z_n,Z_{n}^{\{r\}};\theta_0,\eta_0)$  by the standard sample covariance matrix. Therefore, blocking data is unnecessary in our framework, which is the one of a martingale difference sequence with respect to the filtration $\sigma(V_i, \mathcal{F}_{i-1})$. See also \citet{Kitamura97}, page 2092, and  \citet{CHANG2015283} page 287.

\color{black}

If $\eta_\gamma$ is given, the empirical likelihood, obtained with the unconditional moment conditions we propose for the (CH)PLSIM, is defined by
$$
L(\theta,\eta_\gamma) = \max_{\pi_1,\ldots,\pi_n}\prod_{i=1}^n \pi_i(\theta,\eta_\gamma),
$$
where $\sum_{i=1}^n \pi_i(\theta,\eta_\gamma)\Psi(Z_i,Z_{i}^{\{r\}};\theta,\eta_\gamma)=0$, $\pi_i(\theta,\eta_\gamma) \geq 0$, $\sum_{i=1}^n\pi_i(\theta,\eta_\gamma)=1$. Thus, we have
$$
\pi_i(\theta,\eta_\gamma) = \frac{1}{n} \frac{1}{1 + \lambda(\theta,\eta_\gamma)^\top \Psi(Z_i,Z_{i}^{\{r\}};\theta,\eta_\gamma)},
$$
where $\lambda(\theta,\eta_\gamma)\in\mathbb{R}^{d_1+d_W-1}$ are the Lagrange multipliers which permit to satisfy the empirical counterpart of the restriction \eqref{eq:uncondmomemt}, that is  $$\sum_{i=1}^n \pi_i(\theta,\eta_\gamma)\Psi(Z_i,Z_{i}^{\{r\}};\theta,\eta_\gamma)=0.$$
The empirical log-likelihood ratio is then defined by
$$\ell_n(\theta,\eta_\gamma) = \sum_{i=1}^n \ln (1 + \lambda(\theta,\eta_\gamma)^\top \Psi(Z_i,Z_{i}^{\{r\}};\theta,\eta_\gamma)).$$
As the infinite-dimensional parameter $\eta_\gamma$ is unknown, nonparametric estimation using kernel smoothing is used instead.
Thus, we propose to consider
\begin{equation}\label{eq:ELLR}
\ell_n(\theta,\widehat \eta_\gamma) = \sum_{i=1}^n \ln \left(1 + \lambda(\theta,\widehat \eta_\gamma)^\top \Psi(Z_i,Z_{i}^{\{r\}};\theta,\widehat \eta_\gamma)\right),
\end{equation}
where 
\begin{equation} \label{eq:eta_chap}
\widehat \eta_\gamma = (\widehat \eta_{\gamma,m}, \widehat \eta_{\gamma,m'},\widehat \eta_{\gamma,X},\widehat \eta_{\gamma,W}, \widehat \eta_{\gamma,f})^\top,
\end{equation}
with, for any $t \in \mathbb{R}$,
\begin{align*}
\widehat\eta_{\gamma,f}(t) &= \frac{1}{nh} \sum_{i=1}^n K\left(\frac{W_i^\top \gamma_2 - t}{h}\right),\\
\widehat\eta_{\gamma,m}(t) &= \frac{1}{nh} \sum_{i=1}^n \{Y_i - l(X_i;\gamma_1)\} K\left(\frac{W_i^\top \gamma_2 - t}{h}\right),\\
\widehat\eta_{\gamma,X}(t) &= \frac{1}{nh} \sum_{i=1}^n \nabla_{\gamma_1} l(X_i;\gamma_1) K\left(\frac{W_i^\top \gamma_2 - t}{h}\right),\\
\widehat\eta_{\gamma,W}(t) &= \frac{1}{nh} \sum_{i=1}^n W_i K\left(\frac{W_i^\top \gamma_2 - t}{h}\right),
\end{align*}
and
\begin{multline*}
\widehat \eta_{\gamma,m'}(t) = \frac{1}{nh^2} \left[  \widehat\eta_{\gamma,f}(t)\sum_{i=1}^n \{Y_i - l(X_i;\gamma_1)\} K'\left(\frac{W_i^\top \gamma_2 - t}{h}\right)\right. \\ - \left. \widehat\eta_{\gamma,m}(t) \sum_{i=1}^n K'\left(\frac{W_i^\top \gamma_2 - t}{h}\right)  \right],
\end{multline*}
$K^\prime (\cdot)$ is the derivative of the univariate kernel $K(\cdot)$ and  $h$ is the bandwidth. 

\subsection{Assumptions}
We will consider weakly dependent data which satisfy strong mixing conditions. We refer the reader to the book of \citet{Rio2000} and to the  survey of \citet{Bradley05} for the basic properties as well as the asymptotic behavior of weakly dependent processes. We will focus our attention on $\alpha$-mixing sequences. We use the following measure of dependence between two $\sigma$-fields $\mathcal{A}$ and $\mathcal{B}$:
$$\alpha(\mathcal{A}, \mathcal{B}) = \sup_{A \in \mathcal{A}, B \in \mathcal{B}} \left| \mathbb{P}(A \cap B) - \mathbb{P}(A)\mathbb{P}(B)\right|.$$
We recall that a sequence $(Z_i)_{i\in\mathbb Z}$ is said to be $\alpha$-mixing (or strongly mixing) if $\alpha_m = \sup_{j \in \mathbb{Z}} \alpha ( \mathcal{F}_{-\infty}^j, \mathcal{F}_{j+m}^{\infty})$ goes to zero as $m$ tends to infinity, where for any $-\infty \leq j \leq l \leq \infty$, $\mathcal{F}_j^l = \sigma(Z_i, j \leq i \leq l)$. Let
$$U_i = ( l(X_i;\gamma_{0,1}) ,  \nabla_{\gamma_1} l(X_i;\gamma_{0,1}) ^\top, W_i^\top,\varepsilon_i)^\top.$$

\begin{Ass}\label{ass:ass1}

\begin{enumerate}[label=(\roman*)]
\item \label{ass:process} The process $(Z_i)_{i\in\mathbb Z}$,  $Z_i=(X_i^\top,W_i^\top,\varepsilon_i)^\top\in\mathbb{R}^{d_X}\times\mathbb{R}^{d_W}\times\mathbb{R}$, is strictly stationary and strongly mixing with mixing coefficients $\alpha_m$ satisfying
\begin{equation}\label{eq:def_alpha_ew} 
\alpha_m = O(m^{-\xi}) 
\textrm{ with } 
 \xi > 10 \frac{s}{s-3}
\end{equation}
for some
 $s>6$ such that 
\begin{equation}\label{eq:ex_mom}
\sup_{\|c\|=1} \mathbb E [|U_i^\top c|^{s}]<\infty.
\end{equation}

\item \label{ass:bounds}  The marginal density of the index $\eta_{\gamma_0,f} (\cdot)$ of the index $W_i^\top\gamma_{0,2}$
is such that 
$
\sup_{t\in\mathbb R } \eta_{\gamma_0,f} (t) <\infty $
and
\begin{equation}\label{mom_q}
\sup_{\|c\|=1}
\sup_{t\in\mathbb R} \; \mathbb E  [ |U_i^\top c| \{ |t|+|U_i^\top c|^{s-1} \} \mid W_i^\top\gamma_{0,2}=t]  \eta_{\gamma_0,f} (t) <\infty. 
\end{equation}
Moreover, there is some $j^\star<\infty$ such that, for all $j \geq j^\star$,
$$\sup_{(t,t')\in \mathbb R^2 } \mathbb E [|U_0^\top U_j| \mid W_0^\top\gamma_{0,2}=t, W_j^\top\gamma_{0,2}=t']f_{W_0^\top\gamma_{0,2},W_j^\top\gamma_{0,2}}(t,t')<\infty,
$$
where $f_{W_0^\top\gamma_{0,2},W_j^\top\gamma_{0,2}}(\cdot)$ is the joint density of $ W_0^\top\gamma_{0,2}$ and $W_j^\top\gamma_{0,2}$.
\item \label{ass:funcdiff} The second partial derivatives of $\mathbb E[ \nabla_{\gamma_1} l(X_i;\gamma_1) \mid W_i^\top\gamma_{0,2}=\cdot]$,  $\mathbb E[W_i \mid W_i^\top\gamma_{0,2}=\cdot]\eta_{\gamma_0,f} (\cdot)$ and $\eta_{\gamma_0,f} (\cdot)$, as well as the third derivatives of $m_0(\cdot)$, are uniformly continuous and bounded. Moreover, the first derivative of $m_0(\cdot)$ is bounded, and the vector $\nabla_\beta \sigma^2(V_i,Z_{i}^{\{r\}};\beta_0) $ is also bounded.  

\end{enumerate}
\end{Ass}

 \begin{Ass}\label{variance_s}
The matrix $$\Sigma= \mathbb{E}\left[\Psi(Z_i,Z_{i}^{\{r\}};\theta_0,\eta_0)\Psi(Z_i,Z_{i}^{\{r\}};\theta_0,\eta_0)^\top\right]$$ is positive definite. 
\end{Ass}

\begin{Ass}\label{ass:ass2}
The Hessian matrix $H_\mu(\gamma)$, defined with the weight $\omega_1(V_i )= \eta^4_{\gamma,f}(W_i^\top\gamma_2)$, is  positive definite. 
Moreover, when the model is defined by \eqref{eq:modelbase}-\eqref{eq:modelbase_c}, both the Hessian matrices $H_\mu(\gamma)$ and $H_\sigma(\beta)$  with their corresponding weights  $\omega_1(V_i )= \eta^4_{\gamma,f}(W_i^\top\gamma_2)$ and $\omega_2(V_i )= \eta^2_{\gamma,f}(W_i^\top\gamma_2)$ are positive definite.
\end{Ass}

\begin{Ass}\label{ass:ass3}
The bandwidth $h$ used for the non-parametric part of the estimation is such that   $nh^3/\ln n \to\infty$  and $nh^8\to 0$. The univariate kernel $K$ is symmetric, bounded, integrable,
such that $\int_{\mathbb{R}}t^2 \{| K(t)|+|t K^\prime(t)|\} dt <\infty$ and $\int_{\mathbb{R}}t^2 K(t)dt \neq 0.$ The Fourier Transform of $K$, denoted by $\mathcal F [K]$, satisfies the condition $\sup_{t\in\mathbb R} |t|^{c_K}|\mathcal F [K](t)|<\infty$ for some $c_K>3$. 
 Moreover,  $t\mapsto |t|^{\color{black} s/2\color{black}} \{K(t) + K^\prime (t)\}$ is bounded on $\mathbb{R}$, where $s$ is defined by Assumption \ref{ass:ass1}\ref{ass:process}.  
\end{Ass}

\color{black}

Assumption~\ref{ass:ass1} guarantees suitable rates of uniform convergence for the kernel estimators of the infinite-dimensional parameters gathered in the vector $\eta_\gamma$. More precisely, they imply the conditions used in Theorem 4 of   \citet{hansen2008uniform}, 
with $q=d=1$. We also use the condition on $\xi$ to apply Davydov's inequality and show that the effect of the nonparametric estimation is negligible and does not alter the pivotalness of the empirical log-likelihood ratio statistic. Due to this purpose, some conditions in Assumption~\ref{ass:ass1} are more restrictive than in Theorem 4 of   \citet{hansen2008uniform}. Condition~\eqref{eq:def_alpha_ew}  reveals a link between the existence of some moments of order $s$ and the strength of the dependency given by the coefficient $\xi$. The more moments for $U_i$ exist, the stronger the time dependency can be. In particular, 
if $U_i$ has finite moments of any order, then $s=\infty$ and thus $\xi$ could be larger but arbitrarily close to $10$.
\color{black}  
There is a wide literature on the mixing properties for time series. The most popular technique for proving this property relies on rewriting the process as a Markov chain and showing the geometrically decay of the mixing coefficients $\alpha_m$. For example, ARMA processes were treated in  \citet{Mok88}, while some non-linear time series were investigated by \citet{Mok90},  \citet{tjostheim_1990}, \citet{MasryTjo95},  and more recently by \citet{LuJiang}, \citet{Liebscher05}, \citet{MeitzSai10}. See also the references therein. Another technique has been developed in \citet{FryzRao}. They show mixing properties for time-varying ARCH and ARCH$(\infty)$ processes by computing explicit bounds for the mixing coefficients using the density function of the processes. Their method could possibly be applied in our context to obtain the conditions of Assumption~\ref{ass:ass1}.
\color{black} Assumption~\ref{variance_s} guarantees a non-degenerate limit distribution in the CLT  for the sample mean of the 
$\Psi(Z_i,Z_{i}^{\{r\}};\gamma_0,\eta_0)$'s. 
Assumption~\ref{ass:ass2} is used to prove Lemma~\ref{lem:equivalence} and Lemma~\ref{lem:equivalence2}. 
Concerning the bandwidth conditions,  one could of course use different bandwidths for the different nonparametric estimators involved. For readability and practical simplicity, we propose a same bandwidth $h$. Moreover, Assumption \ref{ass:ass3} allows one to use, for instance the Gaussian kernel.

\subsection{Wilks' Theorem}\label{subsec:wilks}
When the infinite-dimensional parameters $\eta_{\gamma}$ are given and the observations are independent,  Theorem~2 of \citet{qin1994Annals} guarantees  that the empirical log-likelihood ratio (ELR) statistic $ 2\ell_n(\theta_0,\eta_0)$ converges in distribution to a $\mathcal{X}_{d_\theta-1}^2$ as $n\to\infty$ (where $d_\theta$ is the dimension of the model parameters). The following theorem states that, under suitable conditions, the chi-squared limit in law  is preserved for the ELR defined with our moment conditions for the (CH)PLSIM, with dependent data and estimated $\eta_{\gamma}$. Let us define the ELR statistic 
$$
W(\theta_0) =  2\ell_n(\theta_0,\widehat \eta_{\gamma_0}),
$$
where $\ell_n$ and $\widehat \eta_{\gamma_0}$ are respectively given by \eqref{eq:ELLR} and \eqref{eq:eta_chap}. Let $d_\theta=d_\gamma$ for the PLSIM and $d_\theta=d_\gamma + d_\beta$ for the CHPLSIM.  In the following $\xrightarrow{d} $ denotes the convergence in distribution.

\begin{theorem} \label{thm:chi2etaunknown}
Consider that Assumptions~\ref{ass:ass1}, \textcolor{black} {\ref{variance_s}}, \ref{ass:ass2} and \ref{ass:ass3} hold true. 
Moreover, condition \eqref{simeq1} is satisfied, as well as condition \eqref{model:full} in the case of PLSIM or condition \eqref{model:easy} in the case of CHPLSIM. Then, $W(\theta_0)  \xrightarrow{d} \mathcal{X}_{d_\theta-1}^2$ as $n$ tends to infinity.
\end{theorem}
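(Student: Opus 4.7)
The plan is to follow the classical Owen--Qin--Lawless argument, adapted to (i) weakly dependent data and (ii) the plug-in of a kernel-based estimator $\widehat\eta_{\gamma_0}$ of the nuisance vector. The three ingredients I need are: (a) a central limit theorem $\bar S_n := n^{-1/2}\sum_{i=1}^n \Psi(Z_i,Z_i^{\{r\}};\theta_0,\widehat\eta_{\gamma_0}) \xrightarrow{d} \mathcal N(0,\Sigma)$; (b) the law of large numbers $\widehat\Sigma := n^{-1}\sum_i \Psi\Psi^\top \xrightarrow{p}\Sigma$; and (c) the Lagrange-multiplier bound $\|\lambda(\theta_0,\widehat\eta_{\gamma_0})\|=O_p(n^{-1/2})$. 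Once these hold, the first-order condition $\sum_i \Psi/(1+\lambda^\top\Psi)=0$, a uniform bound $\max_i|\lambda^\top\Psi_i|=o_p(1)$, and a second-order Taylor expansion of $\log(1+\cdot)$ will give $W(\theta_0)=\bar S_n^\top\widehat\Sigma^{-1}\bar S_n+o_p(1)$, and Slutsky's lemma delivers the $\chi^2_{d_\theta-1}$ limit, the degrees of freedom being the dimension of $\Psi$.

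At the oracle level $\eta=\eta_0$, Lemma~\ref{lem:mart_diff} makes $(\Psi(Z_i,Z_i^{\{r\}};\theta_0,\eta_0))_i$ a martingale difference sequence with respect to $(\sigma(V_i,\mathcal F_{i-1}))_i$. A martingale CLT under the moment bound \eqref{eq:ex_mom} and the mixing rate \eqref{eq:def_alpha_ew} then yields $n^{-1/2}\sum_i \Psi(Z_i;\theta_0,\eta_0)\xrightarrow{d}\mathcal N(0,\Sigma)$, the long-run variance collapsing to $\Sigma$ precisely because Lemma~\ref{lem:mart_diff} kills all autocovariances. The oracle-level LLN is a routine ergodic-theorem statement for strongly mixing sequences with moments controlled by \eqref{eq:ex_mom}, and $\Sigma$ is non-degenerate by Assumption~\ref{variance_s}. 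The whole proof therefore reduces to controlling the effect of substituting $\widehat\eta_{\gamma_0}$ for $\eta_0$ in both $\bar S_n$ and $\widehat\Sigma$.

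This negligibility is where I expect the main difficulty. The structural feature that makes it tractable is the choice of weights $\omega_1=\eta_{\gamma,f}^4$ and $\omega_2=\eta_{\gamma,f}^2$ in \eqref{def_Phi1}--\eqref{def_Phi2}: every kernel density estimator appears \emph{multiplicatively}, so no trimming and no lower bound on $\eta_{\gamma_0,f}$ are needed, and $\Psi$ is smooth in $\eta$. I would pointwise Taylor-expand $\Psi(Z_i;\theta_0,\widehat\eta)-\Psi(Z_i;\theta_0,\eta_0)$ in the five coordinates of $\eta$, separating $\widehat\eta-\mathbb E\widehat\eta$ from $\mathbb E\widehat\eta-\eta_0$. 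Using the uniform rates $\|\widehat\eta-\eta_0\|_\infty=O_p(h^2+\sqrt{\log n/(nh)})$ from Theorem~4 of \citet{hansen2008uniform} (applicable under Assumptions~\ref{ass:ass1} and \ref{ass:ass3}), the bias contribution to $\bar S_n$ is $O_p(\sqrt n\, h^4)=o_p(1)$ by $nh^8\to 0$, while the quadratic stochastic remainder is $O_p(\sqrt n \log n/(nh))=o_p(1)$ by $nh^3/\log n\to\infty$. The delicate piece is the linear term: writing $\widehat\eta(t)-\mathbb E\widehat\eta(t)$ as a kernel-weighted sum over $j$ recasts $n^{-1/2}\sum_i$ of the linear part as a U-statistic-like object whose leading Hoeffding projection vanishes because $\mathbb E[\Psi(Z_i;\theta_0,\eta_0)\mid V_i,\mathcal F_{i-1}]=0$ annihilates the conditional expectation of the cross-product. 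The non-projection remainder is controlled with Davydov's covariance inequality, and the exponent $\xi>10s/(s-3)$ imposed in \eqref{eq:def_alpha_ew} is exactly calibrated to close this bound. The same decomposition, together with an ergodic theorem, transfers the oracle-level LLN to $\widehat\Sigma$.

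For the Lagrange-multiplier bound (c) I would use Owen's self-normalizing argument: taking the inner product of the first-order equation with $\lambda/\|\lambda\|$ and combining positive definiteness of $\Sigma$ (Assumption~\ref{variance_s}) with the consistency of $\widehat\Sigma$ gives a lower bound of order $\|\lambda\|$, while the CLT for $\bar S_n$ together with $\max_i\|\Psi_i\|=o_p(n^{1/2})$ --- a consequence of the $s$-th moment condition \eqref{eq:ex_mom}, the boundedness stated in Assumption~\ref{ass:ass1}(iii), and the kernel tail control in Assumption~\ref{ass:ass3} --- gives an upper bound of order $n^{-1/2}+\|\lambda\|^2$, yielding $\|\lambda\|=O_p(n^{-1/2})$. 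The same $\max_i\|\Psi_i\|=o_p(n^{1/2})$ also legitimizes the Taylor expansion of $\log(1+\lambda^\top\Psi_i)$ by forcing $\max_i|\lambda^\top\Psi_i|=o_p(1)$. Assembling (a)--(c) in the quadratic expansion of $2\ell_n$ produces the announced $\chi^2_{d_\theta-1}$ limit.
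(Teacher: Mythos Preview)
Your overall architecture matches the paper's: an oracle $\chi^2$ result for $2\ell_n(\theta_0,\eta_0)$ via a mixing CLT and the usual Owen--Qin Lagrange-multiplier calculus, followed by a plug-in negligibility step showing $n^{-1}\sum_i[\Psi(Z_i;\theta_0,\widehat\eta_{\gamma_0})-\Psi(Z_i;\theta_0,\eta_0)]=o_{\mathbb P}(n^{-1/2})$ and the analogous $o_{\mathbb P}(1)$ for the second moment. However, the justification you give for the ``delicate linear term'' is not the right one, and this is where the paper concentrates most of its effort. You say the leading Hoeffding projection of the resulting U-statistic vanishes because $\mathbb E[\Psi(Z_i;\theta_0,\eta_0)\mid V_i,\mathcal F_{i-1}]=0$. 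But the object you must control is the Fr\'echet derivative $D_\eta\Psi$ evaluated at $\eta_0$, not $\Psi$ itself, and the martingale property of $\Psi$ does not transfer to $D_\eta\Psi$. When the $\eta$-derivative hits the factor $g_\mu$ in $\Psi=g_\mu\,\widetilde\nabla_\gamma g_\mu\,\eta_f^4$, the surviving expression no longer carries an $\varepsilon_i$; it carries the gradient factor $\widetilde\nabla_\gamma g_\mu$, and the orthogonality that makes the projection vanish is the \emph{separate} structural identity $\mathbb E[\widetilde\nabla_\gamma g_\mu(Z_i;\gamma_0,m_{\gamma_0})\mid W_i^\top\gamma_{0,2}]=0$. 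The paper singles this out explicitly (see the discussion following Theorem~\ref{thm:chi2etaunknown} and condition~\eqref{0_cond_mean2} in the proof of Lemma~\ref{lemma:A_W}); without it one would need a Neyman-orthogonal correction \`a la Bravo et al., which your sketch does not supply.

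A second, smaller gap: your uniform rate $\|\widehat\eta-\eta_0\|_\infty=O_p(h^2+\sqrt{\log n/(nh)})$ is incorrect for the derivative component $\widehat\eta_{\gamma_0,m'}$, whose rate is $O_p(h^2+\sqrt{\log n/(nh^3)})$. Under Assumption~\ref{ass:ass3} this need not be $o_p(n^{-1/4})$, so your ``quadratic remainder $=O_p(\sqrt n\log n/(nh))$'' bookkeeping breaks down for the cross terms involving $\widehat\eta_{\gamma_0,m'}-\eta_{0,m'}$ (the paper's $A_{W,e}$--$A_{W,g}$); these again require the zero-conditional-mean argument rather than a crude product bound. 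Finally, to make the second-moment computation of the U-statistic rigorous under $\alpha$-mixing, the paper does not rely on a generic degenerate-U-statistic bound but writes $h^{-1}K((\Upsilon_j-\Upsilon_i)/h)$ via the inverse Fourier transform and combines this with Davydov's inequality to separate the indices; your proposal invokes Davydov but does not indicate how to decouple the kernel coupling between $i$ and $j$, which is exactly the role of the Fourier device.
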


For the proof, we use a central limit theorem for mixing processes implying that ${n}^{-1/2} \sum_{i=1}^n \Psi(Z_i,Z_{i}^{\{r\}};\theta_0, \eta_0)$ converges in distribution to a multivariate centered normal distribution, to deal with the dependency between observations. 
Moreover, the behavior of the Lagrange multipliers has to be carefully investigated. 
However, the major difficulty in the proof is to show $\ell_n(\theta_0,\widehat \eta_{\gamma_0})-\ell_n(\theta_0, \eta_{0}) = o_{\mathbb P} (1)$, that is to show that the nonparametric estimation of the nuisance infinite-dimensional parameters does not break the pivotalness of the ELR statistic. \color{black}  This negligibility requirement is a well-known issue, see Remark 2.3 in \citet{hjort2009}. See also 
\citet{chang2013, chang2016, chang2020} for a related discussion in the context of  high-dimension empirical likelihood inference. However, this type of negligibility, obtained under mild technical conditions,  seems to be a new result in the context of semiparametric regression models with weakly dependent data.  \color{black}  It is obtained using arguments  based on Inverse Fourier Transform and Davydov's inequality in Theorem A.6 of  \citep{Hall_Heyde}. \color{black}   It is also worthwhile to notice that, in order to preserve the chi-squared limit for $W(\theta_0)$, we do not need to follow the general two-step procedure proposed by  \citet{bravo2020} and replace $\Psi(\cdot,\cdot;\cdot,\cdot)$ by some   estimated influence function. The reason is given by the gradient $\widetilde\nabla_\gamma g_\mu(Z_i;\gamma,m_\gamma) $ which has the key property $\mathbb E [\widetilde\nabla_\gamma g_\mu(Z_i;\gamma_0,m_{\gamma_0}) \mid W_i^\top \gamma_{0,2}]=0$ a.s.  \color{black}

\section{Numerical experiments} \label{sec:num}
\subsection{Simulations}
We generated data from model \eqref{eq:modelbase}-\eqref{eq:modelbase_c} with $\varepsilon_i=\sigma(V_i,Z_i^{\{r\}};\beta)\zeta_i$ and $$\sigma^2(V_i,Z_i^{\{r\}};\beta)=\beta_1 + \beta_2 Y_{i-1}^2,$$
where the $\zeta_i$ are independently drawn from a distribution such that $\mathbb{E}(\zeta_i)=0$ and $\text{Var}(\zeta_i)=1$. That means, we allow for conditional heteroscedasticity in the mean regression error term.     The covariates $X_i=(Y_{i-1},Y_{i-2})^\top$ are two lagged values of the target variable $Y_i$ and the covariates $W_i=(W_{i1},W_{i2},W_{i3})^\top$ are generated from a multivariate Gaussian distribution with  mean  $W_{i-1}/4$  and covariance \textcolor{black} {matrix $S$ defined by} $\textrm{cov}(W_{ik},W_{i\ell})=0.5^{|k-\ell|}$. \textcolor{black} {Thus, the marginal distribution of the index $W_i^\top \gamma_2$ is a centered Gaussian distribution with variance $(16/15)\gamma_2^\top S \gamma_2$}. We set 
\begin{equation}\label{simu_design1}
\ell(X_i;\gamma_1)=\gamma_{11}Y_{i-1} + \gamma_{12}Y_{i-2} \quad \text{ and }  \quad m(u)= \frac{3}{4}\sin^2(u\pi) ,
\end{equation}
with $\gamma_1=(0.1,0)^\top$, $\gamma_2=(1,1,1)^\top$ and $\beta=(0.9,0.1)^\top$.

Hypothesis testing is based on  Wilks' Theorem   in Section~\ref{subsec:wilks} (results related to this method are named \emph{estim}), along with the unfeasible EL approach  that \textcolor{black} {uses the true density of the index} and that previously learns the nonparametric estimators on a sample of size $10^4$ (this case mimics the situation where $m$,  $m'$ \color{black}  and the conditional expectations involved in the definition of $\eta_\gamma$ \color{black} are known; results related to this method are named \emph{ref}). The nonparametric elements are estimated by the Nadaraya-Watson method with Gaussian kernel and bandwidth $h=C^{-1} n^{-1/5}$ where $C$ is the standard deviation of the index. In the experiments, we consider four sample sizes (100, 500, 2000 and 5000) and three distributions for $\zeta_i$:  a standard Gaussian distribution (\emph{Gaussian}), an uniform distribution on $[-\sqrt{3},\sqrt{3}]$ (\emph{uniform}) and a mixture of Gaussian distributions (\emph{mixture}) $pN(m_1,v_1^2)+(1-p)N(m_2,v_2^2)$, with $p=0.5$, $m_2 = - m_1= 1/\sqrt{6}$, $v_1^2 = 1/6$,  $v_2^2 = 3/2$. For each scenario, we generated 5000 data sets.

First, we want to  test   the order for the lagged values of $Y_i$ in the parametric function $\ell$.   For this purpose,  we use the PLSIM 
 and we consider the following tests: 
\begin{itemize}
\item \emph{Test Lag(1)} which corresponds to the true order equal to 1, 
and which is defined by $H_0: \gamma_1=(0.1,0)^\top$ and $\gamma_2=(1,1,1)^\top$;
\item \emph{Test Lag(0)} which neglects the lagged values of $Y_i$ in the linear part and which is defined by $H_0: \gamma_1=(0,0)^\top$ and $\gamma_2=(1,1,1)^\top$;
\item \emph{Test Lag(2)} which overestimates the order for the lagged values of $Y_i$ and which is defined by $H_0: \gamma_1=(0.1,0.1)^\top$ and $\gamma_2=(1,1,1)^\top$.
\end{itemize}
The empirical probabilities of rejection are presented in Table~\ref{resPLSIM} for a nominal level of $0.05$.   A first, not surprising, conclusion: EL inference in such flexible nonlinear models, with dependent data, requires sufficiently large sample sizes. The results with $n=100$ are quite poor even when $m(\cdot)$ is given, that is in a purely parametric setup.  Next, we notice that for the three distributions of the noise, our EL inference approach allows to identify the correct order for the lagged values when the sample size is sufficiently large. 
Indeed, only \emph{Test Lag(1)} has an asymptotic empirical probability of rejection converging to the nominal level 0.05 while the other tests have a probability of rejection converging to one. Moreover, the differences between the unfeasible EL approach (\emph{ref.} columns) and our approach (\emph{estim.} columns) become quickly negligible. This result was expected because the statistics of both methods converge to the same chi-squared distribution.

\begin{table}
\caption{\label{resPLSIM} Empirical probabilities of rejection obtained   from 5000 replications using the PLSIM  for testing the order for the lagged values of $Y_i$ in the parametric part $\ell(\cdot;\gamma_1)$ in \eqref{simu_design1}.} 
\centering
\color{black} 
\begin{tabular}{*{10}{c}}
\hline
Test & $\zeta_i$ & \multicolumn{2}{c}{$n=100$} & \multicolumn{2}{c}{$n=500$}  & \multicolumn{2}{c}{$n=1000$}  & \multicolumn{2}{c}{$n=2000$}   \\ 
 &  & ref. & estim.& ref. & estim.& ref. & estim.& ref. & estim.\\ 
\hline 
 Lag(1) & Gaussian & 0.167 & 0.214 & 0.066 & 0.075 & 0.054 & 0.054 & 0.055 & 0.055 \\ 
    & uniform & 0.125 & 0.185 & 0.058 & 0.074 & 0.058 & 0.056 & 0.053 & 0.050 \\ 
    & mixture & 0.196 & 0.229 & 0.080 & 0.094 & 0.063 & 0.060 & 0.053 & 0.051 \\ 
   Lag(0) & Gaussian & 0.208 & 0.243 & 0.254 & 0.231 & 0.705 & 0.665 & 0.983 & 0.980 \\ 
    & uniform & 0.160 & 0.204 & 0.215 & 0.207 & 0.742 & 0.718 & 0.991 & 0.990 \\ 
    & mixture & 0.236 & 0.263 & 0.241 & 0.228 & 0.647 & 0.619 & 0.969 & 0.965 \\ 
   Lag(2) & Gaussian & 0.216 & 0.270 & 0.266 & 0.268 & 0.783 & 0.760 & 0.996 & 0.995 \\ 
    & uniform & 0.164 & 0.227 & 0.241 & 0.243 & 0.775 & 0.769 & 0.996 & 0.997 \\ 
   & mixture & 0.263 & 0.301 & 0.308 & 0.299 & 0.773 & 0.725 & 0.993 & 0.990 \\ 
\hline 
\end{tabular} 
\color{black}
\end{table}

We now investigate the order for the lagged values of $Y_i$ in the conditional mean and variance of the noise. Thus, we use the CHPLSIM and we consider the following tests:
\begin{itemize}
\item \emph{Test Lag(1)-CH(1)} which corresponds to the true values of the conditional mean and variance   and which is defined by $H_0: \gamma_1=(0.1,0)^\top$, $\gamma_2=(1,1,1)^\top$ and $\beta=(0.9,0.1)^\top$;
\item \emph{Test Lag(0)-CH(1)} which neglects the lagged values of $Y_i$ in the conditional mean  and which is defined by $H_0: \gamma_1=(0,0)^\top$, $\gamma_2=(1,1,1)^\top$ and $\beta=(0.9,0.1)^\top$;
\item \emph{Test Lag(2)-CH(1)} which overestimates the order of the lagged values of $Y_i$ in the conditional mean and which is defined by $H_0: \gamma_1=(0.1,0.1)^\top$, $\gamma_2=(1,1,1)^\top$ and $\beta=(0.9,0.1)^\top$;
\item \emph{Test Lag(1)-CH(0)} which corresponds to the true value of the conditional mean but neglects the lagged value of $Y_i$ in the conditional variance and which is defined by $H_0: \gamma_1=(0.1,0)^\top$, $\gamma_2=(1,1,1)^\top$ and $\beta=(0.9,0)^\top$.
\end{itemize}

The empirical probabilities of rejection are presented in Table~\ref{resCHPLSIM} for a nominal level of $0.05$. Again, the true order of the lagged values is detected by the procedure and the differences between the unfeasible EL approach  and our approach  become quickly negligible. As expected given that the model is more complex, the rate of convergence to the nominal level is slower than for the tests on the PLSIM. 
  However, our procedure allows the conditional heteroscedasticity of the noise to be detected, and meanwhile it identifies the correct order for the lags of $Y_i$ in the mean equation.  

\begin{table}
\color{black} 
\caption{\label{resCHPLSIM} Empirical probabilities of rejection 
obtained  from 5000 replications using  the
CHPLSIM for testing the order of the lagged values of $Y_i$ in the conditional mean and variance.}
\centering
\begin{tabular}{*{10}{c}}
\hline
Test & $\zeta_i$ & \multicolumn{2}{c}{$n=100$} & \multicolumn{2}{c}{$n=500$}  & \multicolumn{2}{c}{$n=1000$}  & \multicolumn{2}{c}{$n=2000$}   \\ 
 &  & ref. & estim.& ref. & estim.& ref. & estim.& ref. & estim.\\ 
\hline 
 Lag(1) & Gaussian & 0.292 & 0.388 & 0.105 & 0.111 & 0.068 & 0.074 & 0.074 & 0.069 \\ 
  CH(1) & uniform & 0.167 & 0.277 & 0.070 & 0.077 & 0.067 & 0.072 & 0.084 & 0.072 \\ 
    & mixture & 0.392 & 0.461 & 0.151 & 0.170 & 0.090 & 0.098 & 0.079 & 0.078 \\ 
   Lag(0) & Gaussian & 0.331 & 0.406 & 0.260 & 0.249 & 0.669 & 0.641 & 0.978 & 0.972 \\ 
   CH(1) & uniform & 0.197 & 0.291 & 0.198 & 0.190 & 0.684 & 0.675 & 0.986 & 0.983 \\ 
    & mixture & 0.446 & 0.493 & 0.333 & 0.327 & 0.653 & 0.637 & 0.963 & 0.958 \\ 
   Lag(2) & Gaussian & 0.337 & 0.426 & 0.277 & 0.287 & 0.743 & 0.727 & 0.993 & 0.992 \\ 
   CH(1) & uniform & 0.205 & 0.304 & 0.219 & 0.227 & 0.724 & 0.728 & 0.993 & 0.993 \\ 
    & mixture & 0.438 & 0.511 & 0.359 & 0.352 & 0.738 & 0.704 & 0.990 & 0.985 \\ 
   Lag(1) & Gaussian & 0.289 & 0.332 & 0.533 & 0.523 & 0.985 & 0.986 & 1.000 & 1.000 \\ 
   CH(0) & uniform & 0.283 & 0.294 & 0.777 & 0.748 & 1.000 & 1.000 & 1.000 & 1.000 \\ 
    & mixture & 0.343 & 0.392 & 0.489 & 0.499 & 0.970 & 0.970 & 1.000 & 1.000 \\ 
  \hline
\end{tabular} 
\color{black}
\end{table}

\subsection{Real data analysis}
We analyze the data set containing weather (temperature, dew point temperature, relative humidity) and pollution data (PM10 and ozone) for the city of Chicago in the period 1987-2000 from the National Morbidity, Mortality and Air Pollution Study. The analyzed data is freely available in the R package \emph{dlnm} \citep{dlnm}. 
\citet{Lian_Carroll_2015} considered the same data set under the assumption of i.i.d. observations. 

We use the (CH)PLSIM with  a linear function in the parametric part  to predict  daily mean ozone level ($\widetilde{o3_i}$). For this purpose we use previous daily values of mean ozone level and four other predictors, that are the daily relative humidity ($\widetilde{\textit{rhum}_i}$), the daily mean temperature (in Celsius degrees) $\widetilde{\textit{temp}}_i$, the daily dew point temperature $\widetilde{\textit{dptp}}_i$  and the daily PM10-level $\widetilde{\textit{pm10}}_i$. 
 The first step of our analysis was to remove seasonality for each variable we considered in the models. To remove seasonality, we used the function \emph{seasadj} of the R package \emph{forecast} on the data of from  year 1994 to year 1997. Thus, we obtain the series \emph{o3$_i$}, \emph{rhum$_i$}, \emph{temp$_i$}, \emph{dptp$_i$} and \emph{pm10$_i$} by removing the seasonnality of the series $\widetilde{\textit{o3}}_i$, $\widetilde{\textit{rhum}}_i$, $\widetilde{\textit{temp}}_i$, $\widetilde{\textit{dptp}}_i$ and $\widetilde{\textit{pm10}}_i$. Note that the series  \emph{temp$_i$}, \emph{dptp$_i$} and \emph{pm10$_i$}  have been scaled to facilitate the interpretation $\gamma_{12}$. Figures~\ref{fig:ozone}-\ref{fig:pm10} provided in the Section~\ref{supp:appli} of the Supplementary Material present the original series and the series obtained by removing the seasonality. Thus, all the variables we refer hereafter in this section are deseasonalized. In this application the observations clearly have a time dependency. We split the sample into a learning sample (composed of the observations of years 1994 and 1995) and a testing sample (composed of the observations of years 1996 and 1997). After removing the seasonality, the autocorrelations of $o3$  for the learning and testing samples are 0.469 ($p-$value 0.000)  and 0.450 ($p-$value 0.000), respectively; Note that all the covariates have significant autocorrelations (all the $p-$values are 0.000, see Table~\ref{tab:autocorrelations} in the Section~\ref{supp:appli} of the Supplementary Material).

The covariates included in the linear part are the mean relative humidity (\emph{rhum$_i$}) and the mean ozone level computed on the three previous days ($o3_{i-1}$, $o3_{i-2}$, $o3_{i-3}$).  The covariates included in the nonparametric part of the conditional mean are \emph{temp$_i$},  \emph{dptp$_i$} and \emph{pm10$_i$}. The eigenvalues of the covariance matrix computed on the three variables used in the nonparametric part are 1.995, 0.901 and 0.168 for the data of learning sample, and 1.989, 0.758 and 0.139 for the data of testing sample.

Thus, the equation of the PLSIM is 
\begin{multline}\label{m1_app}
o3_i = \gamma_{11}rhum_i + \gamma_{12}o3_{i-1} + \gamma_{13}o3_{i-2} + \gamma_{14}o3_{i-3}\\ + m(\gamma_{21}  temp_i + \gamma_{22}  dptp_i + \gamma_{23}  pm10_i) + \varepsilon_i.
\end{multline}
We estimate the parameters of the models,  on the testing sample,  by minimizing the least squares using kernel smoothing (with Gaussian kernel and bandwidth $n^{-1/5}$). 
Hypothesis testing is conducted on the testing sample. We begin by investigating the order $H$ for the lagged values of the ozone measures to be included in the linear part of the conditional mean. Using PLSIM, we define different models, called $Lag(H)$ (with $H=0,1,2$ or 3), where  only $H$ lagged values of the mean ozone levels are included in the linear part (meaning  the coefficients  related to the other previous days is zero). The results  for  different orders $H$  presented in Table~\ref{resAR} show that the time dependency cannot be neglected for analyzing these data. It is relevant to include lagged values of the mean ozone level variable to build  its daily prediction. 

\begin{table}
\caption{\label{resAR} Estimators of the parameters obtained by the PLSIM, on the learning sample, with different orders of lagged values, and $p-$values obtained by testing these values on the testing sample for the ‘National morbidity and mortality air pollution study’ example.} 
\centering
\begin{tabular}{cccccc}
\hline
& & Lag(0)& Lag(1) & Lag(2) & Lag(3)\\
\hline
$\hat\gamma_1$ &\emph{rhum$_i$} & -0.122 & -0.157 & -0.154 & -0.154 \\
  & $o3_{(i-1)}$ & 0.000 & 0.412 & 0.459 & 0.461 \\  
  & $o3_{(i-2)}$ & 0.000 & 0.000 & -0.102 & -0.116 \\
  & $o3_{(i-3)}$ & 0.000 & 0.000 & 0.000 & 0.025 \\ 
$\hat\gamma_2$ &\emph{temp$_i$}& 0.976 & 0.937 & 0.941 & 0.939 \\ 
 & \emph{dptp$_i$} &  -0.215 & 0.343 & 0.332 & 0.336 \\ 
 & \emph{pm10$_i$} & 0.043 & 0.062 & 0.066 & 0.073 \\ 
\hline
&$p-$value &  0.000 & 0.001 & 0.107 & 0.044 \\ 
\hline 
\end{tabular} 
\end{table}

The autocorrelation of the residuals, obtained with the $Lag(2)$ setup, on the testing sample, has a value of $0.035$ ($p-$value 0.346). This suggests that $H=2$ is a reasonable choice. Figure~\ref{fig:density} and Figure~\ref{fig:m}, given in  Section~\ref{supp:appli} of the Supplementary Material, present the estimated density of the index and the estimated function $\hat m(\cdot)$, obtained with the $Lag(2)$ setup.

We also calculated the autocorrelation of the squared of the residuals, obtained with the $Lag(2)$ setup, and we obtain the value     0.095  ($p-$value 0.010).   This suggests  to also investigate the conditional heteroscedasticity of the noise using the CHPLSIM  with the $Lag(2)$ setup. For the conditional variance equation  we consider 
\begin{equation}\label{ch(1)}
\mathbb E (\varepsilon_i^2 \mid rhum_i, temp_i , dptp_i, pm10_i,\mathcal F_{i-1}) =\beta_1 + \beta_2 \ln\big(\max(o3_{i-1}^2,1)\big). 
\end{equation}

To estimate the parameters of the conditional variance, we use again the learning sample. 
The estimators for the CHPLSIM with conditional variance as in \eqref{ch(1)}  are $\hat\beta_1=1.553$ and $\hat\beta_2=3.786$. If we consider  constant conditional,  we obtain  $\tilde\beta_1=23.816$. The $p-$value obtained by testing the values  $\beta_1=\hat\beta_1$ and $\beta_2 = \hat\beta_2$ in \eqref{ch(1)} on the testing sample is 0.100. Meanwhile, the $p-$value obtained by testing the values $\beta_1=\tilde\beta_1$ and $\beta_2 = 0$ is 0.020. Thus, we conclude to a non constant conditional variance for the error term in \eqref{m1_app}. This effect should be considered to build forecast confidence intervals.

\section{Discussion and conclusion}\label{sec:concl}

We propose EL inference in a semiparametric mean regression model with strongly mixing data. Our model could include an additional condition on the second order conditional moment of the error term. The regression function has a partially linear single-index form,  while for the conditional variance we consider a parametric function. This function could depend on the past values of the observed variables, but it cannot depend directly on the regression error term. A parametric function of the past error terms would break the asymptotic pivotal distribution of the empirical log-likelihood ratio. See \citet{hjort2009} for a description of this common  phenomenon in semiparametric models.

We prove Wilks' Theorem under mild technical conditions, in particular without using any trimming and allowing for unbounded series. To obtain this result, first we rewrite the regression model under the form of a fixed number of suitable unconditional moment conditions. These moment conditions include infinite dimensional nuisance parameters estimated by kernel smoothing. Then, we show that estimating the nuisance parameters does not break the asymptotic pivotality of the empirical log-likelihood ratio which behaves asymptotically as if the nuisance parameters were given. Our theoretical result opens the door of the EL inference approach to new applications in nonlinear time series models. We illustrate our result by several simulation experiments and an application to air pollution where assuming time dependency seems reasonable, a fact confirmed by the data. 

The models proposed in this paper have several straightforward extensions. First, the variable $Y_i$ could be allowed to be measured with some error. For instance, $Y_i$ could be a function of the error term in a parametric model for some time series $(R_i)$, such as an $AR(1)$ model  $R_i = \rho R_{i-1} + u_i$. Taking $Y_i = u_i^2$, \eqref{eq:modelbase} could be used for inference on the conditional variance of $(u_i)$, while \eqref{eq:modelbase_c} could serve to test  the value of the kurtosis. This example that could be of interest for financial series is detailed in the Supplement.   

Another easy extension is to consider more general conditions than \eqref{eq:modelbase_c}. Our theoretical arguments apply with practically no change if  \eqref{eq:modelbase_c} is replaced by one or several conditions like 
$
  \mathbb E[T(\varepsilon_i) \mid V_i,\mathcal F_{i-1}]=\nu (V_i,Z_{i}^{\{r\}};\beta),
$
where the $T(\cdot)$'s are some given twice continuously differentiable functions such that  $\mathbb E[T^\prime (\varepsilon_i) \mid V_i,\mathcal F_{i-1}]=0$ a.s., and $\nu(\cdot,\cdot;\cdot)$ is given parametric function.
 For instance, taking $T(y)= y^4,$ we could include a  fourth order conditional moment equation in the model, provided $\mathbb E[\varepsilon_i^3 \mid V_i,\mathcal F_{i-1}]=0$ a.s. Such higher-order moment condition could replace or could be added to  \eqref{eq:modelbase_c}.

Finally, one might want to consider some partially linear function, with possibly different index, on the right-had side of \eqref{eq:modelbase_c}. \citet{Lian_Carroll_2015} followed a similar idea in the i.i.d. case.  While considering several series $(Y_i)$ and equations like \eqref{eq:modelbase} is a straightforward matter, a semiparametric model for the square of the error term requires some additional effort.
We argue that our methodology could be extended to such cases, however the investigation of this extension is left for future work.


\section{Appendix: proof of Theorem~\ref{thm:chi2etaunknown}}

The proof contains three parts. 
\begin{itemize}
\item In Section~\ref{app:ELeta} we show that, when the nonparametric elements $\eta_0$ are known,  twice the empirical likelihood ratio converges to a chi-squared distribution, that is $
2\ell_n(\theta_0, \eta_0) \xrightarrow{d} \mathcal{X}_{d_\theta - 1}^2,$ provided $\theta_0$ and $\eta_0$ are the true values of the parameters in the model. 
This part follows the lines of Chapter~11 of \cite{owen2001empirical}  but, because of dependencies between observations, the central limit theorem of ${n}^{-1/2} \sum_{i=1}^n \Psi(Z_i,Z_{i}^{\{r\}};\theta_0, \eta_{ 0}) $ and the Lagrange multipliers should be investigated (see Lemmas~\ref{lemma:TCLfi} and \ref{lemma:lambda} in section~\ref{app:ELeta}).

\item Section~\ref{app:etaestim} is the key part of the proof where we investigate the impact of the estimation of $\eta_0$. We show that
$$
\frac{1}{n} \sum_{i=1}^n \left[\Psi(Z_i,Z_{i}^{\{r\}};\theta_0,\widehat\eta_{\gamma_0}) - \Psi(Z_i,Z_{i}^{\{r\}};\theta_0,\eta_0)\right] = o_{\mathbb{P}}(n^{-1/2}),
$$
and
\begin{multline*}
\frac{1}{n} \sum_{i=1}^n \left[\Psi(Z_i,Z_{i}^{\{r\}};\theta_0,\widehat\eta_{\gamma_0})  \Psi(Z_i,Z_{i}^{\{r\}};\theta_0,\widehat\eta_{\gamma_0})^\top  \right.\\ -\left. \Psi(Z_i,Z_{i}^{\{r\}};\theta_0,\eta_0) \Psi(Z_i,Z_{i}^{\{r\}};\theta_0,\eta_0)^\top \right] = o_\mathbb{P}(1).
\end{multline*}
These differences will be decomposed in several terms. For some of them we simply take the norm, use the triangle inequality and the uniform converge rates for nonparametric estimators for dependent data as presented in \citet{hansen2008uniform}. Some other terms will require a more refined treatment. To show that they are negligible, we use more elaborated arguments based on Inverse Fourier Transform and Davydov's inequality.

\item In Section~\ref{sec:rep2}, we conclude the proof by showing that the asymptotic distribution of the ELR  is not impacted by the estimation of $\eta_0$ and thus, if $\theta_0$ is the true value of the finite-dimensional parameter, 
$
2\ell_n(\theta_0, \widehat \eta_{\gamma_0}) \xrightarrow{d} \mathcal{X}_{d_X + d_W - 1}^2.
$
\end{itemize}

\subsection{Empirical likelihood ratio with $\eta_0$ known} \label{app:ELeta}

Before showing the convergence in distribution of the empirical likelihood ratio we need  three technical lemmas. They are mainly used to show that, when $\eta_0$ are known, then the empirical likelihood ratio converges, under $H_0$, to a chi-squared distribution. Proofs are given in the Supplementary Material. 
\begin{lemma} \label{lemma:TCLfi}
Suppose that Assumptions \ref{ass:ass1}\ref{ass:process} and \ref{variance_s} hold true. Then we have the following central limit theorem
$$
\frac{1}{\sqrt{n}} \sum_{i=1}^n \Psi(Z_i,Z_{i}^{\{r\}};\theta_0, \eta_0) \xrightarrow{d} \mathcal{N}(0,  \Sigma).
$$ 
\end{lemma}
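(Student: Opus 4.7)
\textbf{Proof proposal for Lemma~\ref{lemma:TCLfi}.} The plan is to identify $\Psi_i := \Psi(Z_i,Z_{i}^{\{r\}};\theta_0,\eta_0)$ as a stationary, ergodic martingale difference sequence (MDS) with finite second moment of order strictly greater than two and covariance $\Sigma$, and then invoke a standard MDS central limit theorem (e.g.\ Billingsley's CLT, or Theorem~3.2 of \citet{Hall_Heyde}), together with the Cram\'er--Wold device.

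First, since $(Z_i)_{i\in\mathbb Z}$ is strictly stationary by Assumption~\ref{ass:ass1}\ref{ass:process} and $\Psi_i$ is a fixed measurable function of the finite block $(Z_i,Z_{i-1},\ldots,Z_{i-r})$, the sequence $(\Psi_i)$ is itself strictly stationary; moreover it is $\alpha$-mixing with coefficients $\widetilde \alpha_m \leq \alpha_{m-r}$ for $m>r$, hence ergodic. The MDS property is essentially the content of Lemma~\ref{lem:mart_diff}: by construction $\mathbb E[\Psi_i\mid V_i,\mathcal F_{i-1}]=0$ a.s., and since $\Psi_{i-1},\Psi_{i-2},\ldots$ are $\mathcal F_{i-1}$-measurable, the tower property gives $\mathbb E[\Psi_i\mid \Psi_{i-1},\Psi_{i-2},\ldots]=0$. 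Combining this with Lemma~\ref{lem:mart_diff} again, the long-run covariance of $(\Psi_i)$ collapses to $\mathbb E[\Psi_i\Psi_i^\top]=\Sigma$, which is positive definite by Assumption~\ref{variance_s}.

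The technical core is to verify a $(2+\delta)$-moment bound on $\Psi_i$ in order to apply the MDS CLT. Inspecting the explicit formulas \eqref{def_Phi1}--\eqref{def_Phi2}, each coordinate of $\Psi_i$ is a polynomial of bounded degree in the coordinates of $U_i$, multiplied by the factors $\eta_{\gamma_0,f}(W_i^\top\gamma_{0,2})$, $\eta_{\gamma_0,m}(\cdot)$, $\eta_{\gamma_0,m'}(\cdot)$, $\eta_{\gamma_0,X}(\cdot)$, $\eta_{\gamma_0,W}(\cdot)$, and $\nabla_\beta\sigma^2(V_i,Z_i^{\{r\}};\beta_0)$. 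Under Assumptions~\ref{ass:ass1}\ref{ass:bounds} and \ref{ass:ass1}\ref{ass:funcdiff}, all these nuisance factors are uniformly bounded. Combined with the $s$-th moment condition \eqref{eq:ex_mom} on $U_i$ and H\"older's inequality, this yields $\mathbb E[\|\Psi_i\|^q]<\infty$ for some $q>2$ (in fact $q=s/2>3$). Note that the strong mixing rate in \eqref{eq:def_alpha_ew} is not needed here beyond ensuring ergodicity; it is reserved for the more delicate nonparametric approximation step in Section~\ref{app:etaestim}.

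With these three ingredients (stationary ergodic MDS, $\mathbb E[\Psi_i\Psi_i^\top]=\Sigma$ positive definite, and $\mathbb E[\|\Psi_i\|^q]<\infty$ for some $q>2$), the stationary ergodic MDS CLT applied to any linear functional $c^\top\Psi_i$ gives
$n^{-1/2}\sum_{i=1}^n c^\top\Psi_i \xrightarrow{d} \mathcal N(0,c^\top\Sigma c),$
and the Cram\'er--Wold device yields the claimed multivariate CLT. The main (and only mildly delicate) obstacle is the bookkeeping of the moment computation for $\Psi_i$, i.e.\ tracking which factors grow polynomially in $U_i$ and which are uniformly bounded, so as to invoke H\"older's inequality with the correct exponents.
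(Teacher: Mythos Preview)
Your proposal is correct and takes a genuinely different route from the paper. The paper reduces via Cram\'er--Wold to a one-dimensional statement and then applies a CLT for strictly stationary $\alpha$-mixing sequences (Corollary~1.1 of Merlev\`ede, Peligrad and Utev, 2000), which requires both a $(2+\delta)$-moment condition on $\Psi_i$ and the mixing-rate condition $m\,\alpha_m^{\delta/(2+\delta)}\to 0$; it then uses the MDS property only at the very end, to kill the cross-covariances and show that the long-run variance equals $\Sigma$. You instead exploit the MDS structure (Lemma~\ref{lem:mart_diff}) from the outset and invoke a stationary ergodic MDS CLT, so that mixing enters only through ergodicity and no rate on $\alpha_m$ is needed for this lemma. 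Your route is therefore more parsimonious in the assumptions it consumes here, and arguably more natural given that the paper already emphasizes the MDS structure. Two minor remarks: (i) there is no need to descend to the filtration generated by the $\Psi_j$'s---you can work directly with $(\Psi_i,\mathcal F_i)$ since $\mathbb E[\Psi_i\mid\mathcal F_{i-1}]=0$ follows from $\mathbb E[\Psi_i\mid V_i,\mathcal F_{i-1}]=0$ by the tower property; (ii) for Billingsley's stationary ergodic MDS CLT a finite second moment suffices, so the $(2+\delta)$-moment you verify is a bit more than strictly necessary for the CLT itself (though it is of course needed elsewhere, e.g.\ in Lemma~\ref{lemma:max}).
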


\begin{lemma} \label{lemma:max}
Under the assumptions of Lemma~\ref{lemma:TCLfi}, 
$$
\max_{1\leq i \leq n} \|\Psi(Z_i,Z_{i}^{\{r\}};\theta_0,\eta_0)\| = o_\mathbb{P}(n^{1/2}) \quad \text{ and } 
\sum_{i = 1}^n \|\Psi(Z_i,Z_{i}^{\{r\}};\theta_0,\eta_0)\|^3 = o_\mathbb{P}(n^{3/2}).
$$
\end{lemma}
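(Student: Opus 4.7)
\medskip

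\noindent\textbf{Proof proposal for Lemma~\ref{lemma:max}.} The plan is to reduce both claims to a single moment inequality for $\|\Psi(Z_i,Z_{i}^{\{r\}};\theta_0,\eta_0)\|$, namely $\mathbb E \bigl[\|\Psi(Z_i,Z_{i}^{\{r\}};\theta_0,\eta_0)\|^p\bigr]<\infty$ for some $p>2$. Once such a $p$ is available, the maximum bound follows from a union bound and a uniform-integrability truncation, and the cubic-sum bound follows by factoring out the maximum and applying Markov's inequality to the sum of squares. The key input will be Assumption~\ref{ass:ass1}, which gives $s>6$ absolute moments for every direction of $U_i$, combined with the boundedness assumptions in Assumption~\ref{ass:ass1}\ref{ass:bounds}--\ref{ass:funcdiff} on the conditional expectations and the density of the index.

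First I would establish the moment bound with $p=s/2>3$. Inspecting the explicit formulas \eqref{def_Phi1} and \eqref{def_Phi2} for $\Psi$ at $(\theta_0,\eta_0)$, evaluated under the model equations, the scalar factor $\{Y_i-l(X_i;\gamma_{0,1})\}\eta_{\gamma_0,f}-\eta_{\gamma_0,m}$ reduces to $\varepsilon_i\,\eta_{\gamma_0,f}(W_i^\top\gamma_{0,2})$ thanks to \eqref{simeq1}, and the remaining factors are bounded functions of the centred quantities $\nabla_{\gamma_1}l(X_i;\gamma_{0,1})-\mathbb E[\nabla_{\gamma_1}l(X_i;\gamma_{0,1})\mid W_i^\top\gamma_{0,2}]$ and $W_i-\mathbb E[W_i\mid W_i^\top\gamma_{0,2}]$ multiplied by bounded quantities ($\eta_{\gamma_0,f}$, $\eta_{\gamma_0,m'}$, $m'_0$, and the entries of $\mathbf J(\gamma_0)$). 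The variance component likewise is $(\varepsilon_i^2-\sigma^2)\nabla_\beta\sigma^2\,\eta_{\gamma_0,f}^2$, and $\nabla_\beta\sigma^2$ is bounded. Hence
$$
\|\Psi(Z_i,Z_{i}^{\{r\}};\theta_0,\eta_0)\| \leq C\bigl(1+|\varepsilon_i|\cdot\|U_i\|+\varepsilon_i^2\bigr)
$$
for a deterministic constant $C$. By Hölder's inequality with exponents $(s,s)$, $\mathbb E[|\varepsilon_i|^{s/2}\|U_i\|^{s/2}]\leq \mathbb E[|\varepsilon_i|^s]^{1/2}\mathbb E[\|U_i\|^s]^{1/2}$, which is finite by Assumption~\ref{ass:ass1}\ref{ass:process}. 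Combined with $\mathbb E[\varepsilon_i^s]<\infty$, this gives $\mathbb E\bigl[\|\Psi\|^{s/2}\bigr]<\infty$ with $s/2>3$.

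Once this is granted, fix $\delta>0$ and note that stationarity and the union bound yield
$$
\mathbb P\Bigl(\max_{1\leq i\leq n}\|\Psi(Z_i,Z_{i}^{\{r\}};\theta_0,\eta_0)\|>\delta n^{1/2}\Bigr)
\leq n\,\mathbb P\bigl(\|\Psi\|^p>\delta^p n^{p/2}\bigr)
\leq \delta^{-p}\,\mathbb E\bigl[\|\Psi\|^p\mathbf 1\{\|\Psi\|^p>\delta^p n^{p/2}\}\bigr]\cdot n^{1-p/2+1},
$$
where the second inequality uses $cN\,\mathbb P(X>cN)\leq \mathbb E[X\mathbf 1\{X>cN\}]$ with $X=\|\Psi\|^p$ and $cN=\delta^p n^{p/2}$. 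Since $p>2$ we have $n^{2-p/2}\to 0$ (indeed, $p>2$ suffices), and the truncated expectation vanishes by dominated convergence; this gives the first assertion. For the second assertion I factor
$$
\sum_{i=1}^n\|\Psi(Z_i,Z_{i}^{\{r\}};\theta_0,\eta_0)\|^3 \leq \max_{1\leq i\leq n}\|\Psi(Z_i,Z_{i}^{\{r\}};\theta_0,\eta_0)\|\cdot \sum_{i=1}^n\|\Psi(Z_i,Z_{i}^{\{r\}};\theta_0,\eta_0)\|^2.
$$
By stationarity, $\mathbb E\bigl[\sum_{i=1}^n\|\Psi\|^2\bigr]=n\,\mathbb E[\|\Psi\|^2]<\infty$, so Markov's inequality gives the sum of squares equal to $O_{\mathbb P}(n)$. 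Together with the just-proved $\max=o_{\mathbb P}(n^{1/2})$, this delivers $\sum_i\|\Psi\|^3=o_{\mathbb P}(n^{3/2})$.

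The only non-routine step is the moment bound: one must verify that the particular combinations of kernel-free quantities appearing in $\eta_0$ are bounded (using Assumption~\ref{ass:ass1}\ref{ass:bounds}--\ref{ass:funcdiff}) so that $\|\Psi\|$ is controlled by a product of at most two components of $U_i$. Once this algebraic simplification is carried out and the existence of $s>6$ moments is invoked, the remaining probabilistic estimates are entirely standard and require no mixing inequality beyond stationarity. Consequently I expect the proof to occupy at most a page, with the bulk devoted to bookkeeping in the expression for $\Psi$.
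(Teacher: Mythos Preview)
Your strategy is essentially the paper's: establish a finite third (or higher) absolute moment for $\|\Psi\|$, deduce the maximum bound via the union bound and Markov's inequality, and then factor $\sum_i\|\Psi\|^3\le M_n\sum_i\|\Psi\|^2$ with $\sum_i\|\Psi\|^2=O_{\mathbb P}(n)$. Two small corrections are worth making. First, the exponent in your displayed inequality should be $n^{1-p/2}$, not $n^{2-p/2}$: from $cN\,\mathbb P(X>cN)\le\mathbb E[X\mathbf 1\{X>cN\}]$ one obtains $n\,\mathbb P(X>cN)\le n\,\delta^{-p}n^{-p/2}\mathbb E[\cdots]=\delta^{-p}n^{1-p/2}\mathbb E[\cdots]$; with the corrected exponent your claim that ``$p>2$ suffices'' is true, and the truncation is in fact unnecessary. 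The paper simply uses the third moment directly, $\mathbb P(M_n>Cn^{1/2})\le n\,\mathbb E[\|\Psi\|^3]/(Cn^{1/2})^3=C^{-3}n^{-1/2}\mathbb E[\|\Psi\|^3]\to 0$. Second, your pointwise bound $\|\Psi\|\le C(1+|\varepsilon_i|\,\|U_i\|+\varepsilon_i^2)$ is slightly off because the conditional means $\mathbb E[\nabla_{\gamma_1}l\mid\Upsilon_i]$ and $\mathbb E[W_i\mid\Upsilon_i]$ appearing in $\eta_0$ are not assumed bounded; nevertheless, by conditional Jensen they inherit $s$-th moments from $U_i$, so the conclusion $\mathbb E[\|\Psi\|^{s/2}]<\infty$ (and in particular $\mathbb E[\|\Psi\|^3]<\infty$) still follows by Cauchy--Schwarz, which is precisely what the paper invokes.
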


\begin{lemma}\label{lemma:lambda}
Under the assumptions of Lemma~\ref{lemma:TCLfi}, 
$$
\lambda(\theta_0,\eta_0) = S(\theta_0,\eta_0)^{-1}\frac{1}{n} \sum_{i=1}^n  \Psi(Z_i,Z_{i}^{\{r\}};\theta_0,\eta_0) + o_\mathbb{P}(n^{-1/2}),
$$
and $$S(\theta_0,\eta_0)^{-1}\frac{1}{n} \sum_{i=1}^n  \Psi(Z_i,Z_{i}^{\{r\}};\theta_0,\eta_0)= O_{\mathbb{P}}(n^{1/2}),$$
where $S(\theta,\eta)={n}^{-1} \sum_{i=1}^n \Psi(Z_i,Z_{i}^{\{r\}};\theta,\eta_\gamma) \Psi(Z_i,Z_{i}^{\{r\}};\theta,\eta_\gamma)^\top$.
\end{lemma}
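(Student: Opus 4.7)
The plan is to follow the classical template of \citet{qin1994Annals} and Chapter~11 of \citet{owen2001empirical}, with the only modification that the law of large numbers for the sample second-moment matrix $S_n:=S(\theta_0,\eta_0)$ must hold under weak dependence rather than in the i.i.d.\ setting. Write $\Psi_i=\Psi(Z_i,Z_i^{\{r\}};\theta_0,\eta_0)$, $\bar\Psi_n=n^{-1}\sum_{i=1}^n\Psi_i$, and $Z_n^\star=\max_{1\le i\le n}\|\Psi_i\|$. Lemma~\ref{lemma:TCLfi} gives $\bar\Psi_n=O_{\mathbb P}(n^{-1/2})$, Lemma~\ref{lemma:max} gives $Z_n^\star=o_{\mathbb P}(n^{1/2})$, and a Chebyshev/Davydov argument for $\alpha$-mixing sequences combined with the moment bound in Assumption~\ref{ass:ass1} and the mixing decay~\eqref{eq:def_alpha_ew} yields $S_n\xrightarrow{\mathbb P}\Sigma$. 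By Assumption~\ref{variance_s}, $S_n^{-1}$ therefore exists on an event of probability tending to one and is stochastically bounded, and the smallest eigenvalue of $S_n$ is bounded below in probability by a positive constant.

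Next I would bound $\|\lambda\|$, where $\lambda=\lambda(\theta_0,\eta_0)$ is characterized by the stationarity condition
$$\frac{1}{n}\sum_{i=1}^n\frac{\Psi_i}{1+\lambda^\top\Psi_i}=0.$$
Setting $\lambda=\rho u$ with $\|u\|=1$ and $\rho\ge 0$, taking the inner product with $u$, and using the identity $(1+\rho u^\top\Psi_i)^{-1}u^\top\Psi_i = u^\top\Psi_i - \rho(u^\top\Psi_i)^2/(1+\rho u^\top\Psi_i)$ together with the crude bound $1+\rho u^\top\Psi_i\le 1+\rho Z_n^\star$ in the denominator, one gets
$$\rho\, u^\top S_n u \le (1+\rho Z_n^\star)\, u^\top\bar\Psi_n, \qquad \text{hence}\qquad \rho \le \frac{u^\top\bar\Psi_n}{u^\top S_n u - Z_n^\star\, u^\top\bar\Psi_n} = O_{\mathbb P}(n^{-1/2}),$$
because $Z_n^\star |u^\top\bar\Psi_n| = o_{\mathbb P}(n^{1/2})\cdot O_{\mathbb P}(n^{-1/2}) = o_{\mathbb P}(1)$, while $u^\top S_n u$ stays bounded below in probability.

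Finally, expand $(1+x)^{-1}=1-x+x^2/(1+x)$ with $x=\lambda^\top\Psi_i$ in the stationarity condition to rewrite it as
$$\bar\Psi_n - S_n\lambda + \frac{1}{n}\sum_{i=1}^n \Psi_i\,\frac{(\lambda^\top\Psi_i)^2}{1+\lambda^\top\Psi_i} = 0.$$
The remainder is controlled in norm by $\|\lambda\|^2\cdot n^{-1}\sum_{i=1}^n\|\Psi_i\|^3\cdot(1-\|\lambda\|Z_n^\star)^{-1} = O_{\mathbb P}(n^{-1})\cdot o_{\mathbb P}(n^{1/2}) = o_{\mathbb P}(n^{-1/2})$ thanks to Lemma~\ref{lemma:max} and to $\|\lambda\|Z_n^\star=o_{\mathbb P}(1)$. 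Multiplying through by $S_n^{-1}$ yields the first assertion of the lemma, and the second assertion is immediate from $\bar\Psi_n=O_{\mathbb P}(n^{-1/2})$ and $S_n^{-1}=O_{\mathbb P}(1)$. I expect the only non-routine step to be the mixing LLN $S_n\xrightarrow{\mathbb P}\Sigma$, which is the standard obstacle when porting an empirical likelihood argument from the i.i.d.\ to the weakly dependent setting; it is handled by bounding $\mathrm{Var}\bigl(\mathrm{vec}(S_n)\bigr)$ via Davydov's covariance inequality and the moment bound $\mathbb E[\|\Psi_i\|^{2+\delta}]<\infty$ inherited from Assumption~\ref{ass:ass1}.
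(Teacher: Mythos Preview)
Your proposal is correct and follows essentially the same route as the paper's own proof: the same polar decomposition $\lambda=\rho u$, the same manipulation of the stationarity equation via $(1+x)^{-1}=1-x/(1+x)$ to bound $\rho$, and the same second-order expansion $(1+x)^{-1}=1-x+x^2/(1+x)$ combined with Lemma~\ref{lemma:max} to control the remainder. The only cosmetic difference is that the paper obtains $S_n\to\Sigma$ by invoking ergodicity of the strongly mixing sequence (almost sure convergence), whereas you argue via Davydov's covariance inequality (convergence in probability); both suffice here.
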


\quad

By a third order Taylor expansion, 
\begin{multline*}
2\ell_n(\theta_0,\eta_0) = 2 \lambda(\theta_0,\eta_0)^\top \sum_{i=1}^n \Psi(Z_i,Z_{i}^{\{r\}};\theta_0,\eta_0)  \\
- \lambda(\theta_0,\eta_0)^\top \left[\sum_{i=1}^n \Psi(Z_i,Z_{i}^{\{r\}};\theta_0,\eta_0)\Psi(Z_i,Z_{i}^{\{r\}};\theta_0,\eta_0)^\top\right]  \lambda(\theta_0,\eta_0) + R_n,
\end{multline*}
where  $R_n$ is the reminder. 
By Lemma~\ref{lemma:max} and Lemma~\ref{lemma:lambda}, we have 
\begin{multline*}
|R_n|\leq 
 O_{\mathbb{P}}\left(\left\| \lambda(\theta_0,\eta_0)\right\|^3\right) \max\left (1, \max_{1\leq i\leq n} |1+\lambda(\theta_0,\eta_0)^\top 
 \Psi(Z_i,Z_{i}^{\{r\}};\theta_0,\eta_0)|^{-1}\right)\\ \times
 \sum_{i=1}^n \left\|\Psi(Z_i,Z_{i}^{\{r\}};\theta_0,\eta_0)\right\|^3
 = O_{\mathbb{P}}(n^{-3/2})O_{\mathbb{P}}(1)o_{\mathbb{P}}(n^{3/2}) = o_{\mathbb{P}}(1).
\end{multline*}
Thus, 
replacing $\lambda(\theta_0,\eta_0)$ by its definition given in Lemma~\ref{lemma:lambda}, we obtain
\begin{multline*}
2\ell_n(\theta_0,\eta_0) =  \left(\frac{1}{\sqrt{n}} \sum_{i=1}^n  \Psi(Z_i,Z_{i}^{\{r\}};\theta_0,\eta_0) \right)S(\theta_0,\eta_0)^{-1} \\ \times \frac{1}{\sqrt{n}}\sum_{i=1}^n \Psi(Z_i,Z_{i}^{\{r\}};\theta_0,\eta_0)   + o_{\mathbb{P}}(1).
\end{multline*}
Note that, by the ergodicity of the process $(Z_i)_{i\in\mathbb Z}$, we have $S(\theta_0,\eta_0)\rightarrow \Sigma$ almost surely  when $n\rightarrow \infty$. From this and Lemma~\ref{lemma:TCLfi}, we have that $S(\theta_0,\eta_0)^{-1/2} {n}^{-1/2}\times\sum_{i=1}^n \Psi(Z_i,Z_{i}^{\{r\}};\theta_0,\eta_0) $ converges to a standard multivariate normal distribution. Hence $2\ell_n(\theta_0,\eta_0) $ converges in distribution to $\mathcal{X}_{d_\theta - 1}^2$.
 
\subsection{Controlling the effect of  the  nonparametric  estimation}\label{app:etaestim}

In the following, $C$, $C_1$, $C_2$, $c\ldots$ denote constants that may change from line to line. 

\subsubsection{The rate of $\|\widehat \eta_{\gamma_0} -  \eta_0 \|$} 

We aim to apply the uniform convergence result from Theorem~4 of \citet{hansen2008uniform}, for which our assumptions allow to verify the required conditions with $d=q=1$. In particular, we guarantee Hansen's condition  (10), with our $\xi$ playing the role of $\boldsymbol \beta$ defined in Hansen's equation (2), and we have $\boldsymbol \theta>1/3$ with $\boldsymbol \theta$ defined in Hansen's equation (11). Then, by Theorem~4 of \citet{hansen2008uniform} and standard second order Taylor expansion, we have
\begin{equation*}
\sup_{t\in \mathbb R } \left| \widehat\eta_{\gamma_0,m}(t) - \eta_{0,m}(t)\right|=O_\mathbb{P}\left( \left(\frac{\ln n}{nh} \right)^{1/2} \right) + O \left( h^2 \right) = o_\mathbb{P}(n^{-1/4}),
\end{equation*}
and similar uniform rates hold for $\widehat\eta_{\gamma_0,f}(t) $, $\widehat\eta_{\gamma_0,X}(t) $ and $\widehat\eta_{\gamma_0,W}(t) $.  
Thus, 
\begin{equation}
\sup_{x\in\mathbb R^{d_X}}\sup_{ w\in\mathbb R^{d_W}} \left\| \widehat \eta_{\gamma_0}^{(-2)}(w^\top\gamma_{0,2}) -  \eta_0^{(-2)}(w^\top\gamma_{0,2}) \right\| \omega(x,w) = o_\mathbb{P}(n^{-1/4}), \label{eq:boundsnp}
\end{equation}
with $\widehat \eta_{\gamma_0}^{(-2)}(\cdot)$ the sub-vector of $ \widehat \eta_{\gamma_0}  (\cdot) $ defined in \eqref{eq:eta_chap} obtained after removing the second component $\widehat\eta_{\gamma_0,m'}(\cdot) $, and $\eta_0^{(-2)}(\cdot)$ the limit of the sub-vector. 
Meanwhile,
\begin{equation}\label{eq:boundsnp_der}
\sup_{t\in \mathbb R  } \left| \widehat\eta_{\gamma_0,m'}(t) - \eta_{0,m'}(t)\right|=O_\mathbb{P}\left( \left(\frac{\ln n}{nh^3} \right)^{1/2} \right) + O \left( h^2 \right)
\end{equation}
Moreover, since  $\eta_{\gamma_0,f}(\cdot) $ is bounded, by the identity $a^k - b^k = (a-b)(a^{k-1}+a^{k-2}b+\cdots +b^{k-1})$, we also have
\begin{equation}\label{etaa_1}
\widehat\eta^k_{\gamma_0,f}(t) =  \eta^k_{0,f}(t) + O_\mathbb{P}\left( \left(\frac{\ln n}{nh} \right)^{1/2} \right)  +O\left( h^2 \right) =  \eta^k_{0,f}(t) + o_\mathbb{P}(n^{-1/4}), \quad t\in\mathbb R,
\end{equation}
with $o_\mathbb{P}(n^{-1/4})$ rate holding uniformly with respect to $t$. We will use this result with $k=2$ and $k=3$.

\subsubsection{PLSIM: the rate of  ${n} ^{-1}\sum_{i=1}^n [\Psi(Z_i;\gamma_0,\widehat\eta_{\gamma_0}) - \Psi(Z_i;\gamma_0,\eta_0) ]$  }
Let $\Phi(\cdot;\dot,\cdot)$ be defined as in \eqref{def_Phi1}. We want to show that
\begin{equation} \label{eq:toshow1}
\left\|\frac{1}{n} \sum_{i=1}^n \Psi(Z_i;\gamma_0,\widehat\eta_{\gamma_0}) - \Psi(Z_i;\gamma_0,\eta_0)\right\| = o_{\mathbb{P}}(n^{-1/2}).
\end{equation}
Let $\Upsilon_i = W_i^\top \gamma_{0,2}$. 
We have 
\begin{equation}\label{eq:decomp_psi}
\frac{1}{n} \sum_{i=1}^n \Psi(Z_i;\gamma_0,\widehat\eta_{\gamma_0})  = \begin{bmatrix}
A_X  \\
A_W  
\end{bmatrix},
\end{equation}
with
\begin{multline*}A_X=\frac{1}{n} \sum_{i=1}^n \left[\left\{\varepsilon_i + m_0(\Upsilon_i ) \right\} \widehat\eta_{\gamma_0,f}(\Upsilon_i ) - \widehat \eta_{\gamma_0,m}(\Upsilon_i ) \right] \\\times  \left[\widehat\eta_{\gamma_0,f}(\Upsilon_i ) \nabla_{\gamma_1} l(X_i;\gamma_{0,1}) -  \widehat \eta_{\gamma_0,X}(\Upsilon_i ) \right]\widehat\eta^2_{\gamma_0,f}(\Upsilon_i ),\end{multline*}
and 
\begin{multline*}
A_W=\frac{1}{n} \sum_{i=1}^n \left[\left\{\varepsilon_i + m_0(\Upsilon_i ) \right\} \widehat\eta_{\gamma_0,f}(\Upsilon_i ) - \widehat \eta_{\gamma_0,m}(\Upsilon_i ) \right] \\\times  \widehat\eta_{\gamma_0,m'}(\Upsilon_i )   \left[ \widehat\eta_{\gamma_0,f}(\Upsilon_i )W_i - \widehat \eta_{\gamma_0,W}(\Upsilon_i )\right].
\end{multline*}

\begin{lemma} \label{lemma:A_X}
 \begin{equation}\label{eq:AXa}
 A_{X}=\frac{1}{n} \sum_{i=1}^n \varepsilon_i \left[  \nabla_{\gamma_1} l(X_i;\gamma_{0,1}) -  \frac{ \eta_{\gamma_0,X}(\Upsilon_i)}{\eta_{\gamma_0,f}(\Upsilon_i)} \right]\eta^4_{\gamma_0,f}(\Upsilon_i) +o_{\mathbb P} (n^{-1/2}) .
 \end{equation}
\end{lemma}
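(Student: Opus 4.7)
The plan is to expand $A_X$ around $\widehat\eta_{\gamma_0}=\eta_0$ and to control each residual term. Writing $\Delta_\star := \widehat\eta_{\gamma_0,\star}-\eta_{0,\star}$ for $\star\in\{f,m,X\}$, and using the identities $\eta_{0,m}(\Upsilon_i)=m_0(\Upsilon_i)\eta_{0,f}(\Upsilon_i)$ and $\eta_{0,X}(\Upsilon_i)=\mathbb{E}[\nabla_{\gamma_1}l(X_i;\gamma_{0,1})\mid\Upsilon_i]\eta_{0,f}(\Upsilon_i)$, the first bracket inside $A_X$ rewrites as $\varepsilon_i\eta_{0,f}(\Upsilon_i)+\varepsilon_i\Delta_f(\Upsilon_i)+[m_0(\Upsilon_i)\Delta_f(\Upsilon_i)-\Delta_m(\Upsilon_i)]$; the second bracket as $A_i+[\Delta_f(\Upsilon_i)\nabla_{\gamma_1}l(X_i;\gamma_{0,1})-\Delta_X(\Upsilon_i)]$, where $A_i:=\eta_{0,f}(\Upsilon_i)\nabla_{\gamma_1}l(X_i;\gamma_{0,1})-\eta_{0,X}(\Upsilon_i)$ satisfies $\mathbb{E}[A_i\mid\Upsilon_i]=0$; and the third factor as $\eta_{0,f}^2(\Upsilon_i)+2\eta_{0,f}(\Upsilon_i)\Delta_f(\Upsilon_i)+\Delta_f^2(\Upsilon_i)$. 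The product of the "pure $\eta_0$" parts of each factor collapses to $n^{-1}\sum_i\varepsilon_i A_i\eta_{0,f}^3(\Upsilon_i)$, which, after factoring $\eta_{0,f}$ out of $A_i$, coincides exactly with the right-hand side of \eqref{eq:AXa}. What remains is to prove that the sum of all the other terms is $o_\mathbb{P}(n^{-1/2})$.

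I would next classify these residual terms by the number of $\Delta$-factors they contain. Any term with at least two $\Delta$-factors is $o_\mathbb{P}(n^{-1/2})$ by a crude triangle/Cauchy--Schwarz bound: by \eqref{eq:boundsnp}, $\sup_i\|\Delta_\star(\Upsilon_i)\|=o_\mathbb{P}(n^{-1/4})$, so the product of any two such factors is $o_\mathbb{P}(n^{-1/2})$, while the complementary factors ($\varepsilon_i$, $\nabla_{\gamma_1}l_i$, $\eta_{0,f}$, $m_0$) produce sample averages of order $O_\mathbb{P}(1)$ under the moment condition \eqref{eq:ex_mom} and Assumption~\ref{ass:ass1}\ref{ass:funcdiff}. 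The real work lies in the single-$\Delta$ residuals, which split into: (a) three terms carrying an explicit $\varepsilon_i$, namely $n^{-1}\sum_i\varepsilon_i\Delta_f(\Upsilon_i)A_i\eta_{0,f}^2(\Upsilon_i)$, $n^{-1}\sum_i\varepsilon_i\eta_{0,f}^3(\Upsilon_i)\{\Delta_f(\Upsilon_i)\nabla_{\gamma_1}l(X_i;\gamma_{0,1})-\Delta_X(\Upsilon_i)\}$, and $2n^{-1}\sum_i\varepsilon_i\eta_{0,f}^2(\Upsilon_i)A_i\Delta_f(\Upsilon_i)$; and (b) the single term without $\varepsilon_i$, namely $n^{-1}\sum_i[m_0(\Upsilon_i)\Delta_f(\Upsilon_i)-\Delta_m(\Upsilon_i)]A_i\eta_{0,f}^2(\Upsilon_i)$. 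For these terms, the uniform rate $o_\mathbb{P}(n^{-1/4})$ alone is not enough.

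For each single-$\Delta$ residual, I would substitute the kernel definitions of $\widehat\eta_{\gamma_0,\star}$, turning the summand into a double sum in $(i,j)$. The diagonal $j=i$ contributes a term of order $(nh)^{-1}\cdot O_\mathbb{P}(n^{-1/2})=o_\mathbb{P}(n^{-1/2})$ under Assumption~\ref{ass:ass3}. For the off-diagonal part, to decouple the pair $(i,j)$ in the presence of feedback in $(W_i)$, I would use the Fourier representation $K((\Upsilon_j-\Upsilon_i)/h)=(2\pi)^{-1}\int\mathcal{F}[K](t)e^{-it(\Upsilon_j-\Upsilon_i)/h}dt$, which factorizes each term into a product of single sums indexed by $i$ and $j$ and a $t$-integral; the tail condition $\sup|t|^{c_K}|\mathcal{F}[K](t)|<\infty$ of Assumption~\ref{ass:ass3} guarantees the integrability in $t$. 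The smoothing-bias contribution is of order $h^2$ and is $o(n^{-1/2})$ thanks to $nh^8\to 0$. The stochastic part is bounded in $L^2$ by a sum of pairwise covariances of the form $\mathrm{Cov}(\phi_t(Z_i)A_i,\phi_t(Z_j)A_j)$ for family (b) (and analogous quantities involving $\varepsilon_i$ for family (a)), to which Davydov's inequality from Theorem~A.6 of \cite{Hall_Heyde}, combined with the mixing rate \eqref{eq:def_alpha_ew} and the moment conditions \eqref{eq:ex_mom}--\eqref{mom_q}, assigns a geometrically decaying bound. The crucial point is that the centering conditions $\mathbb{E}[\varepsilon_i\mid V_i,\mathcal{F}_{i-1}]=0$ (family (a)) and $\mathbb{E}[A_i\mid\Upsilon_i]=0$ (family (b)) annihilate the $i=j$ piece of these covariance sums, so only off-diagonal contributions remain, and after integrating over $t$ and summing over $(i,j)$ one obtains the required $o_\mathbb{P}(n^{-1/2})$ rate. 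The Fourier--Davydov step on family (b), where no martingale structure is available and only the weak centering $\mathbb{E}[A_i\mid\Upsilon_i]=0$ can be invoked, is the main obstacle and the decisive technical ingredient of the proof.
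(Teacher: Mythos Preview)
Your proposal is correct and follows essentially the same route as the paper: decompose $A_X$ into the leading term $A_{X,0}$ plus remainders, dispatch the terms with two or more $\Delta$-factors via the uniform rate \eqref{eq:boundsnp}, and handle the single-$\Delta$ terms by substituting the kernel expressions, applying the inverse Fourier representation of $K$, and using Davydov's inequality together with the centering conditions $\mathbb{E}[\varepsilon_i\mid V_i,\mathcal{F}_{i-1}]=0$ and $\mathbb{E}[A_i\mid\Upsilon_i]=0$. One refinement worth noting: your description of the stochastic part as ``a sum of pairwise covariances'' understates the combinatorics---squaring the double sum yields a quadruple sum over $(i,j,i',j')$, and the paper partitions this index set by inter-index gaps (parameters $\delta,\nu$) so that Davydov isolates a single zero-mean factor on the well-separated part while the near-diagonal part is controlled by cardinality; the surviving $i=i'$ contribution does not vanish but rather matches the corresponding piece of $R_{2,n}$, and it is this cancellation (not annihilation) that yields $o_{\mathbb{P}}(n^{-1/2})$.
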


\quad

The proof of Lemma \ref{lemma:A_X} is provided in the Supplement. 

\begin{lemma} \label{lemma:A_W}
 \begin{equation}\label{eq:AWa}
A_{W} = \frac{1}{n} \sum_{i=1}^n  \varepsilon_i  m^\prime _0(\Upsilon_i)  \left[ W_i - \frac{\eta_{\gamma_0,W}(\Upsilon_i)}{\eta_{\gamma_0,f}(\Upsilon_i)} \right] \eta^3_{\gamma_0,f}(\Upsilon_i) + o_\mathbb{P}(n^{-1/2}). 
 \end{equation}
\end{lemma}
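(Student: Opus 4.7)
My approach mirrors the proof of Lemma~\ref{lemma:A_X}. Each summand of $A_W$ is a product of three factors, and I write each factor as its population limit plus a stochastic remainder capturing the nonparametric estimation noise. Using the identity $m_0(\Upsilon_i)\eta_{\gamma_0,f}(\Upsilon_i)=\eta_{\gamma_0,m}(\Upsilon_i)$, the first factor becomes
$$
\{\varepsilon_i+m_0(\Upsilon_i)\}\widehat\eta_{\gamma_0,f}(\Upsilon_i)-\widehat\eta_{\gamma_0,m}(\Upsilon_i)=\varepsilon_i\eta_{\gamma_0,f}(\Upsilon_i)+R^i_1,
$$
where $R^i_1=\{\varepsilon_i+m_0(\Upsilon_i)\}[\widehat\eta_{\gamma_0,f}-\eta_{\gamma_0,f}](\Upsilon_i)-[\widehat\eta_{\gamma_0,m}-\eta_{\gamma_0,m}](\Upsilon_i)$. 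Similarly, $\widehat\eta_{\gamma_0,m'}(\Upsilon_i)=\eta_{\gamma_0,m'}(\Upsilon_i)+R^i_2$ with $\eta_{\gamma_0,m'}=m_0'\,\eta_{\gamma_0,f}^2$, and $\widehat\eta_{\gamma_0,f}(\Upsilon_i)W_i-\widehat\eta_{\gamma_0,W}(\Upsilon_i)=\{\eta_{\gamma_0,f}(\Upsilon_i)W_i-\eta_{\gamma_0,W}(\Upsilon_i)\}+R^i_3$ with $R^i_3=W_i[\widehat\eta_{\gamma_0,f}-\eta_{\gamma_0,f}](\Upsilon_i)-[\widehat\eta_{\gamma_0,W}-\eta_{\gamma_0,W}](\Upsilon_i)$. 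Multiplying out, the pure leading term is
$$
\frac{1}{n}\sum_{i=1}^n\varepsilon_i\,m_0'(\Upsilon_i)\,\eta_{\gamma_0,f}^3(\Upsilon_i)\bigl\{\eta_{\gamma_0,f}(\Upsilon_i)W_i-\eta_{\gamma_0,W}(\Upsilon_i)\bigr\},
$$
which matches the right-hand side in the statement of the lemma. It remains to handle the seven cross-terms generated by the expansion.

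Cross-terms containing at least two remainder factors are straightforward. From \eqref{eq:boundsnp} and \eqref{eq:boundsnp_der}, combined with Assumption~\ref{ass:ass3}, one gets $\sup_i\|R^i_k\|=o_\mathbb{P}(n^{-1/4})$ for $k\in\{1,3\}$ (after absorbing the bounded weights of Assumption~\ref{ass:ass1}\ref{ass:funcdiff}), while $\sup_i\|R^i_2\|=O_\mathbb{P}((nh^3/\ln n)^{-1/2})+O(h^2)=o_\mathbb{P}(1)$. The triangle inequality, together with the moment bounds \eqref{eq:ex_mom}--\eqref{mom_q} and the boundedness of $m_0'$ and $\nabla_{\gamma_1}l$, then yields a bound of order at most $o_\mathbb{P}(n^{-1/4})\cdot o_\mathbb{P}(n^{-1/4})\cdot O_\mathbb{P}(1)=o_\mathbb{P}(n^{-1/2})$ for each such term.

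The main obstacle is the treatment of the three cross-terms that are linear in a single remainder $R^i_k$, because a naive pointwise bound by $\sup_i\|R^i_k\|$ delivers only $o_\mathbb{P}(n^{-1/4})$, which is too crude; in particular, $\sup_i\|R^i_2\|$ may exceed $n^{-1/2}$. For these, I substitute the kernel expressions defining the $R^i_k$'s and rewrite each term as a (bias-corrected) double sum $n^{-2}\sum_{i,j}g_{ij}$ over the sample indices. To exploit the weak dependence effectively, I use the Inverse Fourier Transform representation of the kernel $K$, and of $K'$ where needed; this is legitimate by the decay condition $\sup_t|t|^{c_K}|\mathcal{F}[K](t)|<\infty$ of Assumption~\ref{ass:ass3}, and it decouples the outer index $i$ from the inner kernel index $j$ in the exponential factors. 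Davydov's covariance inequality (Theorem~A.6 of \cite{Hall_Heyde}), applied with the $\alpha$-mixing rate \eqref{eq:def_alpha_ew} and the moment condition \eqref{eq:ex_mom}, then yields the required $o_\mathbb{P}(n^{-1/2})$ bound. This mirrors the argument used for Lemma~\ref{lemma:A_X}; the only genuine novelty here is the additional $h^{-2}$ factor carried by $\widehat\eta_{\gamma_0,m'}$, which is precisely absorbed by the strengthened bandwidth condition $nh^3/\ln n\to\infty$, while $nh^8\to 0$ keeps the $O(h^2)$ bias component negligible. Combining all pieces concludes the proof.
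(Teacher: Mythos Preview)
Your overall strategy---write each factor as population limit plus remainder, isolate the leading term, then sort the cross-terms into ``crude'' and ``refined''---matches the paper's approach. However, your partition of the cross-terms contains a genuine gap.

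You claim that every cross-term with at least two remainder factors is $o_\mathbb{P}(n^{-1/4})\cdot o_\mathbb{P}(n^{-1/4})\cdot O_\mathbb{P}(1)=o_\mathbb{P}(n^{-1/2})$ by the triangle inequality. But you yourself note that $\sup_i\|R^i_2\|$ is only $o_\mathbb{P}(1)$, not $o_\mathbb{P}(n^{-1/4})$. Hence the two terms $a_1 R_2 R_3$ and $R_1 R_2 a_3$---each containing $R^i_2$ paired with just \emph{one} fast remainder---yield by your crude bound only $o_\mathbb{P}(n^{-1/4})\cdot o_\mathbb{P}(1)\cdot O_\mathbb{P}(1)=o_\mathbb{P}(n^{-1/4})$, which is insufficient. (A secondary issue: $R^i_1$ and $R^i_3$ carry the unbounded factors $\varepsilon_i$ and $W_i$, so $\sup_i\|R^i_k\|=o_\mathbb{P}(n^{-1/4})$ is not literally true; this is easily repaired by taking the supremum over the nonparametric part only and averaging the random factor, but the $R_2$ problem remains.)

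The paper recognises exactly this difficulty: its terms $A_{W,e}$ and $A_{W,f}$ are your $a_1 R_2 R_3$, and it states explicitly that these ``could not be bounded by simply taking the norm of the sum'' because of the slower rate \eqref{eq:boundsnp_der}. They are treated by the same Fourier/Davydov second-moment argument you reserve for the single-remainder terms, exploiting $\mathbb{E}[\varepsilon_i\mid V_i,\mathcal{F}_{i-1}]=0$ to gain the missing $n^{-1/4}$. So the fix is simple: move the two-remainder terms containing $R^i_2$ (except the triple product $R_1R_2R_3$, which does survive the crude bound) into the ``refined'' group and run the same argument there; but as written, your proof has a hole at precisely this step.
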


\begin{proof}[\color{black}  Proof of \color{black} Lemma \ref{lemma:A_W}]
We rewrite
\begin{multline*}
A_W= \frac{1}{n} \sum_{i=1}^n \left[\left\{\varepsilon_i + m_0(\Upsilon_i ) \right\} \widehat\eta_{\gamma_0,f}(\Upsilon_i ) - \widehat \eta_{\gamma_0,m}(\Upsilon_i ) \right] \\ \times  \widehat\eta_{\gamma_0,m'}(\Upsilon_i )   \left[ \widehat\eta_{\gamma_0,f}(\Upsilon_i )W_i - \widehat \eta_{\gamma_0,W}(\Upsilon_i )\right]\\
= A_{W,0}+ \sum_{l\in\{a,b,\ldots,g\}} A_{W,l}+\sum_{k\in\{1,2,\ldots,6\}} A_{W,k},
\end{multline*}
where
$$
A_{W,0} = \frac{1}{n} \sum_{i=1}^n  \varepsilon_i  m^\prime _0(\Upsilon_i)  \left[ W_i - \frac{\eta_{\gamma_0,W}(\Upsilon_i)}{\eta_{\gamma_0,f}(\Upsilon_i)} \right] \eta^3_{\gamma_0,f}(\Upsilon_i) ,
$$
is the dominating term. The negligible terms could be separated in two groups
$$
A_{W,a} = \frac{1}{n} \sum_{i=1}^n  \varepsilon_i  \eta_{\gamma_0,m'}(\Upsilon_i)  \left[ \widehat\eta_{\gamma_0,f}(\Upsilon_i) W_i -  \eta_{\gamma_0,W}(\Upsilon_i) \right] \widehat\eta_{\gamma_0,f}(\Upsilon_i),
$$
$$
A_{W,b} =  \frac{1}{n}\sum_{i=1}^n   \varepsilon_i  \eta_{\gamma_0,m'}(\Upsilon_i) \left[ \eta_{\gamma_0,W}(\Upsilon_i) -\widehat \eta_{\gamma_0,W}(\Upsilon_i)\right]   \widehat\eta_{\gamma_0,f}(\Upsilon_i),
$$
$$
A_{W,c}= \frac{1}{n} \sum_{i=1}^n  \left[ m_0(\Upsilon_i) \eta_{\gamma_0,f}(\Upsilon_i)- \widehat \eta_{\gamma_0,m}(\Upsilon_i)\right]  \eta_{\gamma_0,m'}(\Upsilon_i) 
\left[\eta_{\gamma_0,f}(\Upsilon_i)  W_i - \widehat  \eta_{\gamma_0,W}(\Upsilon_i)  \right] ,
$$
$$
A_{W,d}= \frac{1}{n} \sum_{i=1}^n m_0(\Upsilon_i)  
\left[  \widehat\eta_{\gamma_0,f}(\Upsilon_i)-  \eta_{\gamma_0,f}(\Upsilon_i)\right] 
\eta_{\gamma_0,m'}(\Upsilon_i)   \left[\eta_{\gamma_0,f}(\Upsilon_i)  W_i  - \widehat  \eta_{\gamma_0,W}(\Upsilon_i)  \right] ,
$$
$$
A_{W,e} = \frac{1}{n}\sum_{i=1}^n  \varepsilon_i \left[\widehat \eta_{\gamma_0,m'}(\Upsilon_i) - \eta_{\gamma_0,m'}(\Upsilon_i)\right] \left[\eta_{\gamma_0,W}(\Upsilon_i) - \widehat \eta_{\gamma_0,W}(\Upsilon_i)\right] 
 \widehat\eta_{\gamma_0,f}(\Upsilon_i),
$$
$$
A_{W,f} = \frac{1}{n}\sum_{i=1}^n  \varepsilon_i \left[ \widehat \eta_{\gamma_0,m'}(\Upsilon_i) -  \eta_{\gamma_0,m'}(\Upsilon_i)  \right] 
\left[\widehat\eta_{\gamma_0,f}(\Upsilon_i)  -   \eta_{\gamma_0,f}(\Upsilon_i) \right]  W_i 
\widehat\eta_{\gamma_0,f}(\Upsilon_i),
$$ 
$$
A_{W,g} = \frac{1}{n}\sum_{i=1}^n  \varepsilon_i \left[ \widehat \eta_{\gamma_0,m'}(\Upsilon_i) -  \eta_{\gamma_0,m'}(\Upsilon_i)  \right] 
\left[\eta_{\gamma_0,f}(\Upsilon_i)  W_i -  \eta_{\gamma_0,W}(\Upsilon_i)  \right] 
 \widehat\eta_{\gamma_0,f}(\Upsilon_i),
$$
and
\begin{multline*}
A_{W,1}= \frac{1}{n} \sum_{i=1}^n  \left[ m_0(\Upsilon_i) \eta_{\gamma_0,f}(\Upsilon_i)- \widehat \eta_{\gamma_0,m}(\Upsilon_i)\right]\\ \times 
\eta_{\gamma_0,m'}(\Upsilon_i) 
\left[\widehat\eta_{\gamma_0,f}(\Upsilon_i)  -   \eta_{\gamma_0,f}(\Upsilon_i) \right]  W_i \widehat\eta_{\gamma_0,f}(\Upsilon_i),
\end{multline*}
\begin{multline*}
A_{W,2}=\frac{1}{n} \sum_{i=1}^n   m_0(\Upsilon_i) \left[ \widehat\eta_{\gamma_0,f}(\Upsilon_i)- \eta_{\gamma_0,f}(\Upsilon_i)\right] \\ \times 
\eta_{\gamma_0,m'}(\Upsilon_i)
\left[\widehat\eta_{\gamma_0,f}(\Upsilon_i)  - \eta_{\gamma_0,f}(\Upsilon_i)  \right] W_i \widehat\eta_{\gamma_0,f}(\Upsilon_i),
\end{multline*}
\begin{multline*}
A_{W,3} = \frac{1}{n}\sum_{i=1}^n 
\left[m_0(\Upsilon_i) \widehat \eta_{\gamma_0,f}(\Upsilon_i)- \widehat \eta_{\gamma_0,m}(\Upsilon_i)\right] \\ \times 
\left[ \widehat \eta_{\gamma_0,m'}(\Upsilon_i) -  \eta_{\gamma_0,m'}(\Upsilon_i)  \right] \left[ \widehat\eta_{\gamma_0,f}(\Upsilon_i)W_i - \eta_{\gamma_0,W}(\Upsilon_i)  \right] ,
\end{multline*}
\begin{multline*}
A_{W,4} = \frac{1}{n}\sum_{i=1}^n  
\left[m_0(\Upsilon_i) \widehat \eta_{\gamma_0,f}(\Upsilon_i)- \widehat \eta_{\gamma_0,m}(\Upsilon_i)\right] \\ \times 
\left[\widehat \eta_{\gamma_0,m'}(\Upsilon_i) - \eta_{\gamma_0,m'}(\Upsilon_i) \right] \left[\eta_{\gamma_0,W}(\Upsilon_i) - \widehat \eta_{\gamma_0,W}(\Upsilon_i)\right] ,
\end{multline*}
\begin{multline*}
A_{W,5} = \frac{1}{n}\sum_{i=1}^n 
\left[  \widehat\eta_{\gamma_0,f}(\Upsilon_i)-  \eta_{\gamma_0,f}(\Upsilon_i)\right]  \\ \times 
\left[ \widehat \eta_{\gamma_0,m'}(\Upsilon_i) -  \eta_{\gamma_0,m'}(\Upsilon_i)  \right] \left[ \widehat\eta_{\gamma_0,f}(\Upsilon_i)W_i - \eta_{\gamma_0,W}(\Upsilon_i)  \right] ,
\end{multline*}
\begin{multline*}
A_{W,6} = \frac{1}{n}\sum_{i=1}^n  
\left[  \widehat\eta_{\gamma_0,f}(\Upsilon_i)-  \eta_{\gamma_0,f}(\Upsilon_i)\right]  \\ \times 
\left[\widehat \eta_{\gamma_0,m'}(\Upsilon_i) - \eta_{\gamma_0,m'}(\Upsilon_i) \right] \left[\eta_{\gamma_0,W}(\Upsilon_i) - \widehat \eta_{\gamma_0,W}(\Upsilon_i)\right] .
\end{multline*}
Taking the norm, by triangle inequality and \eqref{eq:boundsnp}, $ \| A_{W,1}\| +\cdots+ \| A_{W,6}\| = o_\mathbb{P}(n^{-1/2}).$

 Now we investigate $A_{W,b}$, the same arguments will apply to $A_{W,a}$. We have
\begin{multline*}
A_{W,b} = \frac{1}{n} \sum_{i=1}^n \varepsilon_i  \eta_{\gamma_0,m'}(\Upsilon_i) \left[ \widehat\eta_{\gamma_0,W}(\Upsilon_i) - \eta_{\gamma_0,W}(\Upsilon_i)\right]   \widehat\eta_{\gamma_0,f}(\Upsilon_i)\\
=  \frac{1}{n} \sum_{i=1}^n \varepsilon_i  \eta_{\gamma_0,m'}(\Upsilon_i) \left[\widehat\eta^{(-i)}_{\gamma_0,W}(\Upsilon_i) -  \eta_{\gamma_0,W}(\Upsilon_i)\right]   \eta_{\gamma_0,f}(\Upsilon_i)+ r_{W,b}=: A_{W,b}^0 + r_{W,b},
\end{multline*}
with 
$$
\widehat\eta^{(-i)}_{\gamma_0,W}(\Upsilon_i) = \frac{1}{n} \sum_{1\leq j \neq i \leq n}  
    W_j   \frac{1}{h} K\left(\frac{\Upsilon_j - \Upsilon_i}{h}\right) ,
$$ 
and $\|r_{W,b}\|=o_\mathbb{P}(n^{-1/2})$, which is obtained after taking the norm of the sums and using \eqref{eq:boundsnp}. Thus it suffices to show
$
 \| A_{W,b}^0\|   = o_\mathbb{P}(n^{-1/2}).
$
The rate of the norm could be deduced from the same rates of the components. After replacing the expression of $\widehat \eta_{\gamma_0,W}(\Upsilon_i)$  a component of $A_{W,b}^0$ could be written under the form 
$  R_{1,n } - R_{2,n }$ where 
$$
R_{1,n } = \frac{1}{n^2} \sum_{1\leq i \neq j \leq n  }  \lambda(Z_i) \tau (Z_j)\frac{1}{h} K\left(\frac{\Upsilon_j - \Upsilon_i}{h}\right),
$$ and $$
 R_{2,n } =   
 \color{black}  \frac{1}{n} \color{black} \sum_{1\leq i   \leq n  }  \lambda(Z_i) \mathbb E[\tau (Z_i) \mid \Upsilon_i] \eta_{\gamma_0,f}(\Upsilon_i),
$$
with $\lambda(\cdot)$ and $\tau(\cdot)$ some real-valued functions with $\mathbb E [|\lambda (Z_i)|^s + |\tau (Z_j)|^s]<\infty$, and 
\begin{equation}\label{0_cond_mean}
\mathbb E [\lambda(Z_i) \mid \Upsilon_i]=0. 
\end{equation}
Thus our purpose will be to show that
\begin{equation}\label{second_mom}
R_{1,n } - R_{2,n } = o_\mathbb{P}(n^{-1/2}).
\end{equation}
To this end, we will show that
$
\mathbb E \left[( R_{1,n } - R_{2,n })^2 \right] = o_\mathbb{P}(n^{-1}).
$

First, we want to control  $\mathbb E [R^2_{1,n} ] $. Let us note that, applying the Inverse Fourier Transform, 
$$
\frac{1}{h} K\left(\frac{\Upsilon_j - \Upsilon_i}{h}\right)  = \int_{\mathbb R} e^{2\pi \iota t (\Upsilon_j - \Upsilon_i)}\mathcal{F}[K](th) dt,
$$
where $\iota = \sqrt{-1}$ and $\mathcal{F}[K](\cdot) $
is the Fourier Transform of the kernel $K(\cdot)$.  Note that, since $\mathcal{F}[K](\cdot)$ is supposed to be integrable, there exists some constant $C$ such that
$0<\int_{\mathbb R}| \mathcal{F}[K](th) | dt \leq C h^{-1}.$ 
Next, we can write
\begin{multline*}
n^4R^2_{1,n } =   \int_{\mathbb R}   \left[ \sum_{1\leq i \neq j \leq n} \left\{ \lambda(Z_i) e^{-2\pi \iota t \Upsilon_{i}}   \tau (Z_j)  e^{2\pi \iota t \Upsilon_j  }\right\} \right] \mathcal{F}[K](th) dt  \\
\times  \int_{\mathbb R}   \left[ \sum_{1\leq i^\prime \neq j^\prime  \leq n} \left\{ \lambda(Z_{i^\prime}) e^{-2\pi \iota t^\prime  \Upsilon_{i^\prime }}   \tau (Z_{j^\prime}) e^{2\pi \iota t^\prime \Upsilon_{j^\prime} }\right\} \right] \mathcal{F}[K](t^\prime h) dt^\prime . 
\end{multline*}
Thus we have
$$
n^4\mathbb E [R^2_{1,n }] = \int_{\mathbb R} \int_{\mathbb R} \sum_{1\leq i \neq j \leq n} \sum_{1\leq i^\prime \neq j^\prime  \leq n}  \mathbb E \left\{ \Lambda(i,j,i^\prime,j^\prime;t,t^\prime) \right\}  \mathcal{F}[K](th) dt  \mathcal{F}[K](t^\prime h) dt^\prime ,
$$
with 
$$
\Lambda(i,j,i^\prime,j^\prime;t,t^\prime) = \lambda(Z_i) e^{-2\pi \iota t \Upsilon_{i}}   \tau (Z_j)  e^{2\pi \iota t \Upsilon_j  }\lambda(Z_{i^\prime}) e^{-2\pi \iota t^\prime  \Upsilon_{i^\prime }}   \tau (Z_{j^\prime}) e^{2\pi \iota t^\prime \Upsilon_{j^\prime} }.
$$

For $1\leq m\leq n$, let $\mathcal I(i;m) = \{k:1\leq k\leq n, |k-i|< m\}$, the set of indices from 1 to $n$ in the $m-$neighborhood of $i$, and $\mathcal I^c(i;m) = \{k:1\leq k\leq n, k\not\in \mathcal I(i;m)\}$. Let $0 < \delta<1$ be a real number that will be specified below. Note that the set 
$$
\mathcal I = \bigcup_{1\leq i\neq i^\prime \leq n} \left\{ (i,j,i^\prime,j^\prime): \{j,j^\prime\}\subset \mathcal I(i;n^\delta/2) \cup \mathcal I(i^\prime;n^\delta/2) \right\} ,
$$
is a set of cardinality of order   $n^{2\delta}n^2$. First, consider the case 
$$
|i-i^\prime|\geq n^{\delta}. 
$$ 
In this case, by \color{black}  \eqref{0_cond_mean}, \color{black} \eqref{eq:def_alpha_ew} and Davydov's inequality from Theorem A.6 of  \citep{Hall_Heyde} with $p, q>1$ that will be determined in the sequel, we have, for any $t,t^\prime\in\mathbb R$ and for any $|i-i^\prime|\geq n^\delta$ and any $\{j,j^\prime\}\not\subset \mathcal I(i;n^\delta/2) \cup \mathcal I(i^\prime;n^\delta/2)$
\begin{equation}\label{LL3}
 \left| \mathbb E \left\{ \Lambda(i,j,i^\prime,j^\prime;t,t^\prime) \right\} \right| \leq C n^{-\xi \delta/p}, 
\end{equation}
for some constant $C>0$ independent of $i,i^\prime, j,j^\prime, t$ and $t^\prime$. Indeed, if at least one of the indices $j$ and $j^\prime$ is not in the $n^\delta/2-$neighborhood of $i$ or $i^\prime$, then we have $\max\{\min(|i-j|, |i-j^\prime|), \min(|i^\prime -j|, |i^\prime-j^\prime|)\}\geq n^\delta/2.$ This means we could isolate one of the  indices $i$ and $i^\prime$ by a  $n^\delta /2-$neighborhood, and, possibly after repeated applications,  we could apply Davydov's inequality with, say, $Y=\lambda(Z_i) e^{-2\pi \iota t \Upsilon_{i}} $. Then, \color{black} by  \eqref{0_cond_mean} we have \color{black} $\mathbb E (Y)=0$ and we thus obtain \eqref{LL3}. 
For the multi-indices satisfying $|i-i^\prime|\geq n^{\delta}$ but belonging to $\mathcal I$, we could simply bound $\left| \mathbb E \left\{ \Lambda(i,j,i^\prime,j^\prime;t,t^\prime) \right\} \right|$ using Cauchy-Schwarz inequality and recall the negligible cardinality  of order   $n^{2\delta}n^2$ of $\mathcal I$. From the investigation of all types of situations we note that we could take 
  $p=s/(s-3)$ and $q=s/3$. 

Now, we  distinguish two sub-cases to deal with the case where
$
0<|i-i^\prime|< n^{\delta}. 
$ 
The absolute value of another pair of indices is smaller than $n^{\delta}$ or the absolute values of all the other five pairs we could make with $i,j,i^\prime$ and $j^\prime$ are larger than  $n^{\delta}$. In the former case, the cardinality of the set of multi-indices is of order at most $n^{2\delta}n^2$. In the later case, we could apply Davydov's inequality with a split of $\Lambda(i,j,i^\prime,j^\prime;t,t^\prime) $ in $X$ and $Y$ such that, say, \color{black}  $Y=\lambda(Z_i) e^{-2\pi \iota t \Upsilon_{i}} $.
In such a case, by  \eqref{0_cond_mean}, $\mathbb E (Y)=0$ \color{black} and we have a bound as in \eqref{LL3}.

 The case $i=i^\prime$ requires a special attention, that is we have to study 
\begin{multline*}
\Lambda^0_n = \int_{\mathbb R} \int_{\mathbb R} \sum_{1\leq i \neq j\neq j^\prime  \leq n}  \mathbb E \left\{ \Lambda(i,i,i^\prime,j^\prime;t,t^\prime) \right\}  \mathcal{F}[K](th) dt  \mathcal{F}[K](t^\prime h) dt^\prime\\
=  \int_{\mathbb R} \int_{\mathbb R} \sum_{1\leq i \neq j\neq j^\prime  \leq n}  \mathbb E \left\{  \lambda^2(Z_i) e^{-2\pi \iota (t+t^\prime) \Upsilon_{i}}   \tau (Z_j)   e^{2\pi \iota t  \Upsilon_{j }}   \tau (Z_{j^\prime}) e^{2\pi \iota t^\prime \Upsilon_{j^\prime} } \right\} \\ \times  \mathcal{F}[K](th) dt  \mathcal{F}[K](t^\prime h) dt^\prime.
\end{multline*}
In the case where in addition $\min (|i-j|,|i-j^\prime|, |j-j^\prime|) < n^\nu$, for some $0<\nu<1$ that will be specified below, the cardinality of the set of multi-indices is of order at most $n^{2+\nu}$. Then a bound for the sum over the set of these multi-indices is obtained easily using the small cardinality of the set  (small compared to $n^4$, the order of the cardinality of the full set of multi-indices $(i,j,i^\prime,j^\prime)$ with $i\neq j$ and $i^\prime\neq j^\prime$) 
and Cauchy-Schwarz inequality. For multi-indices such that $i=i^\prime $ and $\min (|i-j|,|i-j^\prime|, |j-j^\prime|) \geq n^\nu$, applying twice Davydov's inequality with $p=s/(s-2)$ and $q=s/2$, 
\begin{multline}\label{sep_a}
\mathbb E \left\{  \lambda^2(Z_i) e^{-2\pi \iota (t+t^\prime) \Upsilon_{i}}   \tau (Z_j)   e^{2\pi \iota t  \Upsilon_{j }}   \tau (Z_{j^\prime}) e^{2\pi \iota t^\prime \Upsilon_{j^\prime} } \right\} \\ = \mathbb E \left\{  \lambda^2(Z_i) e^{-2\pi \iota (t+t^\prime) \Upsilon_{i}}  \right\}\mathbb E \left\{   \tau (Z_j)   e^{2\pi \iota t  \Upsilon_{j }}   \right\} \mathbb E \left\{  \tau (Z_{j^\prime}) e^{2\pi \iota t^\prime \Upsilon_{j^\prime} } \right\} +  O(n^{-\xi \nu(s-2)/s)})\\= \mathbb E \left\{  \overline{\lambda^2}(\Upsilon_i) e^{-2\pi \iota (t+t^\prime) \Upsilon_{i}}  \right\} \mathbb E \left\{   \overline{\tau} (\Upsilon_j)   e^{2\pi \iota t  \Upsilon_{j }}   \right\}\mathbb E \left\{  \overline{\tau} (\Upsilon_{j^\prime}) e^{2\pi \iota t^\prime \Upsilon_{j^\prime} } \right\} +  O(n^{-\xi \nu(s-2)/s)}),
\end{multline}
where $\overline{\lambda^2}(\Upsilon_i)  = \mathbb E [\lambda^2(Z_i) \mid \Upsilon_i]$, 
 $\overline{\tau}(\Upsilon_j)  = \mathbb E [\tau(Z_j) \mid \Upsilon_j]$,  $\overline{\gamma}(\Upsilon_{j^\prime})  = \mathbb E [\tau(Z_{j^\prime}) \mid \Upsilon_{j^\prime}]$. Moreover, the rate $O(n^{-\xi \nu(s-2)/s)})$ of the reminder is  uniform with respect to $t$ and $t^\prime$. We deduce 
\begin{multline*}
\int_{\mathbb R} \int_{\mathbb R} \mathbb E \left\{ \Lambda(i,i,i^\prime,j^\prime;t,t^\prime) \right\}  \mathcal{F}[K](th) dt  \mathcal{F}[K](t^\prime h) dt^\prime\\ = \int_{\mathbb R} \int_{\mathbb R} \mathcal{F}[\overline{\lambda^2}\eta_{\gamma_0,f}](t+t^\prime) \mathcal{F}[\overline \tau\eta_{\gamma_0,f}](-t) \mathcal{F}[\overline\tau\eta_{\gamma_0,f}](-t^\prime)  \mathcal{F}[K](th) dt  \mathcal{F}[K](t^\prime h) dt^\prime\\
 + O(n^{-\xi \nu(s-2)/s)})\int_{\mathbb R} \int_{\mathbb R}  \mathcal{F}[K](th) dt  \mathcal{F}[K](t^\prime h) dt^\prime.
\end{multline*}
Let $0<c<1$ to be specified below. 
Note that, 
since $\mathcal{F}[K](0)=1$,
\begin{multline*}
\int_{\mathbb R} \left\{ \mathcal{F}[\overline{\lambda^2}\eta_{\gamma_0,f}](t+t^\prime) \mathcal{F}[\overline \tau\eta_{\gamma_0,f}](-t)  \right\}\mathcal{F}[K](th)   dt = \int_{|th|\leq h^c}\{\cdots\}  dt \\+  \int_{|th|\leq h^c}\{\cdots\}\left\{ \mathcal{F}[K](th)  -\mathcal{F}[K](0) \right\}dt 
+\int_{|th| > h^c} \{\cdots\}\mathcal{F}[K](th)   dt\\
=: I_1(t^\prime;h) + I_2(t^\prime;h) + I_3(t^\prime;h). 
\end{multline*}
Since $\mathcal{F}[\lambda^2 \eta_{\gamma_0,f}](\cdot)$ and $\mathcal{F}[\overline \tau\eta_{\gamma_0,f}](\cdot)$ are squared integrable,
$$
|I_1(t^\prime;h)| \leq \int_{\mathbb R} \left|  \mathcal{F}[\overline{\lambda^2}\eta_{\gamma_0,f}](t+t^\prime) \mathcal{F}[\overline \tau\eta_{\gamma_0,f}](-t)    \right|dt<\infty .
$$
Moreover, since $\mathcal{F}[K](\cdot) $ is Lipschitz continuous and  $\mathcal{F}[\lambda^2 \eta_{\gamma_0,f}](\cdot)$ and $\mathcal{F}[\overline \tau\eta_{\gamma_0,f}](\cdot)$ are bounded,  for some constant $C$,
$$
\left| I_2(t^\prime;h) \right| \leq Ch \int_{|t|\leq h^{c-1}}|t|dt = h^{2c-1}\rightarrow 0,
$$
the convergence to zero being guaranteed  as soon as $c> 1/2$. Finally, by Assumption \ref{ass:ass3}, 
\begin{multline*}
\left| I_3(t^\prime;h) \right| \leq C_1 h^{-1} \int_{|u|> h^{c-1}}|\mathcal{F}[K](u) |du \leq C_2 h^{-1} \int_{|u|> h^{c-1}}u^{-c_K}du \\= C_2 h^{-1}  h^{(c_K-1)(1-c)} \rightarrow 0,
\end{multline*}
with some constants $C_1, C_2$. The convergence to zero holds as soon as $c$ is smaller than $(c_K-2)/(c_K-1)$ which is larger than 1/2 provided $c_K>3$. Next, we integrate with respect to $t^\prime$ and we decompose the integral in a similar way, that is we write 
\begin{multline*}
\int_{\mathbb R}  \{I_1(t^\prime;h) + I_2(t^\prime;h) + I_3(t^\prime;h) \}\mathcal{F}[K](t^\prime h) dt^\prime = \int _{|t^\prime h|\leq h^c}  \{\cdots\} dt^\prime\\
+  \int_{|t^\prime h|\leq h^c}\{\cdots\}\left\{ \mathcal{F}[K](t^\prime h)  -\mathcal{F}[K](0) \right\}dt^\prime 
+\int_{|t^\prime h| > h^c} \{\cdots\}\mathcal{F}[K](t^\prime h)   dt^\prime \\
=: J_{11}(h) + J_{12}(h) +\cdots +J_{32}(h) + J_{33}(h).
\end{multline*}
By Dominated Convergence Theorem and Convolution Theorem for Fourier Transform, 
\begin{multline*}
J_{11}(h) \rightarrow \int_{\mathbb R} \int_{\mathbb R} \mathcal{F}[\overline{\lambda^2}\eta_{\gamma_0,f}](t+t^\prime) \mathcal{F}[\overline \tau \eta_{\gamma_0,f}](-t) \mathcal{F}[\overline\tau\eta_{\gamma_0,f}](-t^\prime)    dt   dt^\prime\\=\mathcal{F}[\overline{\lambda^2}\eta_{\gamma_0,f}\overline \tau\eta_{\gamma_0,f}\overline \tau\eta_{\gamma_0,f}](0)  =  \mathbb{E}[ \overline{\lambda^2}(\Upsilon_i)  \{\overline \tau \eta_{\gamma_0,f} \}^2 (\Upsilon_i) ] .
 \end{multline*}
Meanwhile, by the same arguments as above, the other eight terms $ J_{kl}(h) $, with $1\leq k,l\leq 3$ and $(k,l)\neq (1,1)$, tend to zero.

Now we can deduce 
\begin{multline*}
\mathbb E [R^2_{1,n }] -n^{-1}\mathbb{E}\left\{  \lambda^2(Z_i)    \mathbb E ^2 [\tau (Z_i) \mid \Upsilon_i] \eta_{f,\gamma_0}^2 (\Upsilon_i) \right\}\\ =n^{-1}\times \left[ h^{-2} O\left(n^{-\{\xi \delta s^{-1}(s-3)-1\}} + n^{-(1-2\delta)}\right) +  h^{-2} O\left(n^{-\xi \nu s^{-1} (s-2)}+n^{-(1-\nu)}\right)\right].
 \end{multline*}
Taking 
$$
\delta = \frac{2s}{\xi(s-3)+2s}\quad \textrm{ and } \quad \nu = \frac{s}{\xi(s-2)+s},
$$
to guarantee $ \mathbb E [R^2_{1,n }] - \mathbb{E}\left\{  \lambda^2(Z_i)    \mathbb E ^2 [\tau (Z_i) \mid \Upsilon_i] \eta_{f,\gamma_0}^2 (\Upsilon_i) \right\} = o(n^{-1}),$
we need the conditions $0<\delta,\nu<1$ and 
$$
nh^{\rho} \rightarrow \infty \; \textrm{ with } \; \rho =  \frac{2[\xi(s-3)+2s]}{\xi (s-3) - 2s}\quad\textrm{and} \quad  \rho =  \frac{2[\xi(s-2)+s]}{\xi (s-2) }.
$$ 
Since the first expression of $\rho$ is the last display is always larger than the second one, provided $s>3$, we only have to ensure 
$nh^{\rho} \rightarrow \infty$ for the first expression of $\rho$. 
Note that in both cases $\rho < 3$ as soon as $\xi > 10s/(s-3) $. 

Next, note that by \eqref{0_cond_mean}
$$
\mathbb E [R^2_{2,n} ] =  n^{-1}\mathbb E \left\{ \lambda^2(Z_i) \mathbb E^2[\tau (Z_i) \mid \Upsilon_i] \eta^2_{\gamma_0,f}(\Upsilon_i) \right\} \{1+o(n^{-1})\}.
$$
It remains to study 
$$
\color{black}  n^3 \color{black} \mathbb E [R_{1,n}R_{2,n} ]= 
\int_{\mathbb R}  \sum_{1\leq i \neq j \leq n} \sum_{1\leq i^\prime    \leq n}  \mathbb E \left\{ \Gamma(i,j,i^\prime;t) \right\}  \mathcal{F}[K](th) dt   ,
$$
with 
$$
\Gamma(i,j,i^\prime;t)  = \lambda(Z_i) e^{-2\pi \iota t \Upsilon_{i}}   \tau (Z_j)  e^{2\pi \iota t \Upsilon_j  }\lambda(Z_{i^\prime})  \mathbb E[\tau (Z_{i^\prime}) \mid \Upsilon_{i^\prime}] \eta_{\gamma_0,f}(\Upsilon_{i^\prime})   .
$$
Repeating the same arguments as above,  
the leading term of $n^4 (n-1)^{-1}\mathbb E [R_{1,n}R_{2,n} ]$ is obtained summing the terms $\int_{\mathbb R}
\mathbb E [\Gamma(i,j,i; t) ]\mathcal{F}[K](th) dt $ over all the pairs $(i,j)$. Moreover, only the pairs for which $|i-j|$ is sufficiently large will matter. As a consequence, after applying Davydov's inequality, the leading terms will be 
\begin{multline*}
\int_{\mathbb R}   \mathbb E\left\{\lambda^2(Z_i)   \mathbb E[\tau (Z_{i}) \mid \Upsilon_{i}] \eta_{\gamma_0,f}(\Upsilon_{i })  e^{-2\pi \iota t \Upsilon_{i}}\right\} \mathbb E\left\{ \tau (Z_j)  e^{2\pi \iota t \Upsilon_j  }   \right\}  \mathcal{F}[K](th) dt \\
= \int_{\mathbb R}   \mathbb E\left\{\lambda^2(Z_i)   \mathbb E[\tau (Z_{i}) \mid \Upsilon_{i}] \eta_{\gamma_0,f}(\Upsilon_{i })  e^{-2\pi \iota t \Upsilon_{i}}\right\} \mathbb E\left\{ \mathbb E[\tau (Z_{j}) \mid \Upsilon_{j}] e^{2\pi \iota t \Upsilon_j  }   \right\}  \mathcal{F}[K](th) dt \\
= \mathcal{F}[\overline{\lambda^2}\eta_{\gamma_0,f}\overline \tau\eta_{\gamma_0,f}\overline \tau\eta_{\gamma_0,f}](0)  \{1+o(1)\}  ,
\end{multline*}
where for the last equality we used the same arguments as above. Deduce that 
$$
\mathbb E [R_{1,n}R_{2,n} ]  = n^{-1}\mathbb E \left\{ \lambda^2(Z_i) \mathbb E^2[\tau (Z_i) \mid \Upsilon_i] \eta^2_{\gamma_0,f}(\Upsilon_i) \right\} \{1+o(n^{-1})\},
$$
and thus \eqref{second_mom} holds true, and 
$
\mathbb E \left( \| A_{W,b}\| \right) = o(n^{-1/2}).
$

Next, we have to investigate $A_{W,e}$, $A_{W,f}$ and $A_{W,g}$. This could not be bounded by simply taking the norm of the sum. Indeed, since the nonparametric estimator of the derivative has a slower rate of convergence given in \eqref{eq:boundsnp_der}, this would not yield  a sufficiently fast rate for these terms. To improve the rate  we have to exploit \eqref{eq:modelbase_b}. For this, we use again the steps we followed for $A_{W,b}$: replace $\widehat\eta_{\gamma_0,f}(\Upsilon_i)$ by $\eta_{\gamma_0,f}(\Upsilon_i)$, replace the expressions of the nonparametric estimator, and compute the second order moment of the resulting average over  three indices. Next, we partition the set of six components multi-indices, obtained when considering the second order moment, in three subsets that could be handled either  using  Cauchy-Schwarz inequality and the negligible cardinality of the subset, or using Davydov's inequality and a condition like \eqref{0_cond_mean}, or using the Inverse Fourier Transform for $K$ and $K^\prime$. The later category of multi-indices corresponds to the expectation of the terms containing the factor $\varepsilon_i^2$. The adaptation of the previous arguments for $A_{W,e}$, $A_{W,f}$ and $A_{W,g}$ is quite straightforward and thus we omit the details.
Deduce $
\mathbb E \left( \| A_{W,e}\| + \| A_{W,f}\| + \| A_{W,g}\|\right) = o(n^{-1/2}).
$

Finally, we have to investigate  $A_{W,c}$ and  $A_{W,d}$. \color{black}  Up to a sign, \color{black} we can write each of these terms under the form
\begin{multline*}
 \frac{1+o_{\mathbb P}(1)}{n}\sum_{i=1}^n \delta(\Upsilon_i) \left\{ \widehat \gamma_n(\Upsilon_i) - \gamma(\Upsilon_i)\right\}  \lambda_1 (W_i)  \\ -  \frac{1+o_{\mathbb P}(1)}{n}\sum_{\textcolor{black} {i=1}}^n \delta(\Upsilon_i)  \left\{ \widehat \gamma_n(\Upsilon_i) - \gamma(\Upsilon_i)\right\} 
\left\{ \frac{1}{n}\sum_{j=1}^n \lambda_2 (\Upsilon_i, \Upsilon_j;h) \right\}\\=: \{A_{1}-A_2\}\{1+o_{\mathbb P}(1)\}, 
\end{multline*}
where $A_{1}$ and $A_2$ are the sums corresponding to  
$$
\lambda_1 (W_i) = [W_i-\mathbb E (W_i \mid \Upsilon_i)]\eta_{\gamma_0,f}(\Upsilon_i),
$$
$$ \lambda_2 (t, \Upsilon_j;h)= W_j\frac{1}{h} K\left(\frac{\Upsilon_j -t}{h}\right)  - \mathbb E (W_i \mid \Upsilon_i=t)\eta_{\gamma_0,f}(t),
$$
respectively. Here, $\widehat \gamma_n(\cdot)$ is either  the kernel estimator of  $\gamma(\cdot)= \eta_{\gamma_0,m}(\cdot)$ (and then $\delta(\cdot) = \eta_{\gamma_0,m^\prime}(\cdot)$), or of  $\gamma(\cdot)= \eta_{\gamma_0,f}(\cdot)$ (and then $\delta(\cdot) =m_0(\cdot) \eta_{\gamma_0,m^\prime}(\cdot)$). By \eqref{eq:boundsnp},  $\widehat \gamma_n(\cdot)$  is uniformly convergent with rate $o_{\mathbb P}(n^{-1/4})$. First we investigate the variance of $A_2$ for which we could apply again the uniform rate \eqref{eq:boundsnp} and deduce $
\mathbb E \left( \| A_{2}\| \right) = o(n^{-1/2}).
$
\color{black}  Next we handle $A_1$. \color{black} After replacing the expression of the kernel estimator, \color{black}  up to a remainder of order $o_{\mathbb P }(n^{-1/2})$, we could rewrite 
\begin{multline*}
A_{1} = \frac{1}{n^2} \sum_{1\leq i \neq j \leq n  }  \color{black}  \delta(\Upsilon_i) \color{black} \lambda_1(W_i) \tau_1 (W_j)\frac{1}{h} K\left(\frac{\Upsilon_j - \Upsilon_i}{h}\right) \color{black}  \\
- \frac{1}{n} \sum_{1\leq i \leq n  }\!\! \delta(\Upsilon_i) \lambda_1(W_i) \mathbb E [\tau_1 (W_i)\mid \Upsilon_i] 
=:  R_{A,1,n }-R_{A,2,n }  , \color{black}
\end{multline*}
\color{black}  \emph{i.e.,} an expression similar  to $R_{1,n }-R_{2,n } $ from the decomposition of  $A_{W,b}$. For instance, $\tau_1 (W_j) = Y_j - l(X_j;\gamma_{0,1})$ for $A_{W,c}$. \color{black}
Here, instead of \eqref{0_cond_mean} we have
\begin{equation}\label{0_cond_mean2}
\mathbb E [ \color{black}  \delta(\Upsilon_i) \color{black} \lambda_1 (W_i)\mid \Upsilon_i] 
 \color{black}  =\delta(\Upsilon_i)\mathbb E [ \lambda_1 (W_i)\mid \Upsilon_i]  
= 0,
\end{equation}
and all the arguments used for $R_{1,n }$ and  $R_{2,n }$ remain valid for $R_{A,1,n }$ and  $R_{A,2,n }$ with 
$\lambda(Z_i)$ replaced by $\delta(\Upsilon_i)\lambda_1 (W_i)$.  
Gathering facts, deduce $\mathbb E \left( \| A_{W,c}\| + \| A_{W,d}\| \right) = o(n^{-1/2})$. Now the proof of Lemma \ref{lemma:A_W} is complete.
\end{proof}

\subsubsection{CHPLSIM: the rate of  ${n} ^{-1}\sum_{i=1}^n [\Psi(Z_i,Z_{i}^{\{r\}};\theta_0,\widehat\eta_{\gamma_0}) - \Psi(Z_i,Z_{i}^{\{r\}};\theta_0,\eta_0)] $}
Let $\Phi(\cdot,\cdot;\cdot,\cdot)$ be defined as in \eqref{def_Phi2}. We want to show a rate like \eqref{eq:toshow1}. The only difference compared to the PLSIM comes from the second set of equations. Let 
$$
\Psi_\sigma (Z_i,Z_{i}^{\{r\}};\theta,\eta_{\gamma}) = g_\sigma(Z_i,Z_{i}^{\{r\}};\theta,m_\gamma ) \nabla_\beta \sigma^2(V_i,Z_{i}^{\{r\}};\beta)\eta_{\gamma,f}^2(W_i^\top\gamma_2)\in\mathbb R^{d_\beta },
$$
where we recall that
$$
g_\sigma(Z_i,Z_{i}^{\{r\}};\theta,m_\gamma) =  \{ Y_i-  l(X_i; \gamma_1) - m_\gamma (\Upsilon_i)\}^2- \sigma^2(V_i,Z_{i}^{\{r\}};\beta).
$$
With the notations $\sigma^2_i=\sigma^2(V_i,Z_{i}^{\{r\}};\beta)$ and $\nabla_\beta \sigma^2_i=\nabla_\beta \sigma^2(V_i,Z_{i}^{\{r\}};\beta)$, we could decompose 
\begin{multline*}
\frac{1}{n}\sum_{i=1}^n \left[\Psi(Z_i,Z_{i}^{\{r\}};\theta_0,\widehat\eta_{\gamma_0}) -\Psi(Z_i,Z_{i}^{\{r\}};\theta_0,\eta_0) \right] \\
= \frac{1}{n}\sum_{i=1}^n \left\{\varepsilon_i^2 -  \sigma_i^2\right\} \nabla_\beta \sigma^2_i \;\{\widehat \eta_{\gamma,f}(\Upsilon_i) -\eta_{\gamma,f}(\Upsilon_i) \} \;\{2\eta_{\gamma,f}(\Upsilon_i) +o_{\mathbb P} (n^{-1/4})\}\\
+\frac{1}{n}\sum_{i=1}^n \left\{\widehat \eta_{\gamma_0,m}  (\Upsilon_i)- m_0 (\Upsilon_i) \widehat \eta_{\gamma_0,f}  (\Upsilon_i)\right\}^2  \nabla_\beta \sigma^2_i\\
+ \frac{2}{n}\sum_{i=1}^n  \varepsilon_i m_0 (\Upsilon_i)  [ \widehat \eta_{\gamma_0,f}  (\Upsilon_i)- \widehat \eta_{\gamma_0,f}  (\Upsilon_i)]  \nabla_\beta \sigma^2_i \;\{\eta_{\gamma,f}(\Upsilon_i) +o_{\mathbb P} (n^{-1/4})\} \\
- \frac{2}{n}\sum_{i=1}^n  \varepsilon_i [ \widehat \eta_{\gamma_0,m}  (\Upsilon_i)- m_0 (\Upsilon_i)  \eta_{\gamma_0,f}  (\Upsilon_i)]  \nabla_\beta \sigma^2_i \;\{\eta_{\gamma,f}(\Upsilon_i) +o_{\mathbb P} (n^{-1/4})\} 
\\= B_1+B_2+2B_3-2B_4 +o_{\mathbb P} (n^{-1/2}),
\end{multline*}
where the reminder $o_{\mathbb P} (n^{-1/2})$ is obtained by taking the norms of the sums where we obtain a product of two quantities with uniform rates $o_{\mathbb P} (n^{-1/4})$. 
Taking the norm of the sum, using triangle inequality and \eqref{eq:boundsnp}, $\|B_2\|=o_{\mathbb P}(n^{-1/2})$. Next, by the definition of the model, $\mathbb E\left[\varepsilon_i^2 -  \sigma_i^2 \mid V_i,\mathcal F_{i-1}\right]=0$ a.s. A careful inspection of the arguments for deducing the rate of $A_{W,a}$ and $A_{W,b}$ in Lemma \ref{lemma:A_W} reveals that the arguments remain valid if the function $\lambda (\cdot)$ appearing in the definition of $R_{1,n}$ and $R_{n,2}$, depends also on $Z_{i}^{\{r\}}$, that is $\lambda(Z_i)$ becomes $\lambda(Z_i, Z_{i}^{\{r\}})$, \color{black}  and 
\eqref{0_cond_mean} is replaced by the condition  $\mathbb E [\lambda(Z_i, Z_{i}^{\{r\}}) \mid \Upsilon_i,\mathcal F_{i-1}]=0$. \color{black} Here we consider
$$
\lambda(Z_i, Z_{i}^{\{r\}}) =  \left\{\varepsilon_i^2 -  \sigma_i^2\right\} \nabla_\beta \sigma^2_i \;\eta_{\gamma,f}(\Upsilon_i) ,$$
$$
\lambda(Z_i, Z_{i}^{\{r\}}) =  
\varepsilon_i m_0 (\Upsilon_i)   \nabla_\beta \sigma^2_i \; \eta_{\gamma,f}(\Upsilon_i) 
$$
and $$\lambda(Z_i, Z_{i}^{\{r\}}) =  
\varepsilon_i   \nabla_\beta \sigma^2_i \;\eta_{\gamma,f}(\Upsilon_i) $$ to handle $B_1$, $B_3$ and $B_4$, respectively. We have
$\mathbb E [\lambda(Z_i, Z_{i}^{\{r\}}) \mid V_i,\mathcal F _{i-1}]=0$ a.s., for these three definitions, and thus condition \eqref{0_cond_mean} holds true.  Deduce that  $\mathbb E (\|B_1\| + \|B_3\| + \|B_4\|) =o_{\mathbb P}(n^{-1/2})$.

\subsubsection{Controlling the variance estimation error}\label{sec:rep1}
By the previous arguments it is now easy to deduce that
\begin{multline} 
 \left\|\frac{1}{n} \sum_{i=1}^n\left[\Psi(Z_i ,Z_{i}^{\{r\}} ; \theta_0,\widehat \eta_{\gamma_0})\Psi(Z_i, Z_{i}^{\{r\}};\theta_0,\widehat \eta_{\gamma_0})^\top\right.\right. \\\! -\!  \left.\left. \Psi(Z_i,Z_{i}^{\{r\}}\!;\theta_0,\eta_0)\Psi(Z_i, Z_{i}^{\{r\}};\theta_0,\eta_0)^\top\right]\right\| = o_{\mathbb{P}}(1).
\end{multline}

\subsection{Empirical likelihood ratio with $\widehat \eta_{\gamma_0}$} \label{sec:rep2}
We have
\begin{align*}
 \lambda(\theta_0,\widehat \eta_{\gamma_0}) &= S(\theta_0,\widehat \eta_{\gamma_0})^{-1} \frac{1}{n} \sum_{i=1}^n \Psi(Z_i,Z_{i}^{\{r\}};\theta_0,\eta_0) + o_\mathbb{P}(n^{-1/2}) \\
&= S(\theta_0,\eta_0)^{-1} \frac{1}{n} \sum_{i=1}^n \Psi(Z_i,Z_{i}^{\{r\}};\theta_0,\eta_0)  \\& \;\;\; +o_\mathbb{P}(1)\frac{1}{n} \sum_{i=1}^n \Psi(Z_i,Z_{i}^{\{r\}};\theta_0,\eta_0) + o_\mathbb{P}(n^{-1/2}) .
\end{align*}
Thus, the CLT for  $n^{-1} \sum_{i=1}^n \Psi(Z_i;\theta_0,\eta_0)$ implies that $\| \lambda(\theta_0,\widehat \eta_{\gamma_0}) -  \lambda(\theta_0,\eta_0)\| = o_\mathbb{P}(n^{-1/2})$. Moreover, since $\| \lambda(\theta_0,\eta_0)\|=O_\mathbb{P}(n^{-1/2})$,
\begin{eqnarray*}
2\ell_n(\theta_0,\widehat \eta_{\gamma_0}) \!\!&=& \!\!\! 2  \lambda(\theta_0,\widehat \eta_{\gamma_0})^\top \sum_{i=1}^n \Psi(Z_i,Z_{i}^{\{r\}};\theta_0,\widehat \eta_{\gamma_0}) \\ \!\!\!&& \!-  \lambda(\theta_0,\widehat \eta_{\gamma_0})^\top \left[\sum_{i=1}^n \Psi(Z_i,Z_{i}^{\{r\}};\theta_0,\widehat \eta_{\gamma_0})\Psi(Z_i,Z_{i}^{\{r\}};\theta_0,\widehat \eta_{\gamma_0})^\top\right]  \lambda(\theta_0,\widehat \eta_{\gamma_0})\\ &&\!+ o_{\mathbb{P}}(1)\\
\!\!\!&=& \!\! \lambda(\theta_0,\eta_0)^\top\sum_{i=1}^n \Psi(Z_i,Z_{i}^{\{r\}};\theta_0,\eta_0) \\ \!\!\!&& \!-  \lambda(\theta_0,\eta_0)^\top \sum_{i=1}^n \Psi(Z_i,Z_{i}^{\{r\}};\theta_0,\eta_0)\Psi(Z_i,Z_{i}^{\{r\}};\theta_0,\eta_0)^\top  \lambda(\theta_0,\eta_0)
+ o_{\mathbb{P}}(1)\\ \!\!\!&=& \!\! 2\ell_n(\theta_0,\eta_0)   + o_{\mathbb{P}}(1).
\end{eqnarray*}
Thus $2\ell_n(\theta_0,\widehat \eta_{\gamma_0})$ and $2\ell_n(\theta_0,\eta_0)$ have the same $\mathcal{X}_{d_\theta - 1}^2$ asymptotic distribution.

\section*{Acknowledgments}
Valentin Patilea gratefully acknowledges support from the Joint Research Initiative ‘Models and mathematical processing of very large data’ under the aegis of Risk Foundation, in partnership with MEDIAMETRIE and GENES, France, and from the grant of the Romanian Ministry of Education and Research, CNCS--UEFISCDI, project number PN-III-P4-ID-PCE-2020-1112, within PNCDI III.

\section*{Supplement}

The online Supplementary Material is organized as follows.
Supplement \ref{supp:sec_A} contains additional proofs.
Supplement  \ref{supp:sec_B} collects additional details on the simulation design and real data analysis results.

\bibliographystyle{imsart-nameyear}
\bibliography{biblio}

\newpage

\medskip

\newpage

\thispagestyle{plain}
\centerline{\LARGE \bf Wilks' theorem for semiparametric regressions}
\smallskip
\centerline{\LARGE\bf with weakly dependent data}
\bigskip
\centerline{\Large Supplementary Material}
\bigskip
\centerline{\large Marie Du Roy de Chaumaray$^*$, Matthieu Marbac$^*$ and Valentin Patilea$^*$}
\medskip
\centerline{\it $^*$Univ. Rennes, Ensai, CNRS, CREST - UMR 9194, F-35000 Rennes, France}
\bigskip


This supplement is organized as follows.
Supplement \ref{supp:sec_A} contains additional proofs.
Supplement  \ref{supp:sec_B} collects additional details on the simulation design and real data analysis results, that were omitted from the main paper due to page limit.

\setcounter{section}{0}
\renewcommand{\thesection}{\Alph{section}}%

\section{Additional proofs}\label{supp:sec_A}


\begin{proof}[Proof of Lemma~\ref{lem:equivalence}: equivalence of the moment conditions\\]
First, without imposing any identification condition, note that for any 
positive function $\omega(V_i)$, we have
$$
\mathbb E[g_\mu(Z_i;\gamma,m)\mid V_i,\mathcal F_{i-1}]=0
\Leftrightarrow
\mathbb E[g_\mu(Z_i;\gamma,m)\mathbb E[g_\mu(Z_i;\gamma,m)\mid V_i,\mathcal F_{i-1}]\omega(V_i)]=0.
$$
Indeed, $\mathbb E[g_\mu(Z_i;\gamma,m)\mid V_i,\mathcal F_{i-1}]=0$ directly implies that $$\mathbb E[g_\mu(Z_i;\gamma,m)\mathbb E[g_\mu(Z_i;\gamma,m)\mid V_i,\mathcal F_{i-1}]\omega(V_i)]=0.$$
 Conversely, by elementary properties of the conditional expectation, 
$$
\mathbb E[g_\mu(Z_i;\gamma,m)\mathbb E[g_\mu(Z_i;\gamma,m)\mid V_i,\mathcal F_{i-1}]\omega(V_i)]=
\mathbb E[\mathbb E^2[g_\mu(Z_i;\gamma,m)\mid V_i,\mathcal F_{i-1}]\omega(V_i)],
$$
and thus $\mathbb E[g_\mu(Z_i;\gamma,m)\mathbb E[g_\mu(Z_i;\gamma,m)\mid V_i,\mathcal F_{i-1}]\omega(V_i)]=0$ implies that $$\mathbb E[g_\mu(Z_i;\gamma,m)\mid V_i,\mathcal F_{i-1}]=0.$$

For an identifiable model, $(\gamma_0,m_0)$ is the unique solution for  $\mathbb E[g_\mu(Z_i;\gamma,m)\mid V_i,\mathcal F_{i-1}]=0$. Therefore, by \eqref{model:easy}, for any $(\gamma,m)\neq(\gamma_0,m_0)$, we have
$$
\mathbb E[\mathbb E^2[g_\mu(Z_i;\gamma,m)\mid V_i,\mathcal F_{i-1}]\omega(V_i)]>0.
$$
By \eqref{simeq1}, we have that $\gamma_0$ is the minimum of the map $\gamma\mapsto \mathbb E[\mathbb E[g_\mu(Z_i;\gamma,m_\gamma)\mid V_i,\mathcal F_{i-1}]^2\omega(V_i)]$. Thus, we have
$$
\nabla_\gamma\mathbb E[\mathbb E^2[g_\mu(Z_i;\gamma_0,m_0)\mid V_i,\mathcal F_{i-1}] \omega(V_i)]=0.
$$
By construction, $\nabla_\gamma g_\mu(Z_i;\gamma,m_\gamma)$ only depends on $V_i$, and thus interchanging derivative and expectation operators  we have 
\begin{equation*}
\nabla_\gamma\mathbb E[\mathbb E^2[g_\mu(Z_i;\gamma,m_\gamma)\mid V_i,\mathcal F_{i-1}]\omega(V_i)]=2
\mathbb E[g_\mu(Z_i;\gamma,m_\gamma)\nabla_\gamma g_\mu(Z_i;\gamma,m_\gamma) \omega(V_i)],
\end{equation*}
which leads to 
$$
\mathbb E[g_\mu(Z_i;\gamma_0,m_{\gamma_0})\nabla_\gamma g_\mu(Z_i;\gamma_0,m_{\gamma_0}) \omega(V_i)]=0.
$$
The proof is completed after left-multiplying both sides in the last display by the non-random matrix $\mathbf{J}(\gamma_0)$  and noting that the assumption made on $H_\mu(\gamma)$ ensures that $\gamma_0$  is the only one critical point for the map $\gamma\mapsto \mathbb E[\mathbb E^2[g_\mu(Z_i;\gamma,m_\gamma)\mid V_i,\mathcal F_{i-1}]\omega(V_i)]$.
\end{proof}

\quad


\begin{proof}[Proof of Lemma~\ref{lemma:TCLfi}]
By the Cramer-Wold device, it suffices to show that for any $c \in \mathbb{R}^d$, 
\begin{equation}\label{eq:TCLdim1}
\frac{1}{\sqrt{n}} \sum_{i=1}^n c^{\top}\Psi(Z_i,Z_{i}^{\{r\}};\theta_0,\eta_0) \xrightarrow{d} \mathcal{N}(0, c^{\top} \Sigma c).
\end{equation}
As $\left(c^\top \Psi(Z_i,Z_{i}^{\{r\}};\theta_0,\eta_0)\right)$ is a strictly stationary, $\alpha$-mixing, centered process, we notice that the Central Limit Theorem follows by a direct application of Corollary 1.1 in \citep{Merlevede00}, under Assumption \ref{ass:process}. 
Indeed, let $\delta=s-2 >0$, where $s$ is given in Assumption \ref{ass:process}. By Cauchy-Schwarz inequality and \eqref{eq:ex_mom}, as $2(2+\delta)=2s$, we obtain that $\mathbb{E}\left[\|\Psi(Z_i,Z_{i}^{\{r\}};\theta_0,\eta_0)\|^{2+\delta}\right] <  \infty$. 
If  $\widetilde{\alpha}_m$ denotes the mixing coefficients of the process $\left(c^\top \Psi(Z_i,Z_{i}^{\{r\}};\theta_0,\eta_0)\right)$, we have $\widetilde{\alpha}_m \leq \alpha_m$ where, by \eqref{eq:def_alpha_ew},  $m \, \alpha_m^{ \delta/(2+\delta)} \to 0$  as $ \delta/(2+\delta)= s/(s-2) < \xi$.
To obtain \eqref{eq:TCLdim1}, it remains to check 
\begin{equation}\label{eq:cvcov}
\frac{1}{n}  \mathbb{E}\left[\left(\sum_{i=1}^n c^{\top}\Psi(Z_i,Z_{i}^{\{r\}};\theta_0,\eta_0)\right)^2\right] \to c^{\top} \Sigma c.
\end{equation}
Since $(Z_i)$ is stationary and, by construction, $\mathbb{E}\left[\Psi(Z_j,Z_{j}^{\{r\}};\theta_0,\eta_0) | V_j, \mathcal{F}_{j-1} \right]= 0$ a.s., we have 
\begin{multline*}
\mathbb{E}
\left[\left(\sum_{i=1}^n \textcolor{black} {c^{\top}} \Psi(Z_i,Z_{i}^{\{r\}};\theta_0,\eta_0)\right)^2 \right]  =  \sum_{i=1}^n \color{black}  c^{\top} \color{black} \mathbb{E}\left[\Psi(Z_i,Z_{i}^{\{r\}};\theta_0,\eta_0)\Psi(Z_i,Z_{i}^{\{r\}};\theta_0,\eta_0)^{\top}\right]   \color{black}  c \color{black}  \\   + 2  \sum_{1\leq i<j \leq n}  \color{black}  c^{\top} \color{black}  \mathbb{E}\left[\Psi(Z_i,Z_{i}^{\{r\}};\theta_0,\eta_0) \Psi(Z_j,Z_{j}^{\{r\}};\theta_0,\eta_0)^\top\right]  \color{black}  c \color{black}  \\
 = n \,  \color{black}  c^{\top} \color{black}  \mathbb{E}\left[\Psi(Z_1,Z_{1}^{\{r\}};\theta_0,\eta_0)\Psi(Z_1,Z_{1}^{\{r\}};\theta_0,\eta_0)^{\top}\right]   \color{black}  c \color{black} 
= n \,  \color{black}  c^{\top} \color{black}  \Sigma  \color{black}  c \color{black} .
\end{multline*}
\end{proof}

\quad

\begin{proof}[Proof of Lemma~\ref{lemma:max}]
The property 
$\mathbb{E}\left[ \left\| \Psi(Z_i,Z_{i}^{\{r\}};\theta_0,\eta_0)\right\|^3\right] < \infty$ follows by applying Cauchy-Schwarz inequality component-wise and using our moment conditions. 
Let $M_n=\max_{1\leq i\leq n} \| \Psi(Z_i,Z_{i}^{\{r\}};\theta_0,\eta_0) \|$ and $C>0$.
By Boole's inequality, the stationarity of the process $(Z_i)$ and Markov inequality, we have
\begin{multline*}
n^{1/2}\mathbb P (M_n > Cn^{1/2}) \leq n^{3/2} \mathbb E[\| \Psi(Z_i,Z_{i}^{\{r\}};\theta_0,\eta_0)\|^3]/(Cn^{1/2})^3 \\= C^{-3}E[\| \Psi(Z_i,Z_{i}^{\{r\}};\theta_0,\eta_0)\|^3]<\infty.
\end{multline*}
Therefore, we have
$
M_n= o_\mathbb{P} (n^{1/2}).
$
Moreover, we have
$$
\frac{1}{n} \sum_{i=1}^n  \left\| \Psi(Z_i,Z_{i}^{\{r\}};\theta_0,\eta_0) \right\|^3 \leq M_n \frac{1}{n} \sum_{i=1}^n  \left\| \Psi(Z_i,Z_{i}^{\{r\}};\theta_0,\eta_0) \right\|^2
$$
Using the fact that $ M_n = o_\mathbb{P} (n^{1/2})$,  $\mathbb{E}\left[ \left\| \Psi(Z_i,Z_{i}^{\{r\}};\theta_0,\eta_0)\right\|^2 \right] < \infty$ and by Lemma~\ref{lemma:TCLfi}, we have
$$
\frac{1}{n} \sum_{i=1}^n  \left\| \Psi(Z_i,Z_{i}^{\{r\}};\theta_0,\eta_0) \right\|^3 = o_\mathbb{P}(n^{1/2}).
$$
 \end{proof}

\quad

\begin{proof}[Proof of Lemma~\ref{lemma:lambda}]
For any $\theta$, we have that $\lambda(\gamma,\eta)$ satisfies
\begin{equation} \label{eq:deflambda}
\frac{1}{n} \sum_{i=1}^n \frac{1}{1 + \lambda(\theta,\eta)^\top \Psi(Z_i,Z_{i}^{\{r\}};\theta,\eta)} \Psi(Z_i,Z_{i}^{\{r\}};\theta,\eta) = 0,
\end{equation}
Let $ \lambda(\theta_0,\eta_0)=\| \lambda(\theta_0,\eta_0)\| u$, we want to show that $\| \lambda(\theta_0,\eta_0)\|=O_\mathbb{P}(n^{-1/2})$. Noting that \begin{multline*}
\{1 + \lambda(\theta,\eta)^\top \Psi(Z_i,Z_{i}^{\{r\}};\theta,\eta)\}^{-1}=\\1 - \lambda(\theta,\eta)^\top \Psi(Z_i,Z_{i}^{\{r\}};\theta,\eta)/\{1 + \lambda(\theta,\eta)^\top \Psi(Z_i,Z_{i}^{\{r\}};\theta,\eta)\},
\end{multline*}
we have from \eqref{eq:deflambda}
\begin{equation}\label{eq:lambdanorm}
\| \lambda(\theta_0,\eta_0)\| u^\top \widetilde S(\theta_0,\eta_0) u = \frac{1}{n}  u^\top \sum_{i=1}^n \Psi(Z_i,Z_{i}^{\{r\}};\theta_0,\eta_0) ,
\end{equation}
where 
\begin{multline*}
\widetilde S(\theta_0,\eta_0) = \frac{1}{n} \sum_{i=1}^n \{1 +  \lambda(\theta_0,\eta_0)^\top \Psi(Z_i,Z_{i}^{\{r\}};\theta_0,\eta_0)\}^{-1} \times \\ \Psi(Z_i,Z_{i}^{\{r\}};\theta_0,\eta_0)\Psi(Z_i,Z_{i}^{\{r\}};\theta_0,\eta_0)^\top.
\end{multline*}
By construction, for any $1\leq i\leq n$,   $  \lambda(\theta_0,\eta_0)^\top \Psi(Z_i,Z_{i}^{\{r\}};\theta_0,\eta_0)+1>0$. Thus, we obtain that
$$
\| \lambda(\theta_0,\eta_0)\| u^\top   S(\theta_0,\eta_0) u \leq \| \lambda(\theta_0,\eta_0)\| u^\top \widetilde S(\theta_0,\eta_0) u (1 + \| \lambda(\theta_0,\eta_0)\| M_n),
$$
where $S(\theta_0,\eta_0)=n^{-1}\sum_{i=1}^n \Psi(Z_i,Z_{i}^{\{r\}};\theta_0,\eta_0)\Psi(Z_i,Z_{i}^{\{r\}};\theta_0,\eta_0)^\top$ and where $M_n $ is the largest value among the $ \| \Psi(Z_i,Z_{i}^{\{r\}};\theta_0,\eta_0) \|$'s. Using \eqref{eq:lambdanorm} we deduce
\begin{multline*}
\| \lambda(\theta_0,\eta_0)\| \left[u^\top   S(\theta_0,\eta_0) u - M_n  u^\top  \frac{1}{n}\sum_{i=1}^n \Psi(Z_i,Z_{i}^{\{r\}};\theta_0,\eta_0) \right]   \leq \\ u^\top  \frac{1}{n}\sum_{i=1}^n \Psi(Z_i,Z_{i}^{\{r\}};\theta_0,\eta_0).
\end{multline*}
Lemma~\ref{lemma:max} implies that $M_n = o_\mathbb{P}(n^{1/2})$ and Lemma~\ref{lemma:TCLfi} allows to upper-bound the right side of the previous inequality by $O_\mathbb{P}(n^{-1/2})$. Moreover, we have 
$
\nu + o_\mathbb{P}(1) \leq u^\top S(\theta_0,\eta_0) u,
$
where $\nu>0$ is the smallest eigenvalue of $\Sigma$ defined in Assumption \ref{variance_s}. Therefore, we have
$$
\| \lambda(\theta_0,\eta_0)\| \left(\nu + o_\mathbb{P}(1) -  o_\mathbb{P}(n^{1/2})   O_\mathbb{P}(n^{-1/2}) \right)   \leq    O_\mathbb{P}(n^{-1/2}),
$$
which implies 
$
\| \lambda(\theta_0,\eta_0)\| =    O_\mathbb{P}(n^{-1/2}).
$
Noting that 
$$
n\pi_i(\theta_0,\eta_0) = 1 -  \lambda(\theta_0,\eta_0)^\top \Psi(Z_i,Z_{i}^{\{r\}};\theta_0,\eta_0)+
\frac{\left\{ \lambda(\theta_0,\eta_0)^\top \Psi(Z_i,Z_{i}^{\{r\}};\theta_0,\eta_0)\right\}^2}{1 +  \lambda(\theta_0,\eta_0)^\top \Psi(Z_i,Z_{i}^{\{r\}};\theta_0,\eta_0)},
$$
 we have from \eqref{eq:deflambda} that
\begin{multline*}
\frac{1}{n} \sum_{i=1}^n \Psi(Z_i,Z_{i}^{\{r\}};\theta_0,\eta_0) - S(\theta_0,\eta_0) \lambda(\theta_0,\eta_0) \\ + \frac{1}{n} \sum_{i=1}^n \frac{\left\{ \lambda(\theta_0,\eta_0)^\top \Psi(Z_i,Z_{i}^{\{r\}};\theta_0,\eta_0)\right\}^2}{1 +  \lambda(\theta_0,\eta_0)^\top \Psi(Z_i,Z_{i}^{\{r\}};\theta_0,\eta_0)} \Psi(Z_i,Z_{i}^{\{r\}};\theta_0,\eta_0) = 0.
\end{multline*}
Using Lemma~\ref{lemma:max}, we deduce $\max_{1\leq i\leq n} \{1 + \lambda(\theta_0,\eta_0)^\top \Psi(Z_i,Z_{i}^{\{r\}};\theta_0,\eta_0)\}^{-1} = 1 + o_\mathbb{P}(1)$, and 
the norm of the second sum on the left side of the last display can be bounded by
$$
\frac{1}{n} \sum_{i=1}^n \frac{\|\Psi(Z_i,Z_{i}^{\{r\}};\theta_0,\eta_0)\|^3 \| \lambda(\theta_0,\eta_0)\|^2}{1 +  \lambda(\theta_0,\eta_0)^\top \Psi(Z_i,Z_{i}^{\{r\}};\theta_0,\eta_0)} = o_\mathbb{P}(n^{1/2})O_\mathbb{P}(n^{-1})O_\mathbb{P}(1)=o_\mathbb{P}(n^{-1/2}).
$$
Thus
$$
 \lambda(\theta_0,\eta_0)=S(\theta_0,\eta_0)^{-1}\frac{1}{n} \sum_{i=1}^n \Psi(Z_i,Z_{i}^{\{r\}};\theta_0,\eta_0)+ o_\mathbb{P}(n^{-1/2}).
$$
\end{proof}

\quad

\begin{proof}[Proof of Lemma \ref{lemma:A_X}]
We decompose $
A_X= A_{X,0} +A_{X,a} + A_{X,b} +A_{X,c} + A_{X,d}+ A_{X,1} + A_{X,2}\\
$
where 
$$
A_{X,0}=\frac{1}{n} \sum_{i=1}^n \varepsilon_i \left[  \nabla_{\gamma_1} l(X_i;\gamma_{0,1}) -  \frac{ \eta_{\gamma_0,X}(\Upsilon_i)}{\eta_{\gamma_0,f}(\Upsilon_i)} \right]\eta^4_{\gamma_0,f}(\Upsilon_i) ,
$$
$$
A_{X,a} = \frac{1}{n} \sum_{i=1}^n \varepsilon_i  \left[\widehat\eta_{\gamma_0,f}(\Upsilon_i)  \nabla_{\gamma_1} l(X_i;\gamma_{0,1}) -   \eta_{\gamma_0,X}(\Upsilon_i)  \right]\widehat\eta^3_{\gamma_0,f}(\Upsilon_i),
$$
$$
A_{X,b}=\frac{1}{n} \sum_{i=1}^n  \varepsilon_i \left[ \eta_{\gamma_0,X}(\Upsilon_i)  -  \widehat \eta_{\gamma_0,X}(\Upsilon_i)  \right] \widehat\eta^3_{\gamma_0,f}(\Upsilon_i),
$$
\begin{multline*}
A_{X,c}= \frac{1}{n} \sum_{i=1}^n  \left[ m_0(\Upsilon_i) \eta_{\gamma_0,f}(\Upsilon_i)- \widehat \eta_{\gamma_0,m}(\Upsilon_i)\right]  \times \\
\left[\eta_{\gamma_0,f}(\Upsilon_i)  \nabla_{\gamma_1} l(X_i;\gamma_{0,1})  - \widehat  \eta_{\gamma_0,X}(\Upsilon_i)  \right] \widehat\eta^2_{\gamma_0,f}(\Upsilon_i),
\end{multline*}
\begin{multline*}
A_{X,d}= \frac{1}{n} \sum_{i=1}^n m_0(\Upsilon_i)  \left[  \widehat\eta_{\gamma_0,f}(\Upsilon_i)-  \eta_{\gamma_0,f}(\Upsilon_i)\right]  \times \\ \left[\eta_{\gamma_0,f}(\Upsilon_i)  \nabla_{\gamma_1} l(X_i;\gamma_{0,1})  - \widehat  \eta_{\gamma_0,X}(\Upsilon_i)  \right] \widehat\eta^2_{\gamma_0,f}(\Upsilon_i),
\end{multline*}
and 
\begin{multline*}
A_{X,1}= \frac{1}{n} \sum_{i=1}^n  \left[ m_0(\Upsilon_i) \eta_{\gamma_0,f}(\Upsilon_i)- \widehat \eta_{\gamma_0,m}(\Upsilon_i)\right] \times \\  \left[\widehat\eta_{\gamma_0,f}(\Upsilon_i)  -   \eta_{\gamma_0,f}(\Upsilon_i) \right]  \nabla_{\gamma_1} l(X_i;\gamma_{0,1}) \widehat\eta^2_{\gamma_0,f}(\Upsilon_i),
\end{multline*}
\begin{multline*}
A_{X,2}=\frac{1}{n} \sum_{i=1}^n   m_0(\Upsilon_i) \left[ \widehat\eta_{\gamma_0,f}(\Upsilon_i)- \eta_{\gamma_0,f}(\Upsilon_i)\right] \times \\\left[\widehat\eta_{\gamma_0,f}(\Upsilon_i)  - \eta_{\gamma_0,f}(\Upsilon_i) \right] \nabla_{\gamma_1} l(X_i;\gamma_{0,1}) \widehat\eta^2_{\gamma_0,f}(\Upsilon_i).
\end{multline*}
After taking the norm of the sums, by triangle inequality and \eqref{eq:boundsnp}, 
$$
 \| A_{X,1}\| + \| A_{X,2}\| = \left[ O_\mathbb{P}\left( \left(\frac{\ln n}{nh} \right)^{1/2} \right) + O_\mathbb{P}\left( h^2 \right)  \right]^2 \times O_\mathbb{P}(1) = o_\mathbb{P}(n^{-1/2}).
 $$ 
The terms $A_{X,a}$ to $A_{X,d}$ require a more refined treatment. We can write 
\begin{multline}\label{replace_h}
A_{X,a} = \frac{1}{n} \sum_{i=1}^n \varepsilon_i  \left[\widehat\eta_{\gamma_0,f}(\Upsilon_i)  \nabla_{\gamma_1} l(X_i;\gamma_{0,1}) -   \eta_{\gamma_0,X}(\Upsilon_i)  \right]\widehat\eta^3_{\gamma_0,f}(\Upsilon_i)\\
=  \frac{1}{n} \sum_{i=1}^n \varepsilon_i  \left[\widehat\eta_{\gamma_0,f}(\Upsilon_i)  \nabla_{\gamma_1} l(X_i;\gamma_{0,1}) -   \eta_{\gamma_0,X}(\Upsilon_i)  \right]\eta^3_{\gamma_0,f}(\Upsilon_i) + r_{X,a},
\end{multline}
with $\|r_{X,a}\|=o_\mathbb{P}(n^{-1/2})$. The rate of the negligible reminder $r_{A,a}$ is again obtained after taking the norm of the sums, by \eqref{eq:boundsnp}. Finally, using the same arguments like for bounding $A_{W,b}$ in Lemma \ref{lemma:A_W}, we also obtain $ \| A_{X,a}\| + \| A_{X,b}\| = o_\mathbb{P}(n^{-1/2}).$ 
To bound $A_{X,c}$ and  $A_{X,d}$, first we replace $\widehat\eta^3_{\gamma_0,f}(\Upsilon_i)$ by $\eta^3_{\gamma_0,f}(\Upsilon_i)$, as in \eqref{replace_h}. Next, we use the  the same arguments like for bounding $A_{W,c}$ and $A_{W,d}$ in Lemma \ref{lemma:A_W},  and we obtain $ \| A_{X,c}\| + \| A_{X,d}\| = o_\mathbb{P}(n^{-1/2}).$  \end{proof}

\section{Additional empirical evidence}\label{supp:sec_B}

\subsection{Additional results on the real data application} \label{supp:appli}

This section presents additional results on the real data application. Table~\ref{tab:autocorrelations} presents the autocorrelations of the different variables. Figures~\ref{fig:ozone}-\ref{fig:pm10}  present the original series and the series obtained by removing the seasonality.  Figure~\ref{fig:density} and Figure~\ref{fig:m} present the estimated density of the index and the estimated function $\hat m(\cdot)$, obtained with the $Lag(2)$ setup. 

 \begin{table}
\caption{Empirical autocorrelations of the variables computed for the learning and testing sample.\label{tab:autocorrelations}}
\centering
  \begin{tabular}{ccc}
 \hline 
 Variable & learning sample & testing sample\\
 \hline 
 $o3$ & 0.469 & 0.450 \\ 
 $rhum$ & 0.410 & 0.344 \\ 
 $temp$ & 0.735 & 0.705 \\ 
 $dptp$ & 0.609 & 0.594 \\ 
 $pm$ & 0.388 & 0.427 \\ 
 \hline 
 \end{tabular}
\end{table}

\begin{figure}[ht!]
  \begin{center}
    \subfloat[Original data]{
      \includegraphics[width=0.35\textwidth]{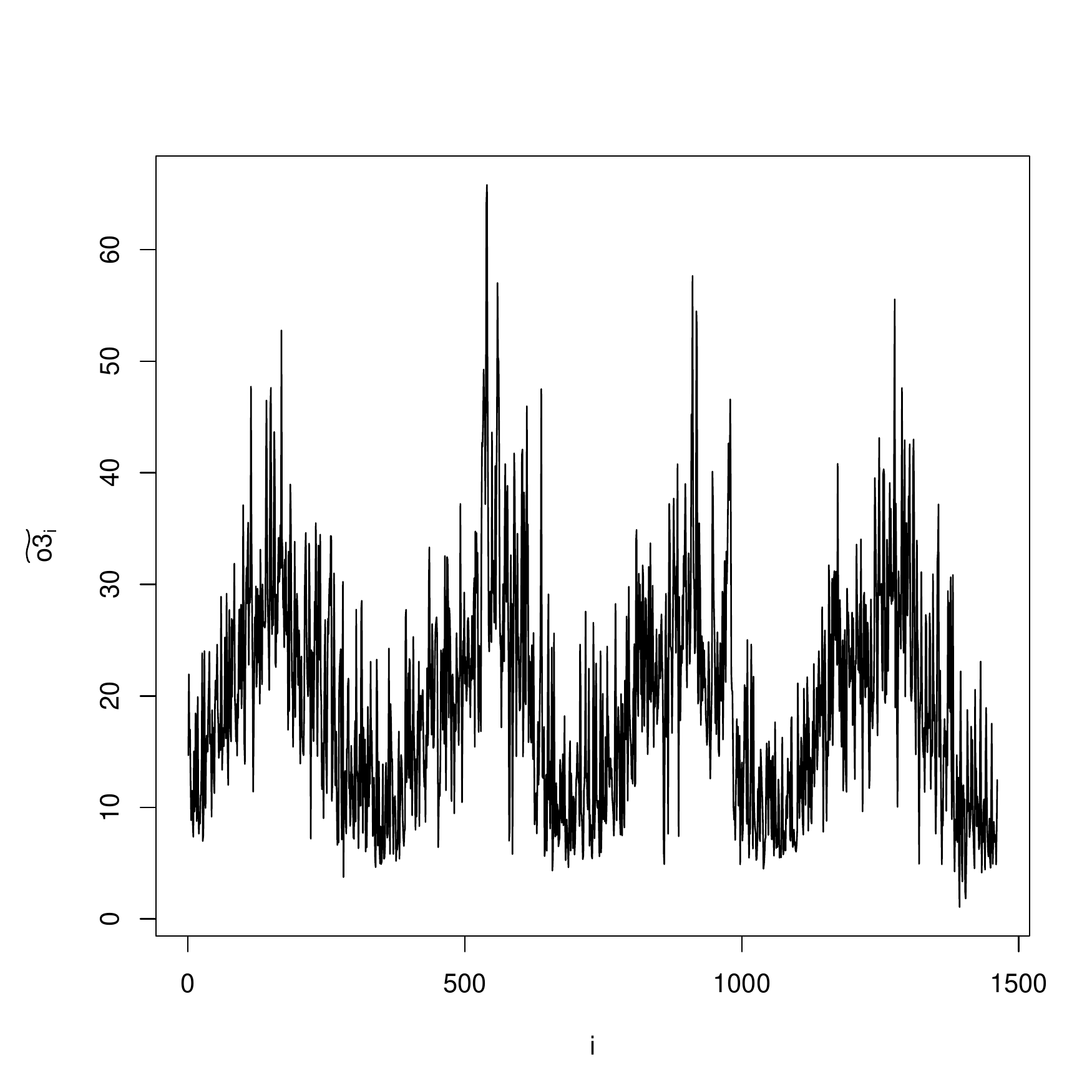}
     
                         }
    \subfloat[Deseasonalized data]{
      \includegraphics[width=0.35\textwidth]{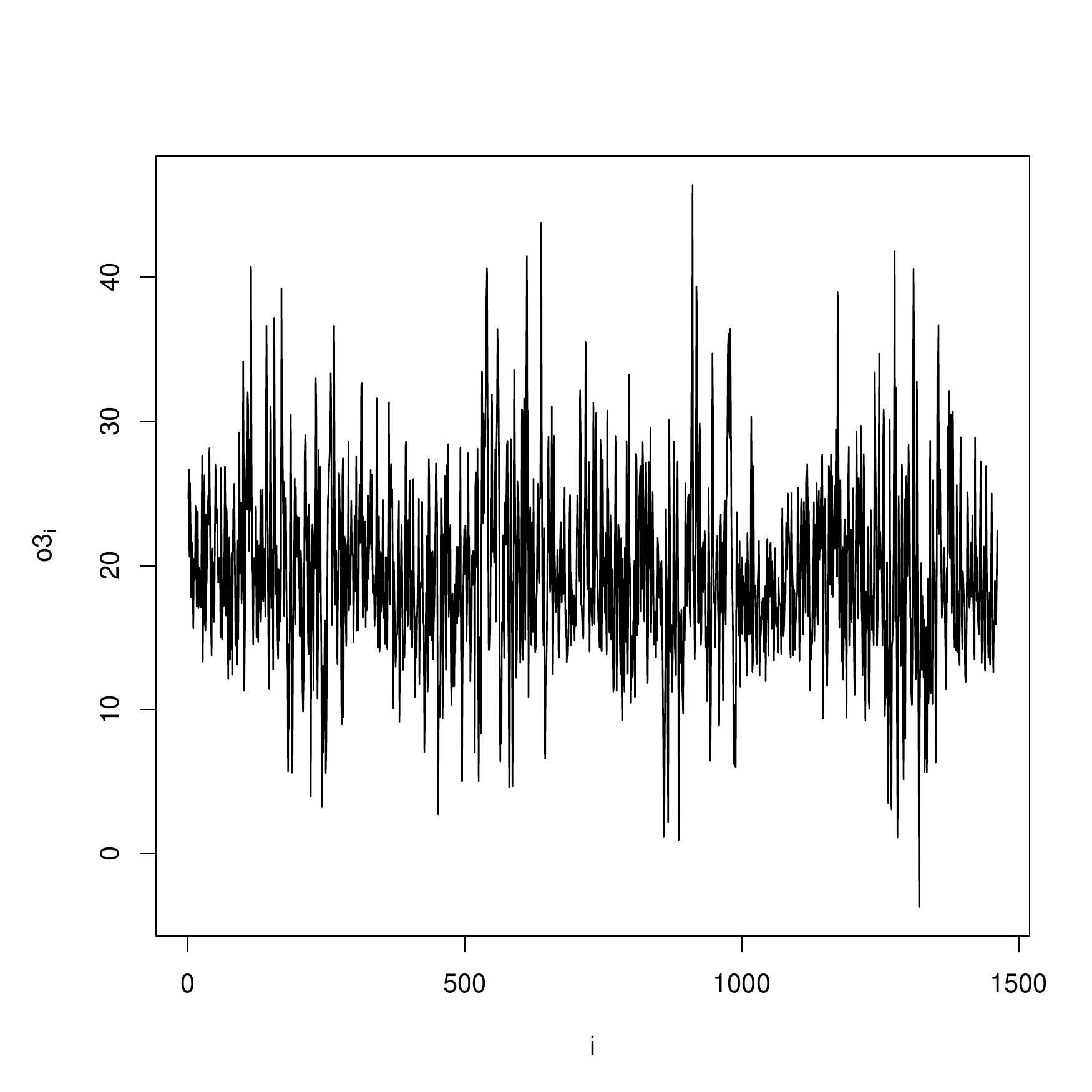}
       
                         }
    \caption{Series of the daily mean ozone level collected in Chicago in 1994-1997.}
    \label{fig:ozone}
  \end{center}
\end{figure}

\begin{figure}[ht!]
  \begin{center}
    \subfloat[Original data]{
      \includegraphics[width=0.35\textwidth]{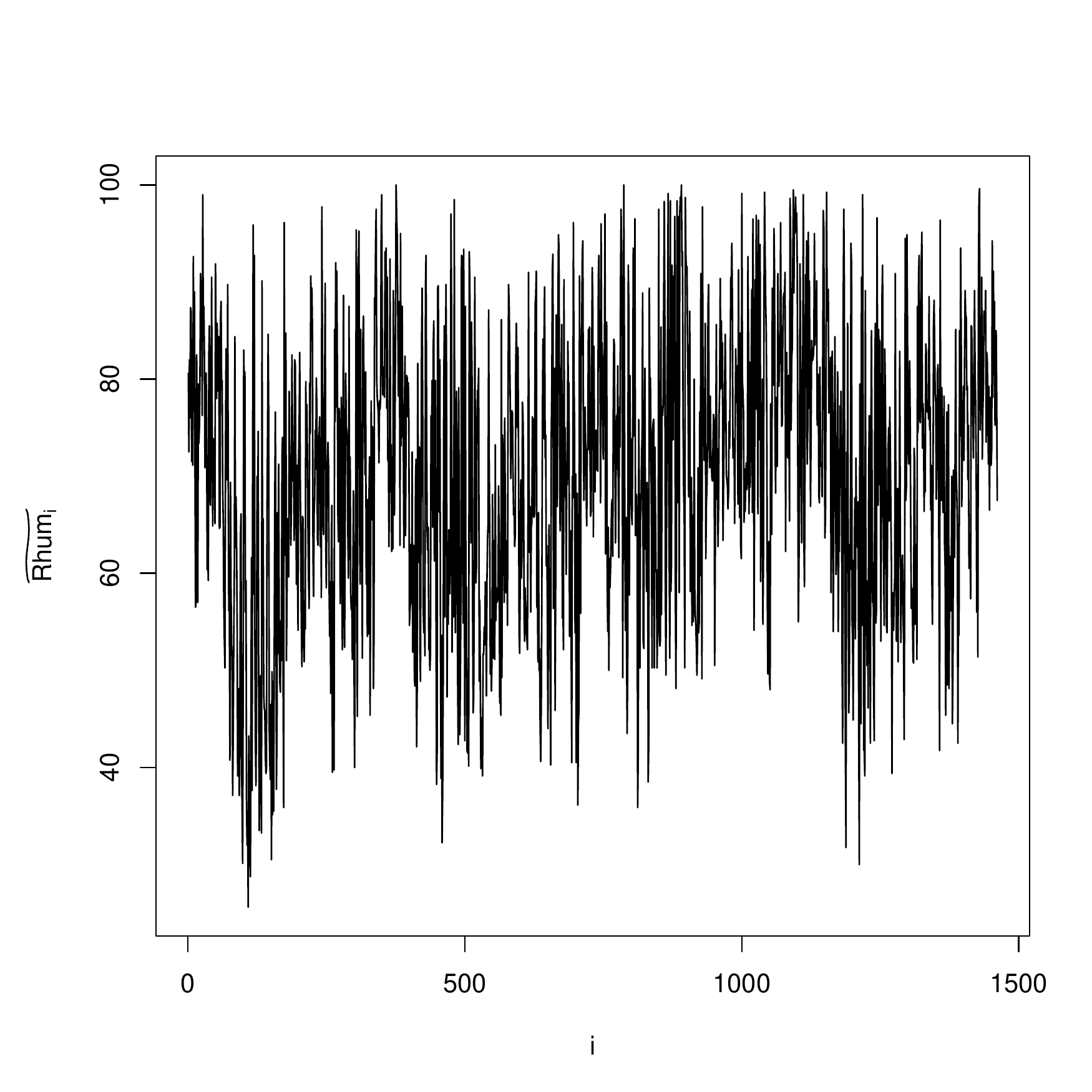}
     
                         }
    \subfloat[Deseasonalized data]{
      \includegraphics[width=0.35\textwidth]{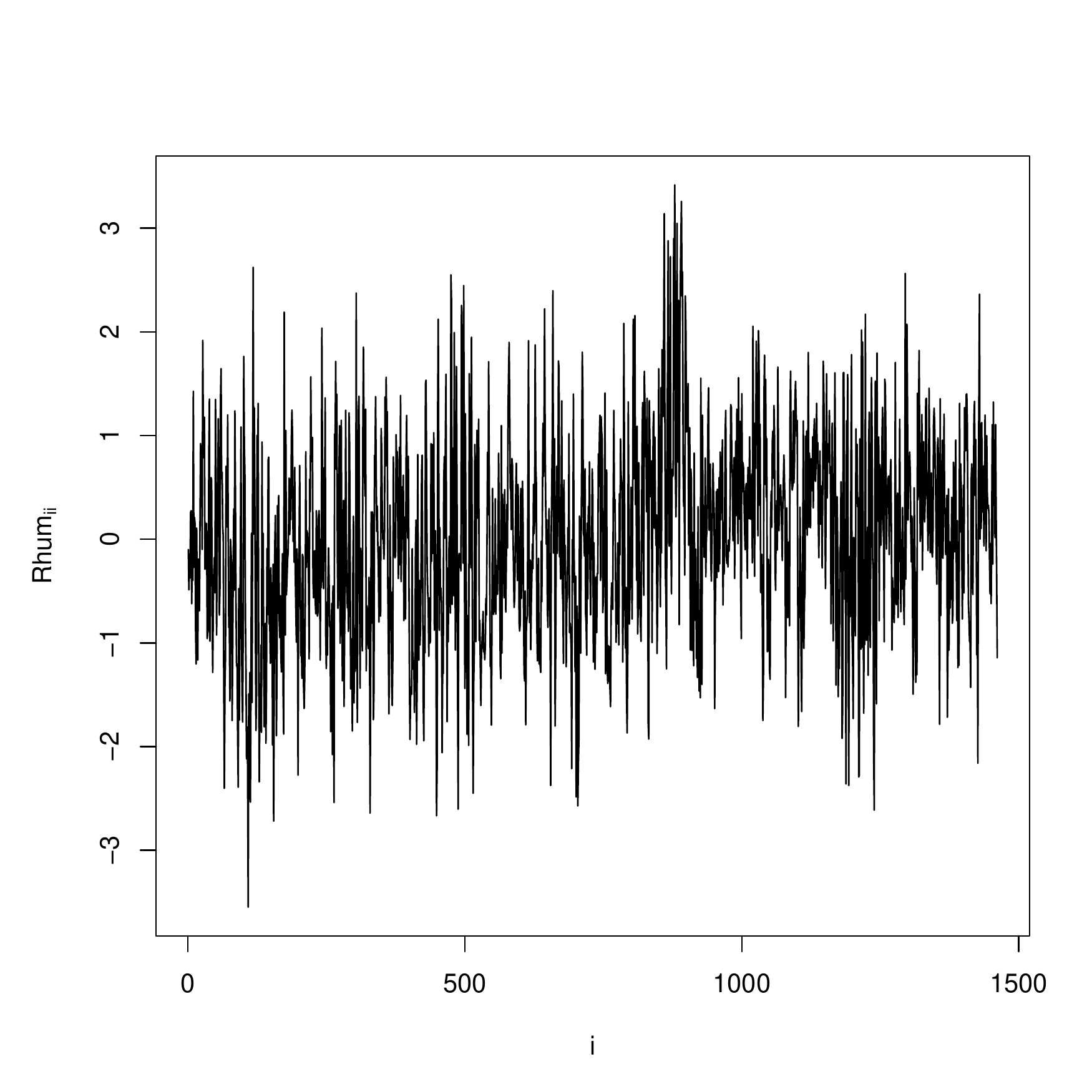}
       
                         }
    \caption{Series of the daily relative humidity level collected in Chicago in 1994-1997.}
    \label{fig:rhum}
  \end{center}
\end{figure}

 \begin{figure}[ht!]
  \begin{center}
    \subfloat[Original data]{
      \includegraphics[width=0.35\textwidth]{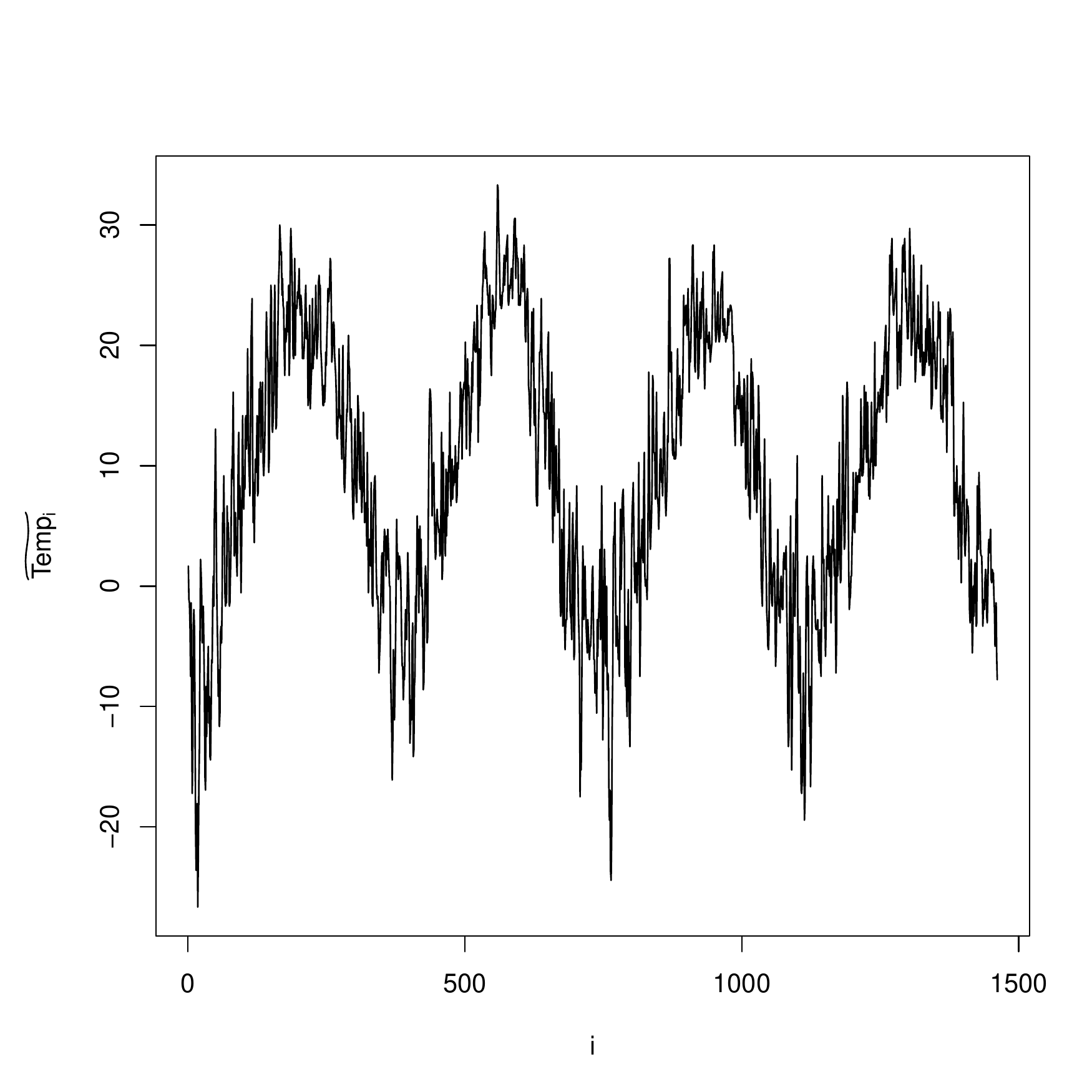}
     
                         }
    \subfloat[Deseasonalized data]{
      \includegraphics[width=0.35\textwidth]{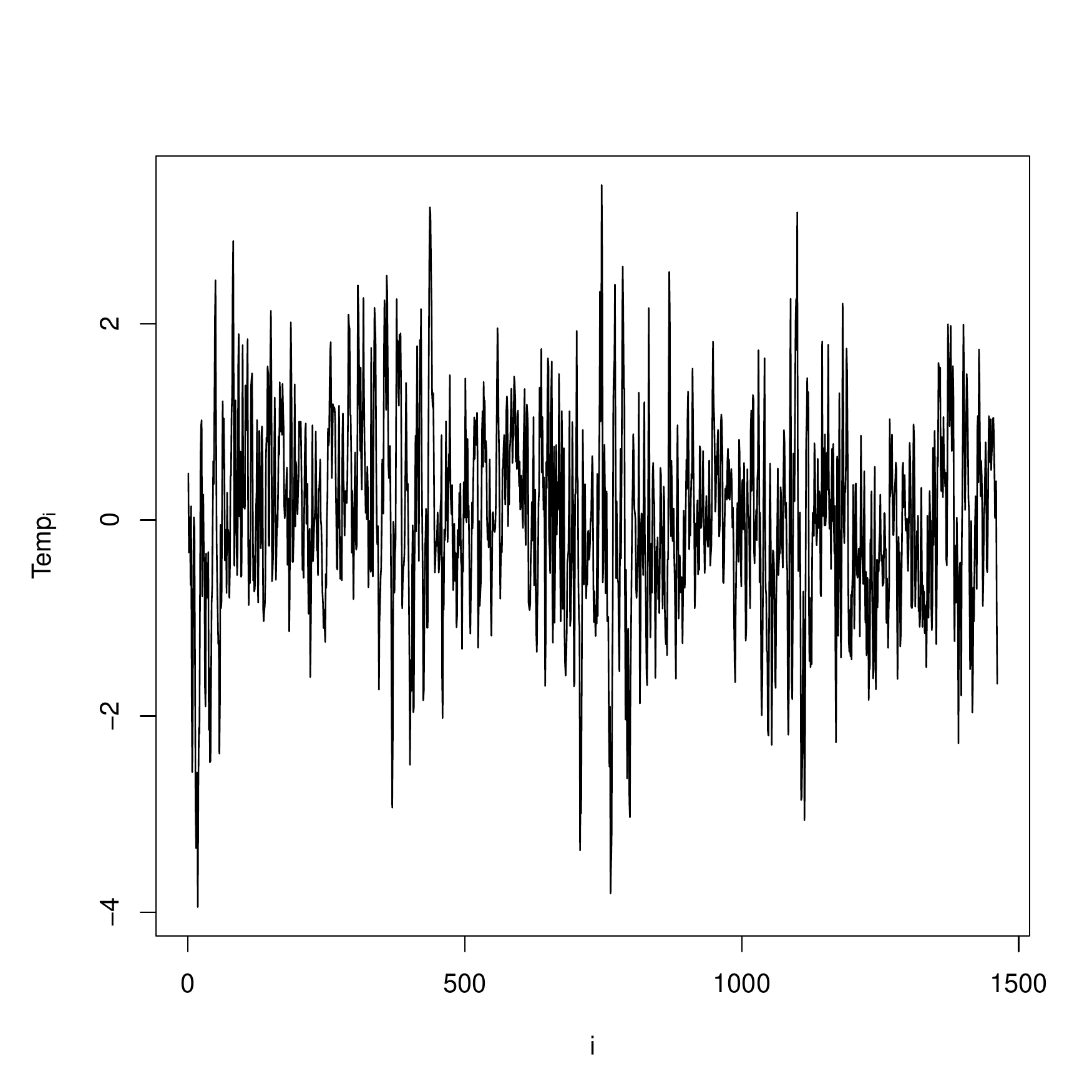}
       
                         }
    \caption{Series of the daily mean temperature collected in Chicago in 1994-1997.}
    \label{fig:temp}
  \end{center}
\end{figure}

 \begin{figure}[ht!]
  \begin{center}
    \subfloat[Original data]{
      \includegraphics[width=0.35\textwidth]{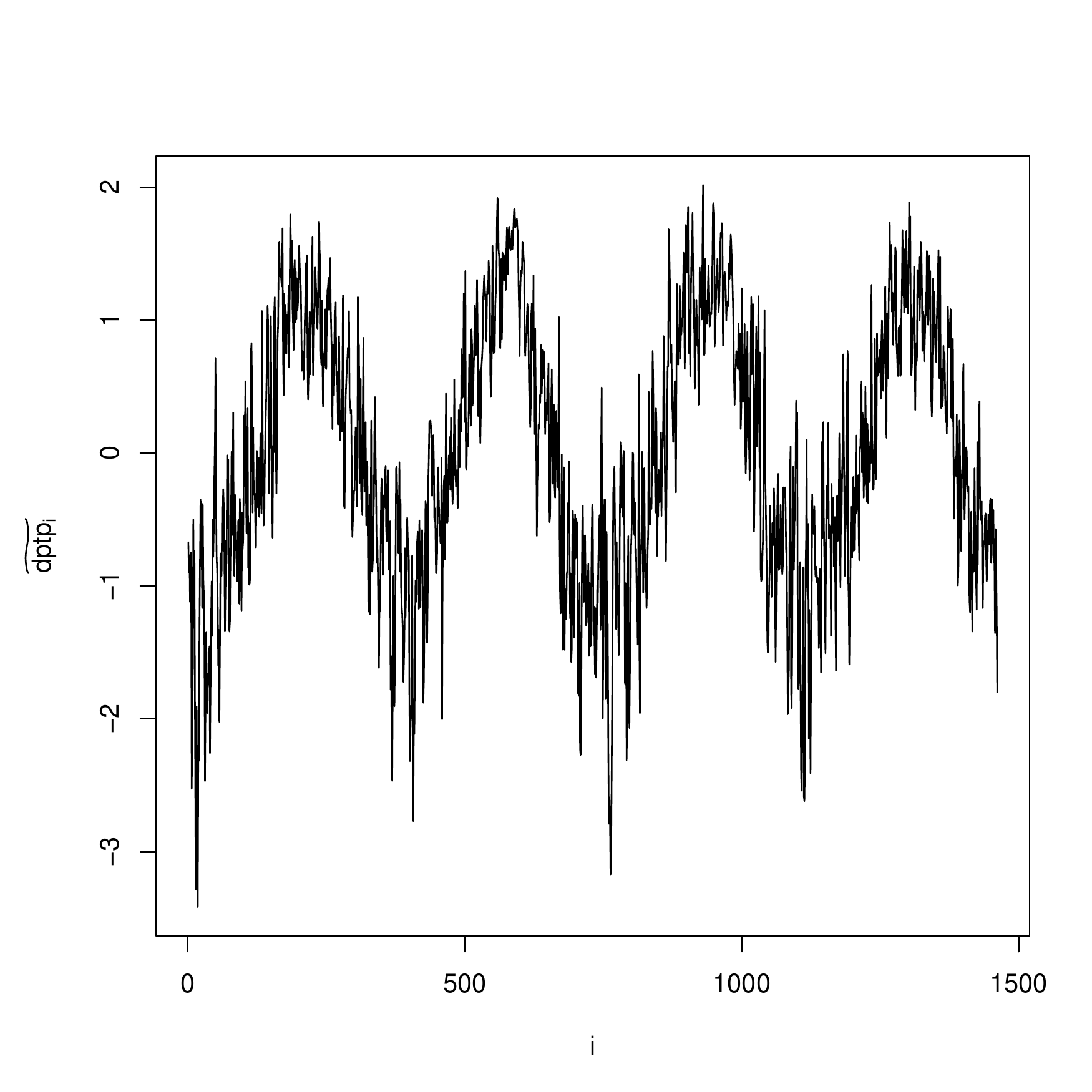}
     
                         }
    \subfloat[Deseasonalized data]{
      \includegraphics[width=0.35\textwidth]{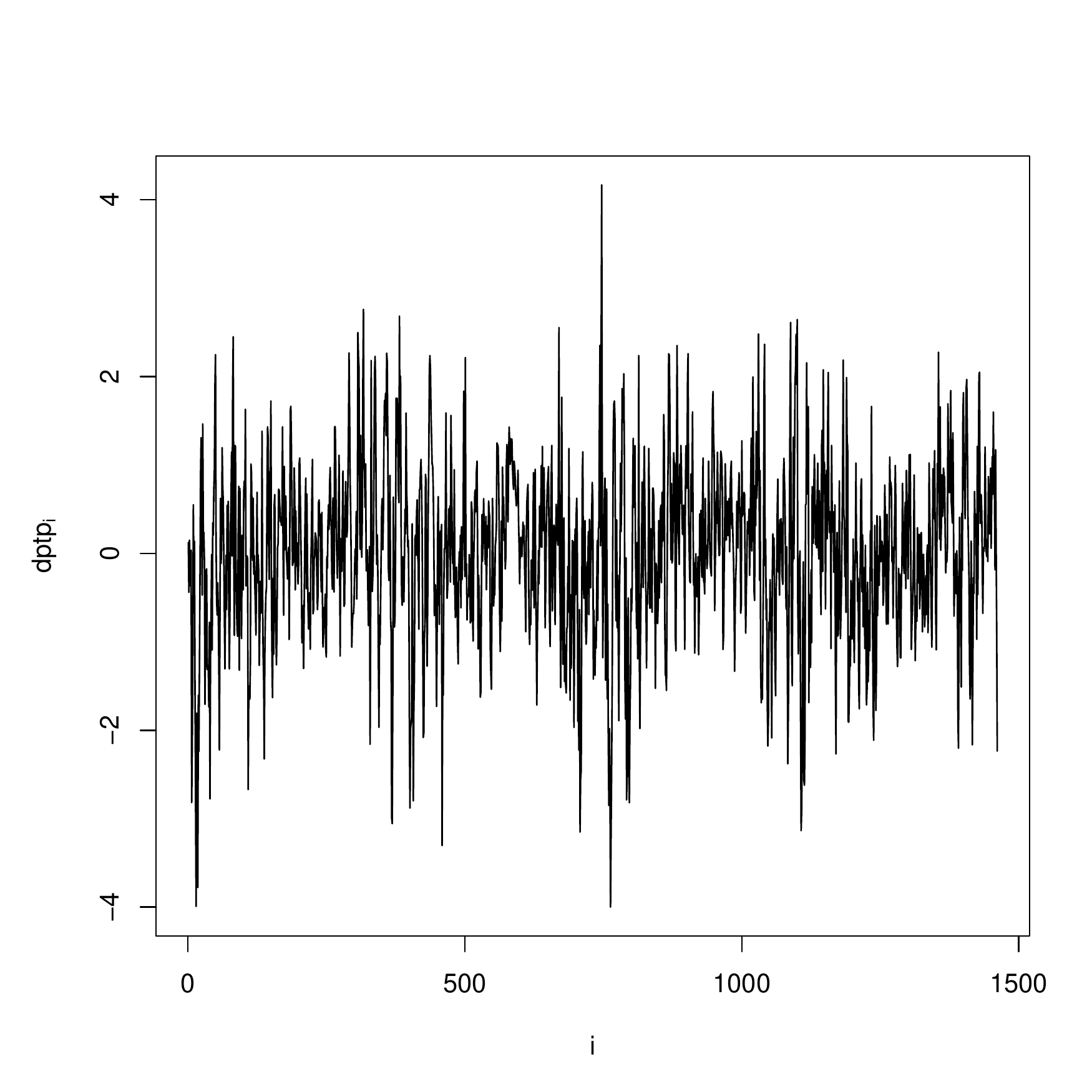}
       
                         }
    \caption{Series of the daily dew point temperature collected in Chicago in 1994-1997.}
    \label{fig:dptp}
  \end{center}
\end{figure}

 \begin{figure}[ht!]
  \begin{center}
    \subfloat[Original data]{
      \includegraphics[width=0.35\textwidth]{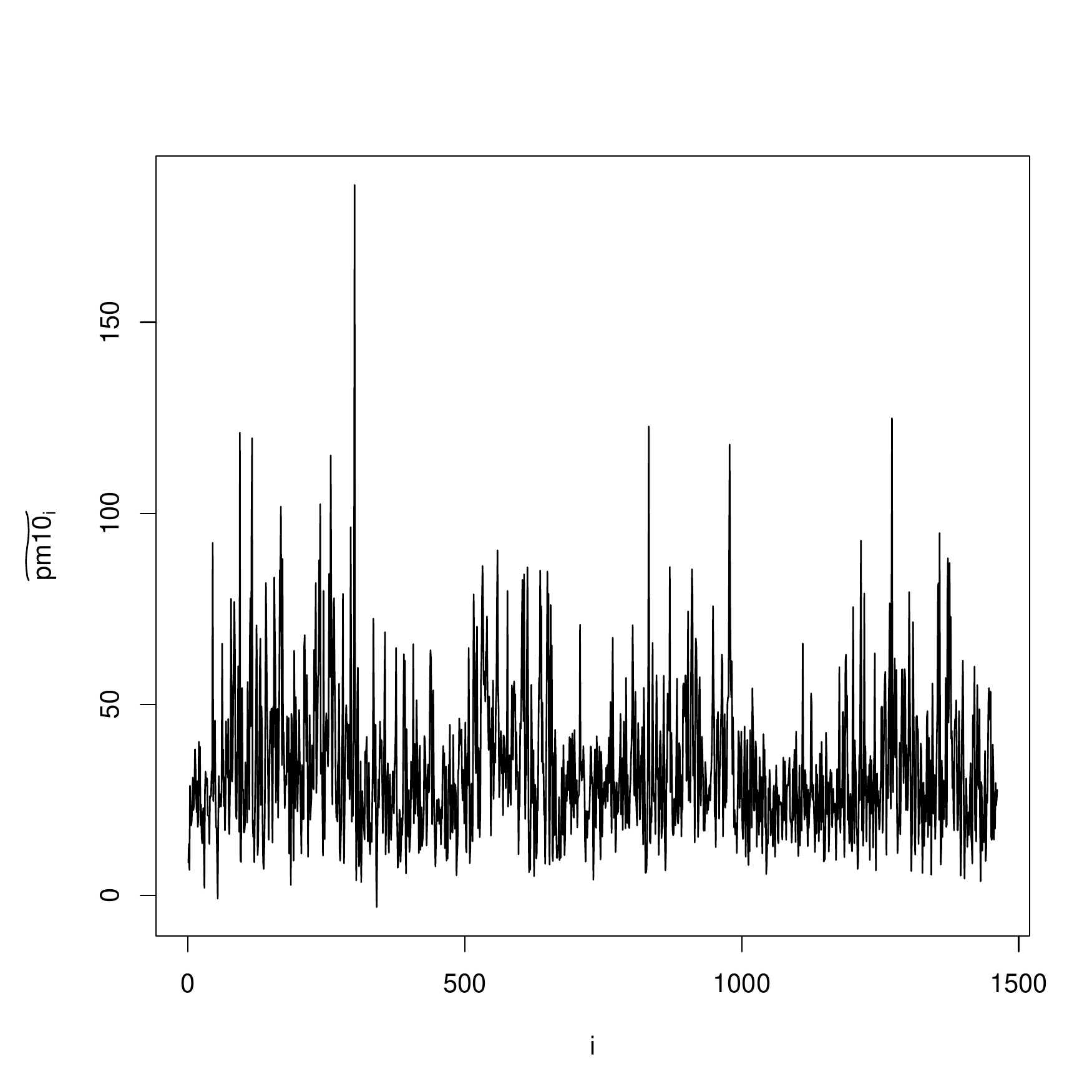}
     
                         }
    \subfloat[Deseasonalized data]{
      \includegraphics[width=0.35\textwidth]{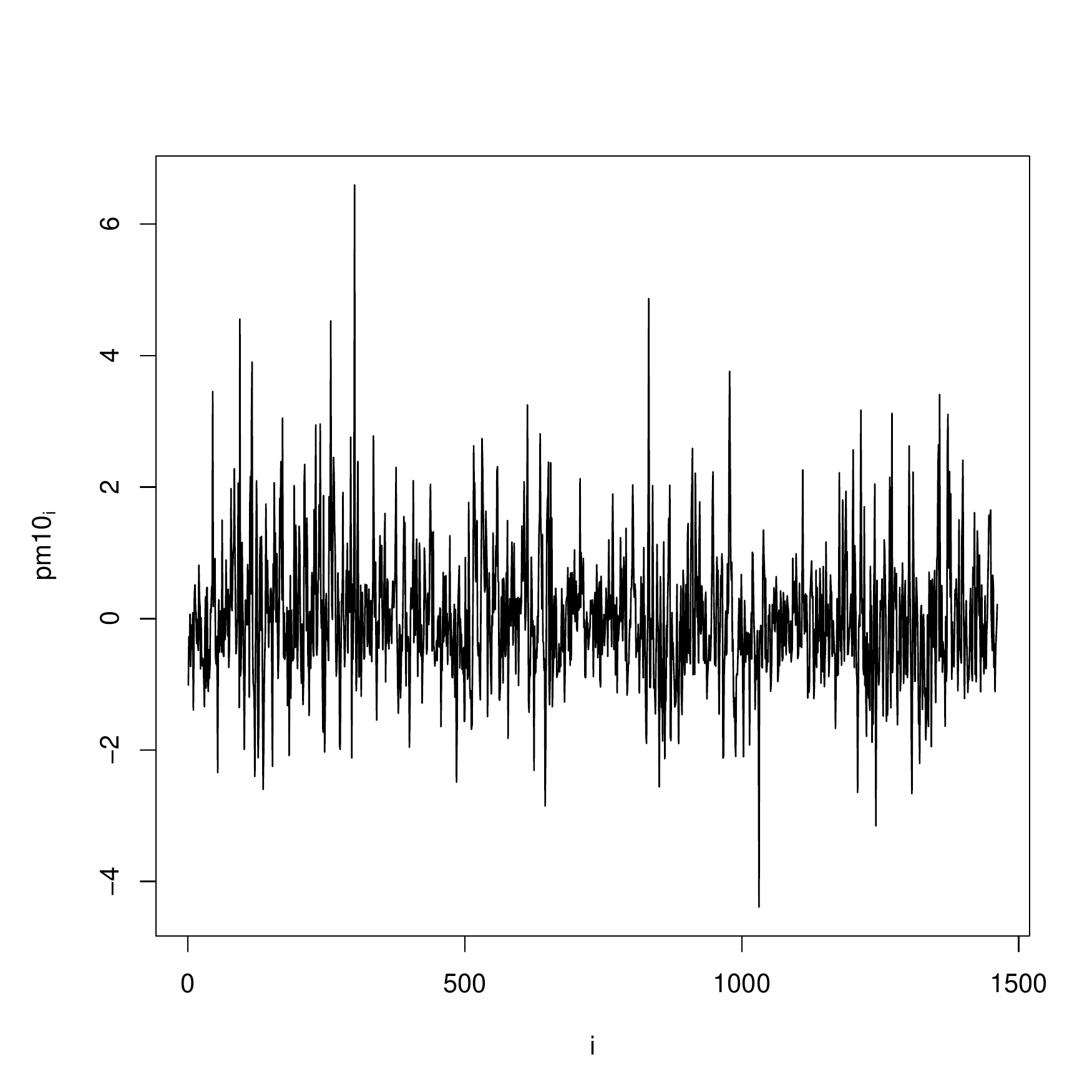}
       
                         }
    \caption{Series of the daily PM10-level collected in Chicago in 1994-1997.}
    \label{fig:pm10}
  \end{center}
\end{figure}

\begin{figure}[ht!]
\centering \includegraphics[scale=0.35]{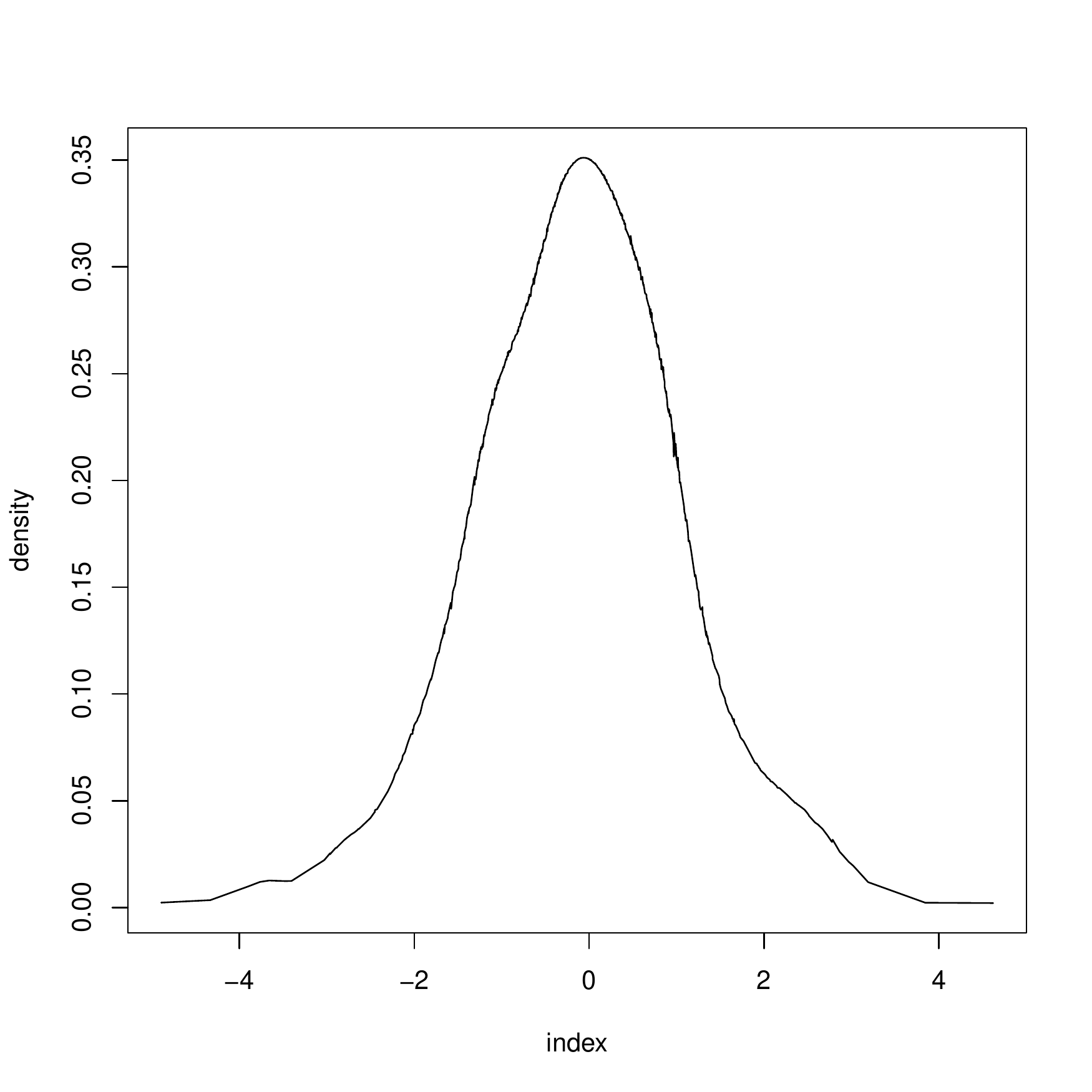}
\caption{Density of the index obtained on the testing sample.} \label{fig:density}
\end{figure} 
 
\begin{figure}[ht!]
\centering \includegraphics[scale=0.35]{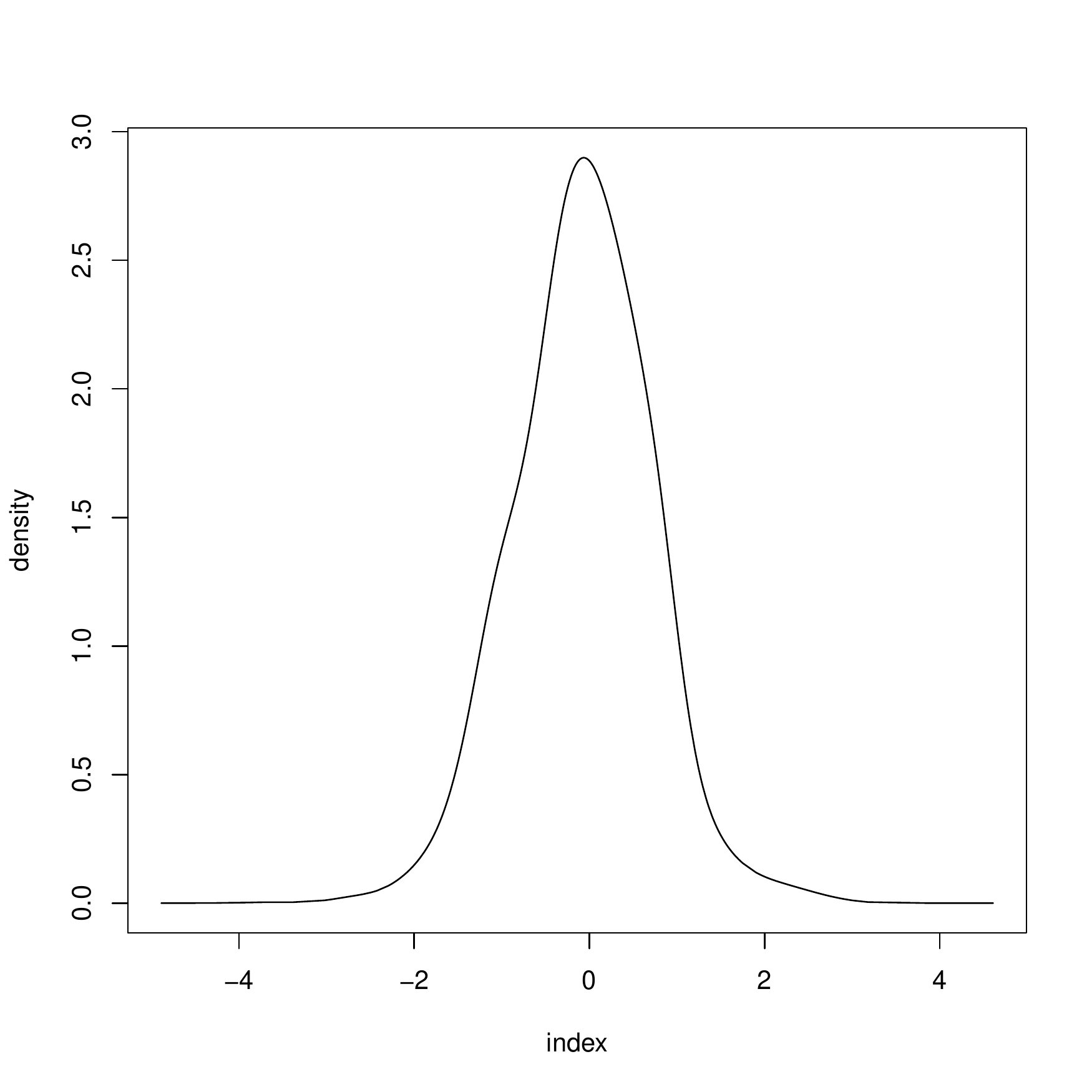}
\caption{Drawing of the estimator $\hat m (\cdot)$.} \label{fig:m}
\end{figure}

\color{black} 
\subsection{Mixing property for  the simulation setting in Section \ref{sec:num}}
We recall that we have generated data from model \eqref{eq:modelbase}-\eqref{eq:modelbase_c} with $\varepsilon_i=\sigma(V_i,Z_i^{\{r\}};\beta)\zeta_i$ and $$\sigma^2(V_i,Z_i^{\{r\}};\beta)=\beta_1 + \beta_2 Y_{i-1}^2,$$
where the $\zeta_i$ are independently drawn from a distribution such that $\mathbb{E}(\zeta_i)=0$ and $\text{Var}(\zeta_i)=1$. That means, we allow for conditional heteroscedasticity in the mean regression error term.     The covariates $X_i=(Y_{i-1},Y_{i-2})^\top$ are two lagged values of the target variable $Y_i$ and the covariates $W_i=(W_{i1},W_{i2},W_{i3})^\top$ are generated from a multivariate Gaussian distribution with  mean  $\rho W_{i-1}$  and covariance $S$ given by $\textrm{cov}(W_{ik},W_{i\ell})=0.5^{|k-\ell|}$. Thus, $$W_i = \rho W_{i-1} + \zeta_i^{*},$$ where the $\zeta_i^{*}$'s are i.i.d multivariate centered Gaussian $\mathcal{N}(0,S)$. We set 
\begin{equation*}
\ell(X_i;\gamma_1)=\gamma_{11}Y_{i-1} + \gamma_{12}Y_{i-2} \quad \text{ and }  \quad m(u)= \frac{3}{4}\sin^2(u\pi) ,
\end{equation*}
with $\gamma_1=(0.1,0)^\top$, $\gamma_2=(1,1,1)^\top$, $\rho= 1/4$ and $\beta=(0.9,0.1)^\top$.
We want to show that the process $(Z_i)_{i\in\mathbb Z}$, where 
$$
Z_i=(X_i^\top,W_i^\top,\varepsilon_i)^\top = (Y_{i-1}, Y_{i-2},W_i^\top,\varepsilon_i)^\top \in\mathbb{R} \times\mathbb{R}\times\mathbb{R}^{3}\times\mathbb{R}, 
$$
is stationnary and strongly mixing. For this purpose, we will show that the process $(\mathcal{Z}_i)_i$ defined by 
$$
\mathcal{Z}_{i} 
= ( Y_{i-1},Y_{i-2}, Y_{i-3}, W_{i-1}^\top, W_i^\top)^\top \in \mathbb{R} \times \mathbb{R}\times\mathbb{R} \times \mathbb{R}^{3}\times\mathbb{R}^{3},
$$
is geometrically ergodic thus strongly mixing with exponentially decaying mixing coefficients.  Since $Z_i$ is a measurable function of a subvector of $\mathcal{Z}_{i+1}$,  this will imply the result for $Z_i$.
 By our model, 
$\mathcal{Z}_{i}$ can be rewritten as follows
\begin{equation}\label{eqEZ}
\mathcal{Z}_{i}= \begin{pmatrix}
Y_{i-1}\\
Y_{i-2}\\
Y_{i-3}\\
W_{i-1}\\
W_{i}
\end{pmatrix} = F \left( \mathcal{Z}_{i-1}, \mathcal{Z}_{i-2} \right) +  H \left( \mathcal{Z}_{i-1}\right) \upsilon_i, 
\end{equation}
where 
$
\upsilon_i = 
(\zeta_{i-1}, \zeta_{i-2}, \zeta_{i-3},\zeta_{i-1}^{*\top}, \zeta_{i}^{*\top})^\top,
$
and
$F$ is a function from   $\mathbb{R}^9 \times \mathbb{R}^9$ to $\mathbb{R}^9$   \color{black}   given by
$$
F \left( \mathcal{Z}_{i-1}, \mathcal{Z}_{i-2} \right) =\begin{pmatrix}
\gamma_{11} Y_{i-2} + \gamma_{12} Y_{i-3} + m(W_{i-1}^\top \gamma_2 )\\
\gamma_{11} Y_{i-3} + \gamma_{12} Y_{i-4} +m(W_{i-2}^\top \gamma_2 )\\
\gamma_{11} Y_{i-4} + \gamma_{12} Y_{i-5} + m(W_{i-3}^\top \gamma_2 )\\
\rho W_{i-2} \\
\rho W_{i-1}
\end{pmatrix} ,$$
and 
$$
 H(\mathcal{Z}_{i-1})= diag \left(\sqrt{\beta_1 + \beta_2 Y_{i-2}^2}, \sqrt{\beta_1 + \beta_2 Y_{i-3}^2}, \sqrt{\beta_1 + \beta_2 Y_{i-4}^2}, 1, 1, 1, 1, 1, 1\right).$$
We apply the   Theorem 1 from \citet{LuJiang}   to show that there exists a   stationary solution of the equation \eqref{eqEZ} which is geometrically ergodic, and   thus $\alpha$-mixing with exponentially decaying coefficients.  For this purpose, we show that conditions (B2) and (B4) in \citet{LuJiang} are satisfied.  
In what follows, for any matrix $A$,  $\|A\|$ denotes the sum of the absolute value of all its coefficients.
 Since the function $m(\cdot)$ is non negative and bounded by 3/4, we have 
\begin{multline*}
\|F\left( \mathcal{Z}_{i-1}, \mathcal{Z}_{i-2}\right)\| \leq |\gamma_{11}| \sum_{k=1}^3 |\mathcal{Z}_{i-1,k}| + |\gamma_{12}| \sum_{k=1}^3|\mathcal{Z}_{i-2,k}| + |\rho| \sum_{k=4}^9 |\mathcal{Z}_{i-1,k}|+ 3 \times 3/4 \\
= \sum_{k=1}^9 \gamma_k^{*} |\mathcal{Z}_{i-1,k}| +  \sum_{k=1}^9 \gamma_k^{**} |\mathcal{Z}_{i-1,k}| + o\left(\left\| (\mathcal{Z}_{i-1}, \mathcal{Z}_{i-2})\right\| \right),
\end{multline*}
as $\left\| (\mathcal{Z}_{i-1}, \mathcal{Z}_{i-2})\right\|$ tends to infinity, where $\mathcal{Z}_{i-1,k}$ and $\mathcal{Z}_{i-2,k}$ stand for the components of $\mathcal{Z}_{i-1}$ and $\mathcal{Z}_{i-2}$ respectively, and
$$
\gamma_k^{*} = \left\lbrace \begin{array}{ll}
|\gamma_{11}| & \text{ if } k=1,\ldots, 3 \\
|\rho| & \text{ if } k=4, \ldots, 9,
\end{array}\right. \quad \text{ and } \quad 
\gamma_k^{**} = \left\lbrace \begin{array}{ll}
|\gamma_{12}| & \text{ if } k=1,\ldots, 3 \\
0 & \text{ if } k=4, \ldots, 9,
\end{array} \right.
$$
Moreover, 
$$\|H(\mathcal{Z}_{i-1})\| = \sum_{k=1}^9 \beta_k^{*} |\mathcal{Z}_{i-1,k}| + o(\|\mathcal{Z}_{i-1}\|),\qquad \text{ as } \; \left\| \mathcal{Z}_{i-1}\right\|\rightarrow \infty ,$$
where $\beta_k^{*} =|\beta_2|^{1/2}$ for $k=1, \ldots, 3$ and zero otherwise.
Thus, from  condition (B4) in \citet{LuJiang},  a sufficient condition to guarantee the geometric ergodicity is
\begin{equation}\label{cd_ergo}
\max_{1\leq k\leq 9} \left\lbrace \gamma_k^{*} + \gamma_k^{**} +  \beta_k^{*} \mathbb{E}[|\upsilon_{i,k}|]\right\rbrace < 1,
\end{equation}
where the $\upsilon_{i,k}$ stand for the components of $\upsilon_{i}$. 

With our simulation design, for $k=4, \ldots, 9$, 
$$\gamma_k^{*} + \gamma_k^{**} +  \beta_k^{*} \mathbb{E}[|\upsilon_{i,k}|] = |\rho| + 0 + 0 = \frac{1}{4} <1.$$
Hence, condition \eqref{cd_ergo} rewrites as
$$ 
\max_{1\leq k\leq 3} \left\lbrace |\gamma_{11}| + |\gamma_{12}| + |\beta_2|^{1/2} \mathbb{E}[|\upsilon_{i,k}|]\right\rbrace < 1
$$ 
Besides, noting that for each $1\leq k \leq 3$, $\mathbb{E}^2[|\upsilon_{i,k}|] \leq \mathbb{E}[\upsilon_{i,k}^2]= \mathbb{E}[\zeta_{i}^2]= 1$,
 we obtain that for any $k=1,\ldots,3$, $$|\gamma_{11}| + |\gamma_{12}| + |\beta_2|^{1/2}\mathbb{E}[|\upsilon_{i,k}|]  \leq \frac{1}{10} + 0 + \frac{1}{\sqrt{10}}.$$ Therefore, condition \eqref{cd_ergo} is fulfilled.

\color{black}

\subsection{A conditional variance semiparametric model}

In this section we extend the scope of our models. As mentioned in section \ref{sec:concl},  $Y_i$ could be observed with some error. For illustration, consider a time series $(R_i)$,  solution of the AR(1)  equation 
\begin{equation}
R_i = \rho_0 R_{i-1} + u_i,\qquad i\in\mathbb Z,\label{eq:newmodel}
\end{equation}  
Consider the PLSIM
\begin{equation}\label{apx_ar1}
Y_i = u_i^2 = \mu(V_i;\gamma,m) + \varepsilon_i \;\textrm{ with }\; \mu(V_i;\gamma,m)= l(X_i; \gamma_1) + m(W_i^\top \gamma_2)>0,
\end{equation}
with $\mathbb E[\varepsilon_i\mid V_i,\mathcal F_{i-1}]=0$ a.s. and $(V_i^\top,\varepsilon_i)^\top\in\mathbb{R}^{d_X+d_W}\times\mathbb{R}$ a strictly stationary and strongly mixing sequence. Moreover, $\mathbb E[u_i\mid V_i,\mathcal F_{i-1}]=0$ a.s.
A more common way to write model \eqref{apx_ar1} is
\begin{equation}\label{apx_ar2}
u_i = \sqrt{\mu_i}  \; \nu _i, 
\end{equation}
with $(\nu_i)$ a strong white noise process with unit variance, and  $\mu_i$ a positive function of the past values of $u_i$. The ARCH model is a typical example. When covariates are also allowed to enter the expression of $\mu_i$, one obtains a particular example of the so called GARCH-X models. See Han and Kristensen (2014). Here we allow for a flexible semiparametric form $\mu_i = \mu(V_i;\gamma,m)$ and our additive error term is 
$
\varepsilon_i = \mu(V_i;\gamma,m) (\nu_i^2-1).
$

Although the conditional variance of $Y_i$ is not constant and the $Y_i$ are not directly observed, the PLSIM is still applicable, as we will briefly justify in the following. Instead of $Y_i$, one has 
$$
\widetilde Y_i = \left( R_i - \widetilde \rho R_{i-1} \right)^2 = Y_i + R_{i-1}^2 (\widetilde \rho - \rho_0)^2 - 2u_iR_{i-1}(\widetilde \rho - \rho_0).
$$
Here, $\widetilde \rho$ is the least-squares estimator of $\rho_0$. Let $\widehat{\widetilde \eta}_\gamma$ be the vector of nonparametric estimators defined in 
\eqref{eq:eta_chap} obtained with $\widetilde Y_i$ instead of $Y_i$. Only the components $\widehat{\widetilde \eta}_{\gamma,m}$ and $\widehat{\widetilde \eta}_{\gamma,m^\prime}$ are affected by the fact that the $Y_i$'s are not available. Given the expression of $\widetilde Y_i - Y_i$ we deduce 
$$
\widehat{\widetilde \eta}_{\gamma,m}(t) = \widehat{ \eta}_{\gamma,m} (t)- \frac{2(\widetilde \rho - \rho_0)}{nh} \sum_{i=1}^n u_i R_{i-1}K\left(\frac{W_i^\top \gamma_2 - t}{h}\right) + (\widetilde \rho - \rho_0)^2 O_{\mathbb P }(1),
$$
uniformly with respect to $t$, and a similar representation holds true  for  $\widehat{\widetilde \eta}_{\gamma,m^\prime}$. 
 Using the fact that $\widetilde \rho - \rho_0= O_{\mathbb P }(n^{-1/2})$ and $\mathbb E[u_i\mid V_i,\mathcal F_{i-1}]=0$ a.s., the arguments in the proof of Theorem~\ref{thm:chi2etaunknown} remain valid and the limit of the ELR is still a chi-square distribution. 
The technical  details are quite straightforward and thus are omitted. Instead, we propose an illustration using simulation data. 

We generated data from model \eqref{eq:newmodel}-\eqref{apx_ar2}.  First, we generate the covariates $W_i=(W_{i1},W_{i2},W_{i3})^\top$ from a multivariate Gaussian distribution with  mean  $W_{i-1}/4$  and covariance given by $\textrm{cov}(W_{ik},W_{i\ell})=0.5^{|k-\ell|}$.  Then, we set $X_i=(U_{i-1}^2,U_{i-2}^2)^\top$ and we generate the $U_i^2$ from \eqref{apx_ar1}- \eqref{apx_ar2} with 
\begin{equation}\label{simu_design}
\ell(X_i;\gamma_1)=\gamma_{11}U_{i-1}^2 + \gamma_{12}U_{i-2}^2 \quad \text{ and }  \quad m(u)= \frac{1}{4}+\frac{3}{4}\sin^2(u\pi) ,
\end{equation}
with $\gamma_1=(0.1,0)^\top$, $\gamma_2=(1,1,1)^\top$. Finally, the variables $R_i$ can be computed from \eqref{eq:newmodel} with $\rho_0=0.1$. Based on the observed $R_i$'s, we can compute $\tilde Y_i$ and use the proposed EL procedure for testing the order of the lagged values to consider in the parametric
function.  Hypothesis testing is based on  Wilks' Theorem   in Section~\ref{subsec:wilks} (results related to this method are named \emph{estim}), along with the unfeasible EL approach  that previously learns the nonparametric estimators on a sample of size $10^4$ (this case mimic the situation where $m$,  $m'$ and the density of the index are known; results related to this method are named \emph{ref}).   The nonparametric elements are estimated by the Nadaraya-Watson method with Gaussian kernel and bandwidth $h=C^{-1} n^{-1/5}$ where $C$ is the standard deviation of the index. Thus, we consider the tests introduced in Section~\ref{sec:num}. The empirical probabilities of rejection are presented in Table~\ref{resARCH} for a nominal level of $0.05$.

\begin{table}[ht!]
\caption{Empirical probabilities of rejection obtained from 5000 replications using the PLSIM for testing the order for the lagged values of $Y_i$ in the parametric part $\ell(\cdot; \gamma_1)$.\label{resARCH}}

\centering
\begin{tabular}{rrrrrrrrr}
  \hline
Test & \multicolumn{2}{c}{$n=1000$} & \multicolumn{2}{c}{$n=2000$} & \multicolumn{2}{c}{$n=4000$} & \multicolumn{2}{c}{$n=8000$}\\ 
& ref. & estim. & ref. & estim. & ref. & estim. & ref. & estim. \\
  \hline
Lag(1) & 0.117 & 0.120 & 0.089 & 0.097 & 0.081 & 0.078 & 0.064 & 0.063 \\ 
Lag(0) & 0.422 & 0.402 & 0.678 & 0.654 & 0.942 & 0.932 & 1.000 & 0.999 \\ 
Lag(2) & 0.538 & 0.531 & 0.695 & 0.687 & 0.887 & 0.874 & 1.000 & 1.000 \\ 
   \hline
\end{tabular}
\end{table}

\qquad

\noindent \textbf{References}

\quad


\hangindent = 0.5cm
\hspace{-0.5cm}[1] Han, H. and D. Kristensen (2014). Asymptotic for the QMLE in
GARCH-X  Models With Stationary and Nonstationary Covariates. \emph{Journal of Business \& Economic Statistics 32}, 416--429.

\end{document}